\newif\ifarxiv
\newcommand{\arxiv}[2]{\ifarxiv #1 \else #2 \fi}
\newcommand{\faBan}{\ensuremath{\oslash}}
\patchcmd{\ALG@doentity}{\item[]\nointerlineskip}{}{}{}
\renewcommand\fnum@algorithm{\fname@algorithm~\thealgorithm.}
\newenvironment{breakablealgorithm}
  {% \begin{breakablealgorithm}
 	\needspace{3\baselineskip} % To prevent breaking somewhere between the caption lines
   \begin{center}
     \refstepcounter{algorithm}% New algorithm
     \hrule height.8pt depth0pt \kern2pt% \@fs@pre for \@fs@ruled
     \renewcommand{\caption}[2][\relax]{% Make a new \caption
       {\raggedright\textbf{\ALG@name~\thealgorithm.} ##2\par}%
       \ifx\relax##1\relax % #1 is \relax
         \addcontentsline{loa}{algorithm}{\protect\numberline{\thealgorithm}##2}%
       \else % #1 is not \relax
         \addcontentsline{loa}{algorithm}{\protect\numberline{\thealgorithm}##1}%
       \fi
       \kern2pt\hrule\kern2pt
     }
		\small
  }{% \end{breakablealgorithm}
     \kern2pt\hrule\relax% \@fs@post for \@fs@ruled
   \end{center}
  }
\renewenvironment{algorithm}{\begin{breakablealgorithm}}{\end{breakablealgorithm}\ignorespacesafterend}
\newcommand{\ket}[1]{\vert{#1}\rangle}
\newcommand{\bra}[1]{\langle{#1}\vert}
\newcommand{\id}{\mathbb{I}}    
\newcommand{\Encode}{\mathsf{Encode}}
\newcommand{\Decode}{\mathsf{Decode}}
\newcommand{\Sign}{\mathsf{Sign}}
\newcommand{\Tag}{\mathsf{Tag}}
\newcommand{\Enc}{\mathsf{Enc}}
\newcommand{\Classical}{\mathsf{Classical}}
\newcommand{\Measure}{\mathsf{Measure}}
\newcommand{\Conditional}{\mathsf{Cond}}
\newcommand{\HE}{\mathsf{HE}}
\newcommand{\TTP}{\mathsf{TrapTP}}
\newcommand{\TC}{\mathsf{TC}}
\newcommand{\Eval}{\mathsf{Eval}}
\newcommand{\Ver}{\mathsf{Ver}}
\newcommand{\Dec}{\mathsf{Dec}}
\newcommand{\VerDec}{\mathsf{VerDec}}
\newcommand{\VerDecQubit}{\mathsf{VerDecQubit}}
\newcommand{\VerDecMeasurement}{\mathsf{VerDecMeasurement}}
\newcommand{\KeyGen}{\mathsf{KeyGen}}
\newcommand{\Post}{\mathsf{Post}}
\newcommand{\Fake}{\mathsf{Fake}}
\newcommand{\GadgetGen}{\mathsf{GadgetGen}}
\newcommand{\T}{\mathsf{T}}
\renewcommand{\P}{\mathsf{P}}
\renewcommand{\H}{\mathsf{H}}
\newcommand{\CNOT}{\mathsf{CNOT}}
\newcommand{\CSS}{\mathsf{CSS}}
\newcommand{\MAC}{\mathsf{MAC}}
\newcommand{\gin}{\gamma^\mathsf{in}}
\newcommand{\gmid}{\gamma^\mathsf{mid}}
\newcommand{\gout}{\gamma^\mathsf{out}}
\newcommand{\flag}{\mathit{flag}}
\newcommand{\CheckLog}{\mathsf{CheckLog}}
\newcommand{\barctrl}{\mathsf{\overline{ctrl}\mbox{-}}}
\newcommand{\Real}{\mathsf{Real}}
\newcommand{\Ideal}{\mathsf{Ideal}}
\newcommand{\advA}{\mathcal{A}}
\newcommand{\simS}{\mathcal{S}}
\newcommand{\tr}{\mbox{tr}}
\DeclareMathOperator*{\Ex}{\mathbb{E}}
\newcommand{\disD}{\mathcal{D}}
\newcommand{\inM}{\mathcal{M}}
\newcommand{\VerGame}[1]{\mathsf{VerGame}_{\advA,#1}(\kappa)}
\newcommand{\VerGamePrime}[1]{\mathsf{VerGame}_{\advA',#1}(\kappa)}
\newcommand{\VerTwoGame}[1]{\mathsf{VerGame}^2_{\advA,#1}(\kappa)}
\newcommand{\VerTwoGamePrime}[1]{\mathsf{VerGame}^2_{\advA',#1}(\kappa)}
\newcommand{\HybridGame}[1]{\mathsf{Hyb}_{\advA,#1}(\kappa)}
\newcommand{\AdvHyb}[2]{\mathsf{AdvHyb}_{#1}^{#2}(\advA, \kappa)}
\renewcommand{\tr}{\text{\textnormal{tr}}}
\renewcommand{\todo}[1]{}
\renewcommand{\todo}[2][]{}
\renewcommand{\S}{\ensuremath{\mathcal{S}}}
\newcommand{\X}{\mathsf{X}}
\newcommand{\Y}{\mathcal{Y}}
\newcommand{\Z}{\mathsf{Z}}
\newcommand{\accstate}{\egoketbra{\text{\textnormal{acc}}}}
\newcommand{\rejstate}{\egoketbra{\text{\textnormal{rej}}}}
\newcommand{\accstatepure}{\ket{\text{acc}}}
\newcommand{\rejstatepure}{\ket{\text{rej}}}
\newcommand{\ketbra}[2]{\ket{#1}\bra{#2}}			% Ket-Bra (|.><.|)
\newcommand{\egoketbra}[1]{\ketbra{#1}{#1}}			% Ket-Bra with the same argument
\newcommand{\meas}{
	\begin{tikzpicture}
	\filldraw[fill=white] (0,.25) rectangle (.7,-.25);
	\draw (.67,-.1) arc (50:130:.5);
	\draw (.35,-.2)--(.525,.2);
	\end{tikzpicture}
}
\newcommand{\remove}{\text{\faBan}}
\newcommand{\enc}[1]{\widetilde{#1}}
\definecolor{webgreen}{rgb}{0,.5,0}
\definecolor{webblue}{rgb}{0,0,.5}
\newcommand{\poly}{\operatorname{poly}}
\newcommand{\negl}{\operatorname{negl}}
\newcommand{\algo}{\mathcal}
\newcommand{\cacc}{\ensuremath{\mathsf{acc}}}
\newcommand{\crej}{\ensuremath{\mathsf{rej}}}
\newcommand{\evk}{\ensuremath{\mathsf{evk}}}
\newcommand{\from}{\leftarrow}
\newcommand{\NC}{\textbf{NC}}
\newcommand{\states}{\mathfrak D}
\newcommand{\hi}{\mathcal H}
\newcommand{\OTP}{\ensuremath{\mathsf{OTP}}}
\newcommand{\create}{\ensuremath{\mathsf{create}}}
\newtheorem{game}{Game}
\title{Quantum Fully Homomorphic Encryption With Verification}
\author{}\institute{}
\author{Gorjan Alagic\inst{1,2} \and Yfke Dulek\inst{3} \and Christian Schaffner\inst{3} \and Florian Speelman\inst{4}}
\institute{Joint Center for Quantum Information and Computer Science, University of Maryland, College Park, MD
\and
National Institute of Standards and Technology, Gaithersburg, MD
\and
CWI, QuSoft, and University of Amsterdam
\and
QMATH, Department of Mathematical Sciences, University of Copenhagen}
\begin{document}
%\todo{Todo test}
%\todo[inline]{Todo inline test}
\maketitle
%%%-------------------------------------
\begin{abstract}
Fully-homomorphic encryption (FHE) enables computation on encrypted data while maintaining secrecy. Recent research has shown that such schemes exist even for quantum computation. Given the numerous applications of classical FHE (zero-knowledge proofs, secure two-party computation, obfuscation, etc.) it is reasonable to hope that quantum FHE (or QFHE) will lead to many new results in the quantum setting. However, a crucial ingredient in almost all applications of FHE is \emph{circuit verification}. Classically, verification is performed by checking a transcript of the homomorphic computation. Quantumly, this strategy is impossible due to no-cloning. This leads to an important open question: can quantum computations be delegated and verified in a non-interactive manner?

In this work, we answer this question in the affirmative, by constructing a scheme for QFHE with verification (vQFHE). Our scheme provides authenticated encryption, and enables arbitrary polynomial-time quantum computations without the need of interaction between client and server. Verification is almost entirely classical; for computations that start and end with classical states, it is completely classical. As a first application, we show how to construct quantum one-time programs from classical one-time programs and vQFHE.

\end{abstract}
%%%-------------------------------------

%%%-------------------------------------
\section{Introduction}
%%%-------------------------------------

The 2009 discovery of fully-homomorphic encryption (FHE) in classical cryptography is widely considered to be one of the major breakthroughs of the field. Unlike standard encryption, FHE enables non-interactive computation on encrypted data even by parties that do not hold the decryption key. Crucially, the input, output, and all intermediate states of the computation remain encrypted, and thus hidden from the computing party. While FHE has some obvious applications (e.g., cloud computing), its importance in cryptography stems from its wide-ranging applications to other cryptographic scenarios. For instance, FHE can be used to construct secure two-party computation, efficient zero-knowledge proofs for NP, and indistinguishability obfuscation~\cite{blogBB12, GGHRSW13}. In fact, the breadth of its usefulness has led some to dub FHE ``the swiss army knife of cryptography''\cite{blogBB12}. 

Recent progress on constructing quantum computers has led to theoretical research on ``cloud-based'' quantum computing. In such a setting, it is natural to ask whether users can keep their data secret from the server that performs the quantum computation.  A recently-constructed quantum fully-homomorphic encryption (QFHE) scheme shows that this can be done in a single round of interaction~\cite{DSS16}. This discovery raises an important question: do the numerous classical applications of FHE have suitable quantum analogues? As it turns out, most of the classical applications require an additional property which is simple classically, but non-trivial quantumly. That property is \emph{verification}: the ability of the user to check that the final ciphertext produced by the server is indeed the result of a particular computation, homomorphically applied to the initial user-generated ciphertext. In the classical case, this is a simple matter: the server makes a copy of each intermediate computation step, and provides the user with all these copies. In the quantum case, such a ``transcript" would appear to violate no-cloning. the user simply checks a transcript generated by the server. In the quantum case, this would violate no-cloning. In fact, one might suspect that the no-cloning theorem prevents non-interactive quantum verification \emph{in principle}.

In this work, we show that verification of homomorphic quantum computations is in fact possible. We construct a new QFHE scheme which allows the server to generate a ``computation log'' which can certify to the user that a particular homomorphic quantum computation was performed on the ciphertext. The computation log itself is purely classical, and most (in some cases, all) of the verification can be performed on a classical computer. Unlike in all previous quantum homomorphic schemes, the underlying encryption is now authenticated. 

Verification immediately yields new applications of QFHE, e.g., allowing users of a ``quantum cloud service'' to certify the server's computations. Verified QFHE (or vQFHE) also leads to a simple construction of quantum one-time programs (qOTPs)~\cite{BGS13}. In this construction, the qOTP for a functionality $\Phi$ consists of an evaluation key and a classical OTP which performs vQFHE verification for $\Phi$ only. Finding other applications of vQFHE (including appropriate analogues of all classical applications) is the subject of ongoing work.

\subsubsection{Related Work.}
%%%

Classical FHE was first constructed by Gentry in 2009~\cite{Gentry09}. For us, the scheme of Brakerski and Vaikuntanathan~\cite{BV11} is of note: it has decryption in $\NC^1$ and is believed to be quantum-secure. Quantumly, partially-homomorphic (or partially-compact) schemes were proposed by Broadbent and Jeffery~\cite{BJ15}. The first fully-homomorphic (leveled) scheme was constructed by Dulek, Schaffner and Speelman~\cite{DSS16}. Recently, Mahadev proposed a scheme, based on classical indistinguishability obfuscation, in which the user is completely classical~\cite{Mah17}. A parallel line of work has attempted to produce QFHE with information-theoretic security~\cite{FPY14, FTO15, CFKOT16, NS17}. There has also been significant research on delegating quantum computation interactively (see, e.g.,~\cite{ABE08, BFK09, CGJV17}). Another notable interactive approach is quantum computation on authenticated data (QCAD), which was used to construct quantum one-time programs from classical one-time programs~\cite{BGS13} and zero-knowledge proofs for QMA~\cite{BJSW16}.

\subsubsection{Summary of Results.}
%%%

Our results concern a new primitive: verified QFHE. A standard QFHE scheme consists of four algorithms: $\KeyGen$, $\Enc$, $\Eval$ and $\Dec$~\cite{BJ15, DSS16}. We define vQFHE similarly, with two changes: (i.) $\Eval$ provides an extra classical ``computation log'' output; (ii.) decryption is now called $\VerDec$, and accepts a ciphertext, a circuit description $C$, and a computation log. Informally, correctness then demands that, for all keys $k$ and circuits $C$ acting on plaintexts,
\begin{equation}\label{eq:corr-informal}
\VerDec^C_k \circ \Eval^C_\evk \circ \Enc_k = \Phi_C\,.
\end{equation}
A crucial parameter is the relative difficulty of performing $C$ and $\VerDec^C_k$. In a nontrivial scheme, the latter must be simpler. In our case, $C$ is an arbitrary poly-size quantum circuit and $\VerDec^C_k$ is almost entirely classical.

\paragraph{Security of verified QFHE.} 

Informally, security should require that, if a server deviates significantly from the map $\Eval^C_k$ in \eqref{eq:corr-informal}, then $\VerDec^C_k$ will reject. 
\begin{enumerate}
\item \textbf{Semantic security (SEM-VER)}. Consider a QPT adversary $\algo A$ which manipulates a ciphertext (and side info) and declares a circuit, as in Figure~\ref{fig:SEM-VER} (top). This defines a channel $\Phi_{\algo A} := \VerDec \circ \algo A \circ \Enc$. A simulator $\algo S$ does not receive or output a ciphertext, but does declare a circuit; this defines a channel $\Phi_{\algo S}$ which first runs $\algo S$ and then runs a circuit on the plaintext based on the outputs of $\algo S$.  We say that a vQFHE scheme is semantically secure (SEM-VER) if for all adversaries $\algo A$ there exists a simulator $\algo S$ such that the channels $\Phi_{\algo A}$ and $\Phi_{\algo S}$ are computationally indistinguishable. 

\item \textbf{Indistinguishability (IND-VER)}. Consider the following security game. Based on a hidden coin flip $b$, $\algo A$ participates in one of two protocols. For $b = 0$, this is normal vQFHE. For $b=1$, this is a modified execution, where we secretly swap out the plaintext $\rho_{\algo A}$ to a private register (replacing it with a fixed state), apply the desired circuit to $\rho_{\algo A}$, and then swap $\rho_{\algo A}$ back in; we then discard this plaintext if $\VerDec$ rejects the outputs of $\algo A$. Upon receiving the final plaintext of the protocol, $\algo A$ must guess the bit $b$. A vQFHE scheme is IND-VER if, for all $\algo A$, the success probability is at most $1/2 + \negl(n)$.

\item \textbf{New relations between security definitions}. If we restrict SEM-VER to empty circuit case, we recover (the computational version of) the definition of quantum authentication~\cite{DNS12, BW16}. SEM-VER (resp., IND-VER) generalizes computational semantic security SEM (resp., indistinguishability IND) for quantum encryption~\cite{BJ15, ABFGSS16}. We generalize SEM $\Leftrightarrow$ IND~\cite{ABFGSS16} as follows.

\begin{theorem}
A vQFHE scheme satisfies SEM-VER iff it satisfies IND-VER.
\end{theorem}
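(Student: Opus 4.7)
The plan is to mimic the classical SEM $\Leftrightarrow$ IND proof template, adapted to the quantum verified setting. Both directions are handled by treating $\advA$'s declared circuit $C$, the $\VerDec$ verdict $\flag\in\{\cacc,\crej\}$, and any retained side register as explicit first-class outputs of the channels in play, so that they can be projected out by efficient CPTP postprocessings.

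For SEM-VER $\Rightarrow$ IND-VER, I fix an IND-VER adversary $\advA$ and let $\sigma$ be the dummy plaintext swapped in during the $b=1$ game. The $b=0$ view is $\Phi_{\advA}(\rho_{\advA})$, while the $b=1$ view is the channel $\Psi(\rho_{\advA})$ that runs $\advA$ on $\Enc(\sigma)$, extracts $(C,\flag,\text{side})$, and outputs $C(\rho_{\advA})$ if $\flag=\cacc$ or $\bot$ otherwise. Let $\simS$ be the simulator for $\advA$ given by SEM-VER; I introduce the intermediate hybrid $\Phi_{\simS}(\rho_{\advA})$. The step $\Phi_{\advA}(\rho_{\advA}) \approx \Phi_{\simS}(\rho_{\advA})$ is SEM-VER directly. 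For the step $\Phi_{\simS}(\rho_{\advA}) \approx \Psi(\rho_{\advA})$, I invoke SEM-VER a second time at the fixed plaintext $\sigma$: since $\sigma$ is known, the triple $(C,\flag,\text{side})$ can be recovered from both $\Phi_{\advA}(\sigma)$ and $\Phi_{\simS}(\sigma)$ by a single common CPTP postprocessing, so the two induced triple-distributions are computationally indistinguishable; composing with the common map ``apply $C$ to $\rho_{\advA}$ on accept, else output $\bot$'' closes the hybrid.

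For IND-VER $\Rightarrow$ SEM-VER, given any $\advA$ I define $\simS$ to sample a fresh key pair, encrypt $\sigma$ internally, run $\advA$ on that ciphertext, run $\VerDec$ in parallel to obtain $\flag$, and output $(C_{\advA},\flag,\text{side})$. By construction $\Phi_{\simS}(\rho)$ coincides exactly with the IND-VER $b=1$ view on input $\rho$, while $\Phi_{\advA}(\rho)$ is the $b=0$ view; the hypothesis yields $\Phi_{\advA} \approx \Phi_{\simS}$ immediately.

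The main obstacle I expect is not the high-level reduction but the careful formalization of the ``outputs'' of $\Phi_{\advA}$ and $\Phi_{\simS}$. I need the accept/reject flag (via, e.g., an orthogonal $\ket{\bot}$ register) and the classical circuit description to be accessible as first-class outputs of these channels, and each projection and re-application above to be an efficient CPTP map, so that the hypothesized computational indistinguishability of channels survives all of the post- and pre-compositions used in the hybrids.
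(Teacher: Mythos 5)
Your overall template is the paper's: one direction goes through the canonical simulator that encrypts a dummy plaintext, runs $\advA$, runs $\VerDec$, and discards the plaintext output; the other turns a real/ideal distinguisher into a player of the indistinguishability game. Your SEM-VER $\Rightarrow$ IND-VER direction is sound: the two-step hybrid through $\Phi_{\simS}$ (one invocation of SEM-VER per value of the hidden bit) is just an unbundled version of the paper's single invocation, in which the coin flip and the conditional swaps are folded into the message generator and distinguisher and one then observes that the ideal-world view is identical for both values of the bit, so the distinguisher can only guess.

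The gap is in your IND-VER $\Rightarrow$ SEM-VER direction: you let the simulator ``sample a fresh key pair'' and encrypt the dummy under it. This breaks the step you rely on, namely that $\Phi_{\simS}(\rho)$ coincides exactly with the $b=1$ view. In the $b=1$ branch of the game the dummy is encrypted, and the adversary's output verified, under the \emph{same} key $sk$ that generated the evaluation key $\rho_{\evk}$ held by the message generator and passed to $\advA$ as side information. A simulator with an independent key hands $\advA$ a ciphertext incompatible with that evaluation key; for any scheme in which $\rho_{\evk}$ is correlated with $sk$ (such as $\TTP$, where honest evaluation consumes encrypted magic states and gadgets tied to $sk$), the simulator's internal $\VerDec$ under the fresh key will reject honest behaviour that the real $\VerDec$ accepts, so the $(\flag, c)$ distribution produced by $\Phi_{\simS}$ is noticeably wrong and the claimed coincidence fails. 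The definition of SEM-VER anticipates exactly this: the simulator receives the real $sk$ (the ideal channel is $\barctrl \faBan \circ\, \Phi_c \circ \simS_{sk}$), and the paper's simulator is $\simS_{sk}(\rho_{R_1}) = \tr_{X'}\bigl(\VerDec_{sk}(\advA(\Enc_{sk}(\ketbra{0^n}{0^n}) \otimes \rho_{R_1}))\bigr)$. Replace your fresh key with the provided $sk$ and your reduction goes through as described.
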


\end{enumerate}

\paragraph{A scheme for vQFHE for poly-size quantum circuits.}

Our main result is a vQFHE scheme which admits verification of arbitrary polynomial-size quantum circuits. The verification in our scheme is almost entirely classical. In fact, we can verify classical input/output computations using purely classical verification. The main technical ingredients are (i.) classical FHE with $\NC^1$ decryption~\cite{BV11}, (ii.) the trap code for computing on authenticated quantum data~\cite{SP00, BGS13, BW16}, and (iii.) the ``garden-hose gadgets'' from the first QFHE scheme~\cite{DSS16}. The scheme is called $\TTP$; a brief sketch is as follows.
\begin{enumerate}
\item \textbf{Key Generation ($\KeyGen$).} We generate keys for the classical FHE scheme, as well as some encrypted auxiliary states (see evaluation below). This procedure requires the generation of single-qubit and two-qubit states from a small fixed set, performing Bell measurements and Pauli gates, and executing the encoding procedure of a quantum error-correcting code on which the trap code is based.
\item \textbf{Encryption ($\Enc$).} We encrypt each qubit of the plaintext using the trap code, and encrypt the trap code keys using the FHE scheme. This again requires the ability to perform Paulis, execute an error-correcting encoding, and the generation of basic single-qubit states.
\item \textbf{Evaluation ($\Eval$).} Paulis and $\CNOT$ are evaluated as in the trap code; keys are updated via FHE evaluation. To measure a qubit, we measure all ciphertext qubits and place the outcomes in the log. To apply $\P$ or $\H$, we use encrypted magic states (from the eval key) plus the aforementioned gates. Applying $\T$ requires a magic state and an encrypted ``garden-hose gadget'' (because the $\T$-gate magic state circuit applies a $\P$-gate conditioned on a measurement outcome). In addition to all of the measurement outcomes, the log also contains a transcript of all the classical FHE computations.
\item \textbf{Verified decryption ($\VerDec$).} We check the correctness and consistency of the classical FHE transcript, the measurement outcomes, and the claimed circuit. The result of this computation is a set of keys for the trap code, which are correct provided that $\Eval$ was performed honestly. We decrypt using these keys and output either a plaintext or \textsf{reject}. In terms of quantum capabilities, decryption requires executing the decoding procedure of the error-correcting code, computational-basis and Hadamard-basis measurements, and Paulis.
\end{enumerate}
Our scheme is \emph{compact}: the number of elementary quantum operations performed by $\VerDec$ scales only with the size of the plaintext, and \emph{not} with the size of the circuit performed via $\Eval$. 
We do require that $\VerDec$ performs a classical computation which can scale with the size of the circuit; this is reasonable since $\VerDec$ must receive the circuit as input. 
Like the other currently-known schemes for QFHE, our scheme is leveled, in the sense that pre-generated auxiliary magic states are needed to perform the evaluation procedure.
\begin{theorem}[Main result, informal]
Let $\TTP$ be the scheme outlined above, and let $\VerDec^{\equiv}$ be $\VerDec$ for the case of verifying the empty circuit.
\begin{enumerate}
\item The vQFHE scheme $\TTP$ satisfies IND-VER security.
\item The scheme $(\KeyGen, \Enc, \VerDec^{\equiv})$ is authenticating~\cite{DNS12} and IND-CPA~\cite{BJ15}.
\end{enumerate}
\end{theorem}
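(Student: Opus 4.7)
The plan is to first invoke the stated equivalence SEM-VER $\Leftrightarrow$ IND-VER to reduce Part~1 to constructing a simulator $\algo S$ witnessing semantic security for an arbitrary QPT adversary $\algo A$. I will proceed through a sequence of hybrids that gradually replaces the honest evaluation protocol by an idealized trap-code protocol, and then read off $\algo S$ from the final hybrid.

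The first hybrid replaces every classical FHE ciphertext that appears in the encryption, in the evaluation key (encrypted trap-code keys, magic-state keys, garden-hose gadget tables), and in the computation log, by FHE encryptions of $0$. Computational indistinguishability of this step follows from IND-CPA security of the underlying classical FHE (Brakerski--Vaikuntanathan), invoked polynomially many times. After this step the trap-code keys are information-theoretically hidden from $\algo A$, so the security of the trap code as a quantum authentication scheme (Broadbent--Gutoski--Stebila, Broadbent--Wainewright) applies: any action $\algo A$ performs on the encoded registers is, up to negligible error, equivalent to either (i) a well-defined logical Pauli update that can be read off from the classical keys, or (ii) triggering rejection on the subsequent $\VerDec$. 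The simulator $\algo S$ is then built in the obvious way: internally run $\algo A$ on fresh encryptions of $0$, extract the declared circuit together with the induced Pauli update, and output these to the outer semantic-security game, which applies the circuit in the clear and undoes the one-time pad (or discards on rejection), reproducing $\Phi_{\algo A}$.

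The main obstacle is the $\T$-gate evaluation, which uses an encrypted magic state together with a garden-hose gadget to perform a $\P$-gate conditioned on an encrypted measurement outcome. I have to show that a dishonest server cannot exploit the gadget to inject errors that later slip past $\VerDec$. The computation log is designed precisely for this purpose: it records the full FHE transcript of every gadget evaluation together with the raw outcomes of all trap-code measurements, so $\VerDec$ can recompute the intended key update and reject whenever $\algo A$'s claimed transcript is inconsistent with its quantum actions. The technical heart of the argument is a case analysis showing that every deviation from the honest gadget execution either shifts an error onto a trap qubit (so decoding rejects) or corresponds to a consistent logical Pauli that is absorbed into the extracted one-time-pad update; this is the place where the classical-FHE transcript, the trap structure of the code, and the garden-hose routing must be combined carefully.

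For Part~2, when $C$ is the empty circuit both $\Eval$ and the computation log are trivial, so $\VerDec^{\equiv}$ reduces to: decrypt the classical FHE ciphertexts holding the trap-code keys, then decode the trap code and check the traps. Authenticity of $(\KeyGen, \Enc, \VerDec^{\equiv})$ reduces directly to information-theoretic authenticity of the trap code, using IND-CPA of the classical FHE in a single reduction step to replace the encrypted keys by encryptions of $0$. IND-CPA is obtained along the same lines: after swapping out the FHE ciphertexts, the quantum ciphertext is a trap-code encoding under uniformly random hidden keys, which is the maximally mixed state on the ciphertext register and therefore independent of the plaintext.
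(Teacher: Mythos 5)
Your high-level strategy (hybridize away the classical FHE ciphertexts, then fall back on trap-code security) points in the same direction as the paper, but your first hybrid hides a genuine gap. You propose to replace \emph{every} classical FHE ciphertext by an encryption of $0$ in one batch of IND-CPA invocations, after which ``the trap-code keys are information-theoretically hidden.'' This fails for the $\T$-gate gadgets: the \emph{quantum} state $\Gamma_i$ is an arrangement of EPR pairs whose wiring implements the garden-hose protocol for $\HE.\Dec_{sk_{i-1}}$, so its structure itself leaks information about $sk_{i-1}$, and that leakage is not removed by swapping classical ciphertexts for encryptions of zero. Worse, a reduction to IND-CPA under $pk_{i-1}$ must simulate the adversary's view without $sk_{i-1}$, yet it cannot even \emph{prepare} $\Gamma_i$ without it. The paper breaks this circularity by peeling off the gadgets one at a time, back to front, in $4t$ hybrid steps: first IND-CPA under $pk_t$ (no gadget depends on $sk_t$), then replacement of the quantum structure of $\Gamma_t$ by EPR halves and dummy states whose correct treatment is enforced at $\VerDec$ time, then IND-CPA under $pk_{t-1}$, and so on. The step you label ``the technical heart'' --- showing that a dishonest use of the gadget is caught --- is not handled by a direct case analysis over deviations; it is handled by observing that the Bell measurements on the (dummified) gadget are themselves a trap-code computation, and reducing to a two-round IND-VER property of the sub-scheme $\TC$, which is proven separately via the Pauli twirl. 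Without this self-reduction (the lemma comparing $\TTP_2^{(\ell)}$ and $\TTP_3^{(\ell)}$) your argument has no concrete mechanism for malicious gadget usage, and the ``careful combination'' you defer is precisely the hard part.

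Two further omissions. First, you never use the MAC, but it is essential: without authentication of the encrypted QOTP keys and of $\HE.\Enc_{pk_0}(\pi)$, the adversary can homomorphically flip exactly the key bits at non-trap positions and apply an undetected logical Pauli; the paper's first two hybrids (rerouting the authenticated initial keys directly to $\VerDec$, then shadowing the computation log in the clear) are what make the remaining argument sound. Second, even for Part 2, your ``single reduction step'' for IND-CPA is too coarse for the same circularity reason; the paper's privacy argument also peels the keys $pk_t, pk_{t-1}, \dots$ in order, following Lemma~1 of the DSS16 scheme. Your treatment of the empty-circuit case and the final maximally-mixed-state observation are otherwise fine, and your choice to prove SEM-VER and transfer to IND-VER via the equivalence theorem is a legitimate (if reversed) use of that result.
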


\paragraph{Application: quantum one-time programs.}

A one-time program (or OTP) is a device which implements a circuit, but self-destructs after the first use. OTPs are impossible without hardware assumptions, even with quantum states; OTPs that implement quantum circuits (qOTP) can be built from classical OTPs (cOTP)~\cite{BGS13}. As a first application of vQFHE, we give another simple construction of qOTPs. Our construction is weaker, since it requires a computational assumption. On the other hand, it is conceptually very simple and serves to demonstrates the power of verification. In our construction, the qOTP for a quantum circuit $C$ is simply a (vQFHE) encryption of $C$ together with a cOTP for verifying the universal circuit. To use the resulting qOTP, the user attaches their desired input, homomorphically evaluates the universal circuit, and then plugs their computation log into the cOTP to retrieve the final decryption keys.

%%%-------------------------------------
\subsubsection{Preliminaries.}
%%%-------------------------------------

Our exposition assumes a working knowledge of basic quantum information and the associated notation. As for the particular notation of quantum gates, the gates $(\H, \P, \CNOT)$ generate the so-called Clifford group (which can also be defined as the normalizer of the Pauli group); it includes the Pauli gates $\X$ and $\Z$. In order to implement arbitrary unitary operators, it is sufficient to add the $\T$ gate (also known as the $\pi/8$ gate). Finally, we can reach universal quantum computation by adding single-qubit measurements in the computational basis.

We will frequently make use of several standard cryptographic ingredients, as follows. The quantum one-time pad (QOTP) will be used for information-theoretically secret one-time encryption. In its encryption phase, two bits $a, b \in \{0,1\}$ are selected at random, and the map $\X^a \Z^b$ is applied to the input, projecting it to the maximally-mixed state. We will also need the computational security notions for quantum secrecy, including indistinguishability (IND, IND-CPA)~\cite{BJ15} and semantic security (SEM)~\cite{ABFGSS16}. For quantum authentication, we will refer to the security definition of Dupuis, Nielsen and Salvail~\cite{DNS12}. We will also make frequent use of the trap code for quantum authentication, described below in Section~\ref{sec:trap-code-encrypt}. For a security proof and methods for interactive computation on this code, see~\cite{BGS13}. Finally, we will also use classical fully-homomorphic encryption (FHE). In brief, an FHE scheme consists of classical algorithms $(\KeyGen, \Enc, \Eval, \Dec)$ for (respectively) generating keys, encrypting plaintexts, homomorphically evaluating circuits on ciphertexts, and decrypting ciphertexts. We will use FHE schemes which are quantum-secure and whose $\Dec$ circuits are in $\NC^1$ (see, e.g.,~\cite{BV11}).

%%%%%%%%%%%%%%%%%%%%%%
\section{A new primitive: verifiable QFHE}
%%%%%%%%%%%%%%%%%%%%%%

We now define verified quantum fully-homomorphic encryption (or vQFHE), in the symmetric-key setting. The public-key case is a straightforward modification. 

\subsubsection{Basic definition.}
%%%

The definition has two parameters: the class $\mathcal C$ of circuits which the user can verify, and the class $\mathcal V$ of circuits which the user needs to perform in order to verify. We are interested in cases where $\mathcal C$ is stronger than $\mathcal V$.

\begin{definition}[vQFHE]\label{def:vQFHE} Let $\mathcal C$ and $\mathcal V$ be (possibly infinite) collections of quantum circuits. A $(\mathcal C, \mathcal V)$-vQFHE scheme is a set of four QPT algorithms:
\begin{itemize}
\item $\KeyGen : \{1\}^{\kappa} \to \mathcal K \times \states(\mathcal H_E)$ (security parameter $\to$ private key, eval key);
\item $\Enc : \mathcal K \times \states(\hi_X) \to \states(\hi_C)$ (key, ptext $\to$ ctext);
\item $\Eval : \mathcal C \times \states(\hi_{CE}) \to \mathcal L \times \states(\hi_C)$ (circuit, eval key, ctext $\to$ log,  ctext);
\item $\VerDec: \mathcal K \times \mathcal C \times \mathcal L \times \states(\hi_C) \to \states(\hi_X)\times \{\cacc, \crej\}$  
\end{itemize}
such that (i.) the circuits of $\VerDec$ belong to the class $\mathcal V$, and (ii.) for all $(sk, \rho_\evk) \from \KeyGen$, all circuits $c \in \mathcal C$, and all $\rho \in \states(\hi_{XR})$,
$$
\bigl\|\VerDec_{sk} (c, \Eval(c, \Enc_k(\rho), \rho_\evk)) - \Phi_c(\rho) \otimes \egoketbra{\cacc}) \bigr\|_1 \leq \negl(\kappa)\,,
$$
where $R$ is a reference and the maps implicitly act on appropriate spaces.
\end{definition}

We will refer to condition (ii.) as \emph{correctness}. It is implicit in the definition that the classical registers $\mathcal K, \mathcal L$ and the quantum registers $E, X, C$ are really infinite families of registers, each consisting of $\poly(\kappa)$-many (qu)bits. In some later definitions, it will be convenient to use a version of $\VerDec$ which also outputs a copy of the (classical) description of the circuit $c$.

\paragraph{Compactness.}
%%%

We note that there are trivial vQFHE schemes for some choices of $(\mathcal C, \mathcal V)$ (e.g., if $\mathcal C \subset \mathcal V$, then the user can simply authenticate the ciphertext and then perform the computation during decryption). Earlier work on quantum and classical homomorphic encryption required compactness, meaning that the size of the decrypt circuit should not scale with the size of the homomorphic circuit.

\begin{definition}[Compactness of QFHE]\label{def:compactness-qfhe}
A QFHE scheme $S$ is compact if there exists a polynomial $p(\kappa)$ such that for any circuit $C$ with $n_{\mathsf{out}}$ output qubits, and for any input $\rho_X$, the complexity of applying $S.\Dec$ to $S.\Eval^C(S.\Enc_{sk}(\rho_X),\rho_{evk})$ is at most $p(n_{\mathsf{out}},\kappa)$.
\end{definition}

When considering QFHE \emph{with} verification, however, some tension arises. On one hand, trivial schemes like the above still need to be excluded. On the other hand, verifying that a circuit $C$ has been applied requires reading a description of $C$, which violates Definition~\ref{def:compactness-qfhe}. We thus require a more careful consideration of the relationship between the desired circuit $C \in \mathcal C$ and the verification circuit $V \in \mathcal V$. In our work, we will allow the number of classical gates in $V$ to scale with the size of $C$. We propose a new definition of compactness in this context.

\begin{definition}[Compactness of vQFHE (informal)]\label{def:compactness-vqfhe-informal}
A vQFHE scheme $S$ is compact if $S.\VerDec$ is divisible into a classical verification procedure $S.\Ver$ (outputting only an accept/reject flag), followed by a quantum decryption procedure $S.\Dec$. The running time of $S.\Ver$ is allowed to depend on the circuit size, but the running time of $S.\Dec$ is not.
\end{definition}

The procedure $S.\Dec$ is not allowed to receive and use any other information from $S.\Ver$ than whether or not it accepts or rejects. This prevents the classical procedure $S.\Ver$ from de facto performing part of the decryption work (e.g., by computing classical decryption keys). In Section~\ref{sec:trap-code-encrypt}, we will see a scheme that does not fulfill compactness for this reason.

\begin{definition}[Compactness of vQFHE (formal)]\label{def:compactness-vqfhe}
A vQFHE scheme $S$ is compact if there exists a polynomial $p$ such that $S.\VerDec$ can be written as $S.\Dec \circ\, S.\Ver$, and the output ciphertext space $\states(\hi_C)$ can be written as a classical-quantum state space $\mathcal{A} \times \states(\hi_B)$, where (i.) $S.\Ver : \mathcal{K} \times \mathcal{C} \times \mathcal{L} \times \mathcal{A} \to \{\cacc, \crej\}$ is a classical polynomial-time algorithm, and (ii.) $S.\Dec : \{\cacc, \crej\} \times \mathcal{K} \times \states(\hi_C) \to \states(\hi_X) \times \{\cacc,\crej\}$ is a quantum algorithm such that for any circuit $C$ with $n_{\mathsf{out}}$ output qubits and for any input $\rho_X$, $\S.\Dec$ runs in time $p(n_{\mathsf{out}},\kappa)$ on the output of $S.\Eval^C(S.\Enc(\rho_X),\rho_{evk})$.
\end{definition}
Note that in the above definition, the classical registers $\mathcal{K}$ and $\mathcal{A}$ are copied and fed to both $S.\Dec$ and $S.\Ver$.

\label{sec:def-privacy}
For privacy, we say that a vQFHE scheme is private if its ciphertexts are indistinguishable under chosen plaintext attack (IND-CPA)~\cite{BJ15, DSS16}.

\subsubsection{Secure verifiability.}
%%%

In this section, we formalize the concept of verifiability. Informally, one would like the scheme to be such that whenever $\VerDec$ accepts, the output can be trusted to be close to the desired output. We will consider two formalizations of this idea: a semantic one, and an indistinguishability-based one.

Our semantic definition will state that every adversary with access to the ciphertext can be simulated by a simulator that only has access to an ideal functionality that simply applies the claimed circuit. It is inspired by quantum authentication~\cite{DNS12, BW16} and semantic secrecy~\cite{ABFGSS16}.

The real-world scenario (Figure~\ref{fig:SEM-VER}, top) begins with a state $\rho_{XR_1R_2}$ prepared by a QPT (``message generator'') $\mathcal{M}$. The register $X$ (plaintext) is subsequently encrypted and sent to the adversary $\advA$. The registers $R_1$ and $R_2$ contain side information. The adversary acts on the ciphertext and $R_1$, producing some output ciphertext $C_{X'}$, a circuit description $c$, and a computation log $log$. These outputs are then sent to the verified decryption function. The output, along with $R_2$, is sent to a distinguisher $\disD$, who produces a bit 0 or 1.

\begin{figure}
	\begin{center}
	\scalebox{0.8}{
	\makebox[\textwidth][c]{
		\begin{tikzpicture}
		%Message generator:
		\draw (-0.5,1.5) -- (0,1.5);
		\node[anchor=east] at (-0.5,1.5) {$\rho_{evk}$};
		\draw (0,-1) rectangle (1,4);
		\node at (0.5,1.5) {$\inM$};
		
		%Output of message generator:
		\draw (1,0.5) -- (4,0.5);
		\node[anchor=south west] at (,0.5) {$R_1$};
		\draw (1,-0.5) -- (10,-0.5);
		\node[anchor=south west] at (1,-0.5) {$R_2$};
		\draw (1,3.5) -- (2,3.5);
		\node[anchor=south west] at (1,3.5) {$X$};
		
		%Encryption & output:
		\draw (2,3) rectangle (3,4);
		\node at (2.5,3.5) {$\Enc_{sk}$};
		\draw (3,3.5) -- (4,3.5);
		\node[anchor=south] at (3.4,3.5) {$C_X$};
		
		%Adversary:
		\draw (4,0) rectangle (5,4);
		\node at (4.5,2) {$\advA$};
		
		%Output of adversary:
		\draw (5,3.5) -- (6.5,3.5);
		\node[anchor=south west] at (5,3.5) {$C_{X'}$};
		\draw (5,2.5) -- (6.5,2.5);
		\draw (5,2.6) -- (6.5,2.6);
		\node[anchor=south west] at (5,2.55) {$c$};
		\draw (5,1.6) -- (6.5,1.6);
		\draw (5,1.5) -- (6.5,1.5);
		\node[anchor=south west] at (5,1.55) {$log$};
		\draw (5,0.5) -- (10,0.5);
		\node[anchor=south west] at (5,0.5) {$R_1'$};
		
		%Decryption:
		\draw (6.5,1) rectangle (7.5,4);
		\node[rotate=90] at (7,2.5) {$\VerDec_{sk}$};
		
		%Output of decryption:
		\draw (7.5,3.5) -- (10,3.5);
		\node[anchor=south west] at (7.5,3.5) {$X'$};
		\draw (7.5,2.5) -- (10,2.5);
		\draw (7.5,2.6) -- (10,2.6);
		\node[anchor=south west] at (7.5,2.55) {$c$};
		\draw (7.5,1.5) -- (10,1.5);
		\draw (7.5,1.6) -- (10,1.6);
		\node[anchor=south west] at (7.5,1.55) {$acc/rej$};
		
		%Distinguisher:
		\draw (10,-1) rectangle (11,4);
		\node at (10.5,1.5) {$\disD$};
		\draw (11,1.5) -- (11.5,1.5);
		\draw (11,1.6) -- (11.5,1.6);
		\node[anchor=west] at (11.5,1.5) {$0/1$};
		\end{tikzpicture}
	}}
	\end{center}

	\begin{center}
	\scalebox{0.8}{
	\makebox[\textwidth][c]{
		\begin{tikzpicture}
	%Message generator:
	\draw (-0.5,1.5) -- (0,1.5);
	\node[anchor=east] at (-0.5,1.5) {$\rho_{evk}$};
	\draw (0,-1) rectangle (1,4);
	\node at (0.5,1.5) {$\inM$};
	
	%Output of message generator:
	\draw (1,0.5) -- (4,0.5);
	\node[anchor=south west] at (,0.5) {$R_1$};
	\draw (1,-0.5) -- (10,-0.5);
	\node[anchor=south west] at (1,-0.5) {$R_2$};
	\draw (1,3.5) -- (6.5,3.5);
	\node[anchor=south west] at (1,3.5) {$X$};
	
	%Simulator:
	\draw (4,0) rectangle (5,3);
	\node at (4.5,1.5) {$\simS_{sk}$};
	
	%Output of simulator:
	\draw (5,2.5) -- (10,2.5);
	\draw (5,2.6) -- (6.7,2.6) -- (6.7,3);
	\draw (6.8,3) -- (6.8,2.6) -- (10,2.6);
	\node[anchor=south] at (5.4,2.55) {$c$};
	\draw (5,1.6) -- (10,1.6);
	\draw (5,1.5) -- (10,1.5);
	\node[anchor=south west] at (5,1.55) {$acc(0)/rej(1)$};
	\draw (5,0.5) -- (10,0.5);
	\node[anchor=south] at (5.25,0.5) {$R_1'$};
	
	%Application of c:
	\draw (6.5,3) rectangle (7.5,4);
	\node at (7,3.5) {$\Phi_c$};
	\draw (7.5,3.5) -- (8.5,3.5);
	\node[anchor=south west] at (7.5,3.5) {$X'$};
	
	%Conditional replacement:
	\node at (9,1.55) {$\bullet$};
	\draw (9,1.6) -- (9,3);
	\draw (8.5,3) rectangle (9.5,4);
	\node at (9,3.5) {\faBan};
	\draw (9.5,3.5) -- (10,3.5);
	
	%Distinguisher:
	\draw (10,-1) rectangle (11,4);
	\node at (10.5,1.5) {$\disD$};
	\draw (11,1.5) -- (11.5,1.5);
	\draw (11,1.6) -- (11.5,1.6);
	\node[anchor=west] at (11.5,1.5) {$0/1$};
	\end{tikzpicture}
	}}
	\end{center}
		\caption{The real-world (top) and ideal-world (bottom) for SEM-VER.}
		\label{fig:SEM-VER}
\end{figure}
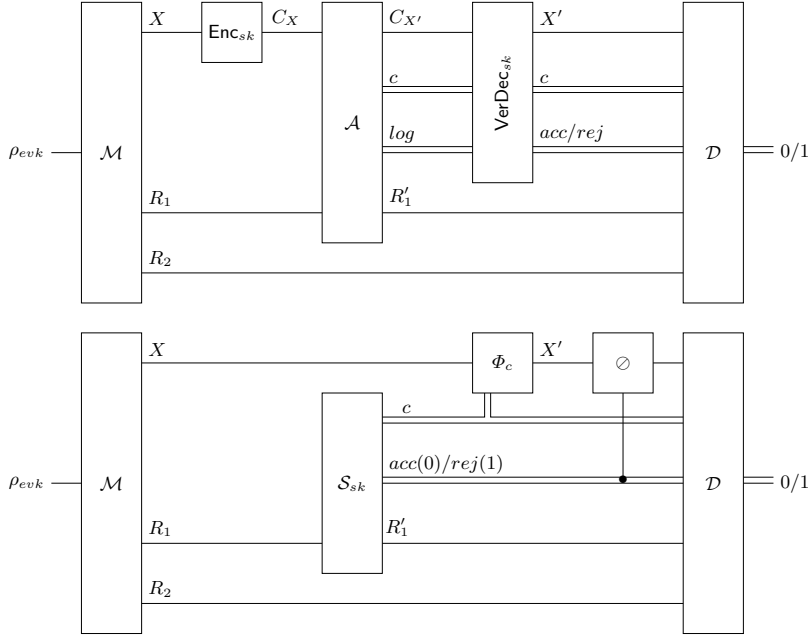

In the ideal-world scenario (Figure~\ref{fig:SEM-VER}, bottom), the plaintext $X$ is not encrypted or sent to the simulator $\simS$. The simulator outputs a circuit $c$ and chooses whether to accept or reject. The channel $\Phi_c$ implemented by $c$ is applied to the input register $X$ directly. If reject is chosen, the output register $X'$ is traced out and replaced by the fixed state $\Omega$; this controlled-channel is denoted $\barctrl \faBan$.

\begin{definition}[$\kappa$-SEM-VER]\label{def:SEM-VER}
	A vQFHE scheme $(\KeyGen, \Enc, \Eval, \VerDec)$ is \emph{semantically $\kappa$-verifiable} if for any QPT adversary $\advA$, there exists a QPT $\simS$ such that for all QPTs $\inM$ and $\disD$,
$$
\bigg|\Pr\Big[\disD\Big(\Real^\advA_{sk} (\inM (\rho_{evk}))\Big) = 1\Big]
- \Pr\Big[\disD\Big(\Ideal^\simS_{sk}(\inM(\rho_{evk})) \Big) = 1\Big]\bigg|
\leq \negl(\kappa),
$$
where
$\Real^\advA_{sk} = \VerDec_{sk} \circ \advA \circ \Enc_{sk}$
and
$\Ideal^\simS_{sk} = \barctrl \faBan \circ\, \Phi_c \circ \simS_{sk}$,
and the probability is taken over $(\rho_\evk,sk) \leftarrow \KeyGen(1^{\kappa})$ and all QPTs above.
\end{definition}

Note that the simulator (in the ideal world) gets the secret key $sk$. We believe that this is necessary, because the actions of an adversary may depend on superficial properties of the ciphertext. In order to successfully simulate this, the simulator needs to be able to generate (authenticated) ciphertexts. He cannot do so with a fresh secret key, because the input plaintext may depend on the correlated evaluation key $\rho_{evk}$. Fortunately, the simulator does not become too powerful when in possession of the secret key, because he does not receive any relevant plaintexts or ciphertexts to encrypt or decrypt: the input register $X$ is untouchable for the simulator.

Next, we present an alternative definition of verifiability, based on a security game motivated by indistinguishability.

\begin{game}\label{def:VER-game}
	For an adversary $\advA = (\advA_1, \advA_2, \advA_3)$, a scheme $S$, and a security parameter $\kappa$, the $\VerGame{S}$ game proceeds as depicted in Figure~\ref{fig:vergame}.
	
	\begin{figure}
		\centering
	\scalebox{0.8}{	\makebox[\textwidth][c]{
		\begin{tikzpicture}
		%KeyGen
		\draw (0,0) rectangle (1,3);
		\node[rotate=90] at (0.5,1.5) {$S.\KeyGen(1^{\kappa})$};
		\draw (1,2) -- (2,2);
		\draw (1,0.1) -- (1.2,0.1);
		\draw (1,0.2) -- (1.2,0.2);
		\node[anchor=south] at (1.5,2) {$\rho_{evk}$};
		\node[anchor=west] at (1.2,0.2) {$sk$};
		
		%A1
		\node at (2.5,1.875) {$\advA_1$};
		\draw (2,0) rectangle (3,3.75);
		\draw (3,0.5) -- (6,0.5);
		\draw (3,3.5) -- (6,3.5);
		\node[anchor=south west] at (3,0.5) {$R$};
		\node[anchor=south west] at (3,3.5) {$X$};
		\node[anchor=east] at (3,4.5) {$|0^n\rangle\langle0^n|$};
		\node[anchor=east] at (3,5.5) {$r \in_R \{0,1\}$};
		\draw (3,5.5) -- (11.5,5.5);
		\draw (3,5.6) -- (11.5,5.6);
		\draw (3,4.5) -- (11.5,4.5);
		
		%SWAP,Enc
		\filldraw[fill=white] (4.3,3.25) rectangle (5.7,3.75);
		\node at (5,3.5) {$S.\Enc_{sk}$};
		\node at (4,5.53) {$\bullet$};
		\draw (4,5.5) -- (4,3.5);
		\node at (4,4.5) {$\times$};
		\node at (4,3.5) {$\times$};
		
		%A2
		\filldraw[fill=white] (6,0) rectangle (7,3.75);
		\node at (6.5,1.875) {$\advA_2$};
		\draw (7,0.5) -- (11,0.5);
		\draw (7,1.5) -- (8,1.5);
		\draw (7,1.6) -- (8,1.6);
		\draw (7,2.5) -- (8,2.5);
		\draw (7,2.6) -- (7.7,2.6) -- (7.7,4);
		\draw (7.8,4) -- (7.8,2.6) -- (8,2.6);
		\draw (7,3.5) -- (7.7,3.5);
		\draw (7.8,3.5) -- (8,3.5);
		\node[anchor=south west] at (6.9,3.5) {$C_{X'}$};
		\node[anchor=south west] at (7,2.55) {$c$};
		\node[anchor=south west] at (7,1.55) {$log$};
		\node[anchor=south west] at (7,0.5) {$R'$};
		
		%Phi_c
		\filldraw[fill=white] (7.5,4) rectangle (8.5,5);
		\node at (8,4.5) {$\Phi_c$};

		%VerDec
		\draw (8,1) rectangle (9,3.75);
		\node[rotate=90] at (8.5,2.375) {$S.\VerDec_{sk}$};
		\draw (9,1.5) -- (11,1.5);
		\draw (9,1.6) -- (11,1.6);
		\draw (9,2.5) -- (11,2.5);
		\draw (9,2.6) -- (11,2.6);
		\draw (9,3.5) -- (11,3.5);
		\node[anchor=north] at (10,1.6) {$acc/rej$};
		\node[anchor=south west] at (9,2.55) {$c$};
		\node[anchor=south west] at (9,3.5) {$X'$};
		
		%Delete, swap
		\node at (9.75,1.55) {$\bullet$};
		\draw (9.75,1.6) -- (9.75,4.5);
		\filldraw[fill=white] (9.25,4) rectangle (10.25,5);
		\node at (9.75,4.5) {\faBan};
		\node at (10.5,5.53) {$\bullet$};
		\draw (10.5,5.5) -- (10.5,3.5);
		\node at (10.5,4.5) {$\times$};
		\node at (10.5,3.5) {$\times$};
		
		%A3
		\draw (11,0) rectangle (12,3.75);
		\node at (11.5,1.875) {$\advA_3$};
		\draw (12,1.875) -- (12.5,1.875);
		\draw (12,1.975) -- (12.5,1.975);
		\node[anchor=west] at (12.5,1.975) {$r'$};
		\end{tikzpicture}
	}}
	\caption{The indistinguishability game $\VerGame{S}$, as used in the definition of $\kappa$-IND-VER.}
	\label{fig:vergame}
	\end{figure}
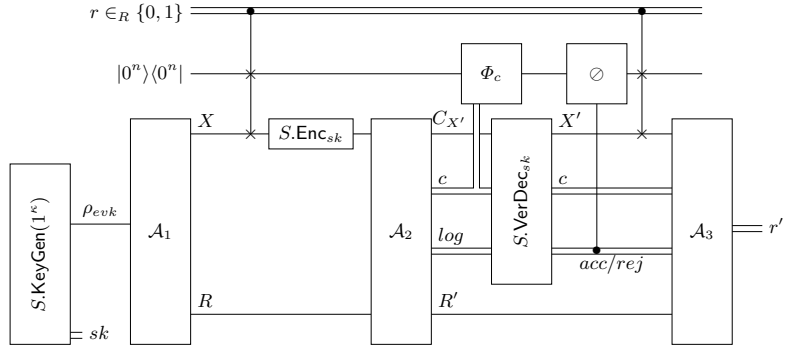
\end{game}
The game is played in several rounds. Based on the evaluation key, the adversary first chooses an input (and some side information in $R$). Based on a random bit $b$ this input is either encrypted and sent to $\advA_2$ (if $b=0$), or swapped out and replaced by a dummy input $\ketbra{0^n}{0^n}$ (if $b=1$). If $b=1$, the ideal channel $\Phi_c$ is applied by the challenger, and the result is swapped back in right before the adversary (in the form of $\advA_3$) has to decide on its output bit $b'$. If $\advA_2$ causes a reject, the real result is also erased by the channel \faBan. We say that the adversary \emph{wins} (expressed as $\VerGame{S} = 1$) whenever $b' = b$. 

\begin{definition}[$\kappa$-IND-VER]
	A vQFHE scheme $S$ has \emph{$\kappa$-indistinguishable verification} if for any QPT adversary $\advA$,
$
	\Pr[\VerGame{S} = 1] \leq \frac{1}{2} + \negl(\kappa).
$
\end{definition}

\begin{theorem}\label{thm:IND-SEM}
	A vQFHE scheme is $\kappa$-IND-VER iff it is $\kappa$-SEM-VER.
\end{theorem}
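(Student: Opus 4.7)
The plan is to prove the two implications separately, exploiting the observation that both the ideal world of SEM-VER (Figure~\ref{fig:SEM-VER}, bottom) and the $r=1$ branch of $\VerGame{S}$ (Figure~\ref{fig:vergame}) implement the same canonical ideal operation on the plaintext: the map $\Phi_c$ followed by the conditional erase $\barctrl\faBan$, with the flag supplied by an $\advA$-dependent, plaintext-independent subroutine. Once this is identified, each direction reduces to invoking the hypothesis once or twice and checking that the corresponding output distributions coincide.

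For SEM-VER $\Rightarrow$ IND-VER, given an IND-VER adversary $(\advA_1,\advA_2,\advA_3)$, I would apply the SEM-VER hypothesis to the subroutine $\VerDec_{sk}\circ\advA_2\circ\Enc_{sk}$ in both branches of $\VerGame{S}$, using a wrapper $\inM$ that runs $\advA_1$ and routes its plaintext register into the encryption (as $X$ for $r=0$, and as $\egoketbra{0^n}$ with the true $X$ moved into the bypass register $R_2$ for $r=1$). Since the simulator promised by SEM-VER depends only on $\advA_2$ (which is the same in both branches) and sees only $R_1$ and $\rho_\evk$ (not the plaintext to be encrypted), its output distribution $(c,\cacc/\crej,R_1')$ is the same whether $X$ or $\egoketbra{0^n}$ was encrypted. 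Moreover, in both branches the plaintext seen by $\advA_3$ ends up being $\barctrl\faBan\circ\Phi_c(X)$: computed internally by $\VerDec$ in $r=0$, and applied externally on the preserved $X$ in $r=1$ (with the SEM-VER-produced $\Phi_c(\egoketbra{0^n})$ discarded). Two negligible hops thus show that the $r=0$ and $r=1$ outputs are statistically indistinguishable, bounding the winning probability by $1/2+\negl(\kappa)$.

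For IND-VER $\Rightarrow$ SEM-VER, I would define the simulator explicitly: $\simS_{sk}$ uses $sk$ to internally prepare $\Enc_{sk}(\egoketbra{0^n})$, runs $\advA$ on the resulting ciphertext together with $R_1$ and $\rho_\evk$, applies $\VerDec_{sk}$ to $\advA$'s output, and emits only $(c,\cacc/\crej,R_1')$, discarding the decrypted plaintext. Then $\Ideal^\simS_{sk}$ acts on the held $X$ exactly as the $r=1$ branch of $\VerGame{S}$ acts on $\advA_1$'s plaintext, while $\Real^\advA_{sk}$ matches the $r=0$ branch. Any SEM-VER pair $(\inM,\disD)$ separating these two channels can be repackaged into an IND-VER adversary (with $\advA_1:=\inM$, $\advA_2:=\advA$, and $\advA_3$ a wrapper that runs $\disD$ and outputs its guess, with a standard sign fix to ensure non-negative advantage); the assumed IND-VER bound then forces that advantage to be negligible.

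The main obstacle will be the careful bookkeeping of registers and of the $\barctrl\faBan$ wiring between the two definitions: specifically, faithfully simulating inside the SEM-VER message generator the bypass register $R_2$ (which is absent in $\VerGame{S}$), and ensuring that the $\cacc/\crej$ flag that drives the \emph{external} $\barctrl\faBan$ in the $r=1$ branch is precisely the one produced by the same $\VerDec$ run that $\simS$ performs internally---otherwise the joint distribution $(c,\cacc/\crej,R_1')$ does not line up between the real and ideal worlds. A secondary subtlety is justifying why $\simS$ must receive $sk$: without it the simulator can neither generate a dummy ciphertext with the correct distribution (the plaintext state may be correlated with $\rho_\evk$) nor run $\VerDec$ to obtain a realistic acceptance flag, both of which are essential for the reduction to go through.
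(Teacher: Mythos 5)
Your proposal is correct and follows essentially the same route as the paper: the IND-VER $\Rightarrow$ SEM-VER direction uses the identical dummy-encrypting simulator $\simS$ (encrypt $\egoketbra{0^n}$, run $\advA$, run $\VerDec_{sk}$, trace out the plaintext) and repackages a distinguishing $(\inM,\disD)$ into an IND-VER adversary, while the SEM-VER $\Rightarrow$ IND-VER direction rests on the same key observation that the simulator never touches the plaintext register, so the ideal-world view of $\advA_3$ is independent of $r$. The only cosmetic difference is that you invoke the SEM-VER guarantee once per branch of the game (a two-hop hybrid) where the paper folds the coin flip and conditional swap into a single message generator $\inM$ and applies the guarantee once; your noted subtleties (register bookkeeping for $R_2$, aligning the $\cacc/\crej$ flag driving $\barctrl\faBan$, and why $\simS$ needs $sk$) are exactly the ones the paper addresses.
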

\begin{proof}
We first show the forward direction. Suppose a scheme $S$ is \emph{not} $\kappa$-SEM-VER. Then there exists a QPT $\advA$ such that for all simulators $\simS$, there exist QPTs $\inM$ and $\disD$ and a polynomial $p$ such that the difference in acceptance probability is at least $1/p(\kappa)$. Choose $\simS$ to be
	\[ 
	\simS : (sk, \rho_{R_1})
	\mapsto
	\tr_{X'}\Big(
		(\VerDec_{sk} \otimes \id_{R_1})
		(\advA(\Enc_{sk}(\ketbra{0^n}{0^n}) \otimes \rho_{R_1}))
	\Big),
	\]
This simulator encrypts a dummy state and feeds it to the adversary; whatever comes out is then checked. Note that in the accept case, the output is wrong, since the claimed circuit is applied to the dummy state instead of the real input. This does not matter, however, because the simulator throws out the result immediately. Since $\algo S$ is a possible simulator, we can let $\inM$ and $\disD$ be as given by the assumption that $\kappa$-SEM-VER is false. 
	
	This allows us to construct a QPT adversary $\advA' = (\advA'_1,\advA'_2,\advA'_3)$ for the VER indistinguishability game $\VerGame{S}$ simply by setting $\advA'_1 = \inM$, $\advA'_2 = (\advA \otimes \id_{R_2})$, and $\advA'_3 = \disD$. Informally, the probability that this adversary wins is
	\[
	\Pr[r = 0]\Pr[\advA'_3 \text{ guesses }0 \mid r = 0] \ \ + \ \  \Pr[r=1]\Pr[\advA'_3 \text{ guesses }1 \mid r = 1]\,.
	\]
More precisely, it is 
	\begin{align*}
	&\frac{1}{2} \Pr\Big[
	\advA'_3\Big(
		(S.\VerDec_{sk} \otimes \id_R)
		(\advA'_2((S.\Enc_{sk} \otimes \id_R)(
			\advA'_1(\rho_{evk})
		)))
	\Big) = 0
	\Big]\\
	+ &\frac{1}{2} \Pr\Big[
		\advA'_3 \Big(
			\big(\id_{X'R'} \otimes \Pi_\cacc \big)
			\big((\Phi_c \otimes \id_{FR'})(\sigma_{XFR'})\big) + \big(\id_{X'R'} \otimes \Pi_\crej \big)
	\big(\Omega \otimes \sigma_{FR'} \big)\Big) = 1
	\Big]
	\end{align*}
	where $F$ is the flag register (accept/reject), and we set $(c, \sigma_{XFR'}) = (\tr_{X'} \circ (\id_X \otimes S.\VerDec_{sk} \otimes \id_{R'}) \circ (\id_X \otimes \advA'_2))(S.\Enc_{sk}(\ketbra{0^n}{0^n}) \otimes \advA'_1(\rho_{evk}))$, and $\Pi_\cacc = \accstate$ and $\Pi_\crej = \rejstate$. This can be seen by following the wires in the indistinguishability game. By our definition of $\advA'$ and $\simS$, this is equal to
	\begin{align*}
	&\frac{1}{2} \Big(1 -
	\Pr\Big[
	\disD\Big(
	\big(S.\VerDec_{sk} \otimes \id_{R'_1R_2}\big)
	\big((\advA \otimes \id_{R_2}) \tau_{C_XR}
	\big)\Big) = 1
	\Big]
	\Big)\\
	+ &\frac{1}{2}\Pr\Big[\disD\Big((\barctrl \faBan \circ\, \Phi_c \circ\, \simS_{sk} ) (\inM(\rho_{evk}))\Big) = 1\Big]
	\end{align*}
where $\tau_{C_XR} = (S.\Enc_{sk} \otimes \id_{R_1R_2})(\inM(\rho_{evk}))$.
	By the assumption that $S$ is not $\kappa$-SEM-VER, this is at least $\frac{1}{2} + 1/p(\kappa)$. Hence, this adversary wins the IND-VER indistinguishability game with nonnegligible probability. 

The reverse direction of the main claim is relatively straightforward. From an arbitrary adversary $\algo A$ for the IND-VER indistinguishability game, we define a semantic adversary, message generator, and distinguisher, that together simulate the game for $\algo A$. The fact that $S$ is $\kappa$-SEM-VER allows us to limit the advantage of the semantic adversary over any simulator, and thereby the winning probability of $\algo A$. For a detailed proof, see Appendix \ref{app:IND-SEM}.
\qed
\end{proof}

%%%%%%%%%%%%%%%%%%%%%%%%%%%%%%%%%%%
\section{\texorpdfstring{$\TC$}{TC}: A partially-homomorphic scheme with verification}
%%%%%%%%%%%%%%%%%%%%%%%%%%%%%%%%%%%
We now present a partially-homomorphic scheme with verification, which will serve as a building block for the fully-homomorphic scheme in Section~\ref{sec:ttp-definition}. It is called $\TC$ (for ``trap code''), and is homomorphic only for $\CNOT$, (classically controlled) Paulis, and measurement in the computational and Hadamard basis. It does not satisfy compactness: as such, it performs worse than the trivial scheme where the client performs the circuit at decryption time. However, $\TC$ lays the groundwork for the vQFHE scheme we present in Section~\ref{sec:ttp-definition}, and as such is important to understand in detail. It is a variant of the trap-code scheme presented in~\cite{BGS13} (which requires classical interaction for $\T$ gates), adapted to our vQFHE framework. A variation also appears in~\cite{BJSW16}, and implicitly in~\cite{SP00}. 

\subsubsection{Setup and encryption.}\label{sec:trap-code-encrypt}
%%%

Let $\CSS$ be a (public) self-dual $[[m,1,d]]$ CSS code, so that $\H$ and $\CNOT$ are transversal. $\CSS$ can correct $d_c$ errors, where $d=2d_c + 1$. We choose $m=\poly(d)$ and large enough that $d_c=\kappa$ where $\kappa$ is the security parameter. The concatenated Steane code satisfies all these requirements. 

We generate the keys as follows. Choose a random permutation $\pi \in_{R} S_{3m}$ of $3m$ letters. Let $n$ be the number of qubits that will be encrypted. For each $i \in \{1, \dots, n\}$, pick bit strings $x[i] \in_{R} \{0,1\}^{3m}$ and $z[i] \in_{R} \{0,1\}^{3m}$. The secret key $sk$ is the tuple $(\pi, x[1], z[1], \dots, x[n], z[n])$, and $\rho_{\evk}$ is left empty.
 
Encryption is per qubit: (i.) the state $\sigma$ is encoded using $\CSS$, (ii.) $m$ computational and $m$ Hadamard `traps' ($\ket0$ and $\ket{+}$ states, see~\cite{BGS13}) are added, (iii.) the resulting $3m$ qubits are permuted by $\pi$, and (iv.) the overall state is encrypted with a quantum one-time pad (QOTP) as dictated by $x = x[i]$ and $z = z[i]$ for the $i$th qubit. We denote the ciphertext by $\enc{\sigma}$. See Algorithm \ref{alg:tc-enc} for details.

\subsubsection{Evaluation.}\label{sec:tc-eval}
%%%%%%%%%%%%

First, consider Pauli gates. By the properties of $\CSS$, applying a logical Pauli is done by applying the same Pauli to all physical qubits. The application of Pauli gates ($\X$ and/or $\Z$) to a state encrypted with a quantum one-time pad can be achieved without touching the actual state, by updating the keys to QOTP in the appropriate way. This is a classical task, so we can postpone the application of the Pauli to $\VerDec$ (recall it gets the circuit description) without giving up compactness for $\TC$. So, formally, the evaluation procedure for Pauli gates is the identity map. Paulis conditioned on a classical bit $b$ which will be known to $\VerDec$ at execution time (e.g., a measurement outcome) can be applied in the same manner. 

Next, we consider $\CNOT$.\label{sec:tc-eval-cnot} To apply a $\CNOT$ to encrypted qubits $\sigma_i$ and $\sigma_j$, we apply $\CNOT$ transversally between the $3m$ qubits of $\enc{\sigma_i}$ and the $3m$ qubits of $\enc{\sigma_j}$. Ignoring the QOTP for the moment, the effect is a transversal application of $\CNOT$ on the pysical data qubits (which, by $\CSS$ properties, amounts to logical $\CNOT$ on $\sigma_i \otimes \sigma_j$), and an application of $\CNOT$ between the $2m$ pairs of trap qubits. Since $\CNOT\ket{00} = \ket{00}$ and $\CNOT\ket{++} = \ket{++}$, the traps are unchanged. Note that $\CNOT$ commutes with the Paulis that form the QOTP. In particular, for all $a,b,c,d \in \{0,1\}$,
$
\CNOT (\X^a_1\Z^b_1 \otimes \X^c_2\Z^d_2) = (\X^a_1\Z^{b \oplus d}_1 \otimes \X^{a \oplus c}_2\Z^d_2)\CNOT.
$
Thus, updating the secret-key bits $(a,b,c,d)$ to $(a,b\oplus d,a\oplus c, d)$ finishes the job. The required key update happens in $\TC.\VerDec$ (see Algorithm~\ref{alg:tc-verdec}). 

Next, consider computational-basis measurements. For $\CSS$, logical measurement is performed by measurement of all physical qubits, followed by a classical decoding procedure~\cite{BGS13}. In $\TC.\Eval$, we measure all $3m$ ciphertext qubits. During $\TC.\VerDec$, the contents of the measured qubits (now a classical string $a \in \{0,1\}^{3m}$) will be interpreted into a logical measurement outcome.

Finally, we handle Hadamard-basis measurements. A transversal application of $\H$ to all $3m$ relevant physical qubits precedes the evaluation procedure for the computational basis measurement. Since $\CSS$ is self-dual, this applies a logical $\H$. Since $\H\ket0 = \ket+$ and $\H\ket+ = \ket0$, all computational traps are swapped with the Hadamard traps. This is reflected in the way $\TC.\VerDec$ checks the traps (see Algorithm~\ref{alg:tc-dec-measurement}). Note that Algorithm~\ref{alg:tc-dec-measurement} is a classical procedure (and thus its accept/reject output flags is classical). $\CSS.\Classical\Decode$ is the classical decoding procedure that calculates the logical measurement outcome~\cite{BGS13}.

\subsubsection{Verification and decryption.}
%%%%%%%%%%%%

If a qubit is unmeasured after evaluation (as stated in the circuit), $\TC.\VerDecQubit$ is applied (Algorithm~\ref{alg:tc-dec-qubit}). This removes the QOTP, undoes the permutation, checks all traps, and decodes the qubit.

If a qubit is measured during evaluation, $\TC.\VerDec$ receives a list $\enc{w}$ of $3m$ physical measurement outcomes for that qubit. These outcomes are classically processed (removing the QOTP by flipping bits, undoing $\pi$, and decoding $\CSS$) to produce the plaintext measurement outcome (see Algorithm~\ref{alg:tc-dec-measurement}). Note that we only checks the $\ket0$ traps in this case. Intuitively, this should not affect security, since any attack that affects only $\ket{+}$ but not $\ket0$ will be canceled by computational basis measurement.Note that Algorithm~\ref{alg:tc-dec-measurement} is a completely classical procedure. $\CSS.\Classical\Decode$ is the classical decoding procedure that recovers the logical measurement outcome from the physical ones~\cite{BGS13}.

The complete procedure $\TC.\VerDec$ updates the QOTP keys according to the gates in the circuit description, and then decrypts all qubits and measurement results as described above (see Algorithm~\ref{alg:tc-verdec}).

\begin{algorithm}
	\caption{$\TC.\Enc((\pi, x, z), \sigma)$}\label{alg:tc-enc}
	\begin{algorithmic}[1]
		\State $\sigma' \gets \CSS.\Encode(\sigma)$
		\State $\sigma'' \gets \textbf{permute}_{\pi}(\sigma' \otimes \ket{0^m} \otimes \ket{+^m})$
		\State $\enc{\sigma} \gets \displaystyle \X^{x} \Z^{z} \sigma'' \X^{x}\Z^{z} $
		\State \textbf{return} $\enc{\sigma}$
	\end{algorithmic}
\end{algorithm}

\begin{algorithm}
	\caption{$\TC.\VerDecQubit((\pi, x,z), \enc{\sigma})$}\label{alg:tc-dec-qubit}
	\begin{algorithmic}[1]
		\State $\sigma \gets \X^{x} \Z^{z} \enc{\sigma} \X^{x}\Z^{z}$
		\State $(\sigma', \text{traps}_X, \text{traps}_Z) \gets \textbf{permute}_{\pi^{-1}}(\sigma)$\Comment{traps$_X$ and traps$_Z$ are quantum states}
		\State Measure traps$_X$ in the computational basis. If not all-zero, \textbf{return} $(\Omega, \rejstatepure)$.
		\State Measure traps$_Z$ in the Hadamard basis. If not all-+, \textbf{return} $(\Omega, \rejstatepure)$.
		\State $\sigma'' \gets \CSS.\Decode($first $m$ qubits of $\sigma')$
		\State \textbf{return} $(\sigma'', \accstatepure)$
	\end{algorithmic}
\end{algorithm}

\begin{algorithm}
	\caption{$\TC.\VerDecMeasurement((\pi, x,z), \enc{w}, basis)$}\label{alg:tc-dec-measurement}
	\begin{algorithmic}[1]
		\If{$basis=+$}
		\State $w' \gets \textbf{permute}_{\pi^{-1}}(x \oplus \enc{w})$
		\State Check the second $m$ bits of $w'$. If not all-zero, \textbf{return} $(0, \texttt{rej})$.
		\EndIf
		\If{$basis=\times$}
		\State $w' \gets \textbf{permute}_{\pi^{-1}}(z \oplus \enc{w})$
		\State Check the third $m$ bits of $w'$. If not all-zero, \textbf{return} $(0, \texttt{rej})$.
		\EndIf
		\State $w'' \gets \CSS.\Classical\Decode($first $m$ bits of $w')$
		\State \textbf{return} $(w'', \texttt{acc})$
	\end{algorithmic}
\end{algorithm}

\begin{algorithm}
\caption{$\TC.\VerDec((\pi,x,z),\enc{\sigma},c)$}\label{alg:tc-verdec}
\begin{algorithmic}[1]
\ForAll{gates $G$ in $c$}
	\If{$G = X_i$}
		\State $x[i] \gets x[i] \oplus \textbf{permute}_{\pi}(1^m0^{2m})$\Comment{update keys (see Section~\ref{sec:tc-eval})}
	\ElsIf{$G = Z_i$}
		 \State $z[i] \gets z[i] \oplus \textbf{permute}_{\pi}(1^m0^{2m})$\Comment{update keys (see Section~\ref{sec:tc-eval})}
	\ElsIf{$G = \CNOT$}
		\State $(x[i], z[i]) (x[j], z[j]) \gets (x[i], z[i]\oplus z[j]) (x[i] \oplus x[j], z[j])$\Comment{update keys}
	\ElsIf{$G$ is a measurement in basis $b$ on qubit $i$}
		\State $(a_i,\mathit{flag}) \gets \TC.\VerDecMeasurement((\pi,x[i],z[i]),\enc{\sigma_i},b)$
		\If{$\mathit{flag} = \texttt{rej}$}
			\State  \textbf{return} $(\Omega, \rejstatepure)$.
		\EndIf
	\EndIf
\EndFor
\State Execute $\TC.\VerDecQubit$ on all unmeasured qubits. If it rejects, \textbf{return} $(\Omega, \rejstatepure)$.
\State $\sigma' \gets$ the list of decrypted qubits (and measurement outcomes $a_i$).
\State $\sigma'' \gets \sigma'$ with all wires that are not part of the output
		of $c$ traced out.
\State \textbf{return} $(\sigma'', \accstatepure)$
		
\end{algorithmic}
\end{algorithm}

\subsubsection{Correctness, compactness, and privacy.}
%%%

For honest evaluation, $\TC.\VerDec$ accepts with probability 1. Correctness is straightforward to check by following the description in Section~\ref{sec:tc-eval}. For privacy, note that the final step in the encryption procedure is the application of a (information-theoretically secure) QOTP with fresh, independent keys. If IND-CPA security is desired, one could easily extend $\TC$ by using a pseudorandom function for the QOTP, as in~\cite{ABFGSS16}.

$\TC$ is not compact in the sense of Definition~\ref{def:compactness-vqfhe}, however. In order to compute the final decryption keys, the whole gate-by-gate key update procedure needs to be executed, aided by the computation log and information about the circuit. Thus, we cannot break $\TC.\VerDec$ up into two separate functionalities, $\Ver$ and $\Dec$, where $\Dec$ can successfully retrieve the keys and decrypt the state, based on only the output ciphertext and the secret key.

\subsubsection{Security of verification.}
%%%

The trap code is proven secure in its application to one-time programs~\cite{BGS13}. Broadbent and Wainewright proved authentication security (with an explicit, efficient simulator)~\cite{BW16}. One can use similar strategies to prove $\kappa$-IND-VER for $\TC$. In fact, $\TC$ satisfies a stronger notion of verifiability, where the adversary is allowed to submit plaintexts in multiple rounds, which are either all encrypted or all swapped out. Two rounds are sufficient for us; the definitions and proof (see Appendix \ref{app:TC-defs} and Appendix \ref{app:TC-proof}) extend straightforwardly to the general case.

\begin{definition}[IND-VER-2 game]\label{def:VER-2-game}
	For an adversary $\advA = (\advA_0, \advA_1, \advA_2, \advA_3)$, a scheme $S$, and a security parameter $\kappa$, $\VerTwoGame{S}$ is shown in Figure~\ref{fig:ver2game}.
	
	\begin{figure}
		\centering
		\begin{tikzpicture}[scale=0.7, every node/.style={scale=0.7}]
		%KeyGen
		\draw (-4,0) rectangle (-3,3);
		\node[rotate=90] at (-3.5,1.5) {$S.\KeyGen(1^{\kappa})$};
		\draw (-3,2) -- (-2,2);
		\draw (-3,0.1) -- (-2.8,0.1);
		\draw (-3,0.2) -- (-2.8,0.2);
		\node[anchor=south] at (-2.5,2) {$\rho_{evk}$};
		\node[anchor=west] at (-2.8,0.2) {$sk$};
		
		%Challenger states
		\node[anchor=east] at (-3,4.5) {$|0^{n_2}\rangle\langle0^{n_2}|$};
		\node[anchor=east] at (-3,5.5) {$|0^{n_1}\rangle\langle0^{n_1}|$};
		\node[anchor=east] at (-3,6.5) {$r \in_R \{0,1\}$};
		\draw (-3,6.5) -- (11.5,6.5);
		\draw (-3,6.6) -- (11.5,6.6);
		\draw (-3,5.5) -- (7.5,5.5);
		\draw (-3,4.5) -- (7.5,4.5);
		
		%A0
		\node at (-1.5,1.875) {$\advA_0$};
		\draw (-2,0) rectangle (-1,3.75);
		\draw (-1,0.5) -- (2,0.5);
		\draw (-1,3.5) -- (2,3.5);
		\node[anchor=south west] at (-1,0.5) {$R$};
		\node[anchor=south west] at (-1,3.5) {$X_1$};
		
		%A1
		\node at (2.5,1.875) {$\advA_1$};
		\draw (2,0) rectangle (3,3.75);
		\draw (3,0.5) -- (6,0.5);
		\draw (3,3.5) -- (6,3.5);
		\node[anchor=south west] at (3,0.5) {$R'$};
		\node[anchor=south west] at (3,3.5) {$X_2$};
		
		%SWAP,Enc (1)
		\filldraw[fill=white] (0.3,3.25) rectangle (1.7,3.75);
		\node at (1,3.5) {$S.\Enc_{sk}$};
		\node at (0,6.53) {$\bullet$};
		\draw (0,6.5) -- (0,3.5);
		\node at (0,5.5) {$\times$};
		\node at (0,3.5) {$\times$};

		%SWAP,Enc (2)
		\filldraw[fill=white] (4.3,3.25) rectangle (5.7,3.75);
		\node at (5,3.5) {$S.\Enc_{sk}$};
		\node at (4,6.53) {$\bullet$};
		\draw (4,6.5) -- (4,3.5);
		\node at (4,4.5) {$\times$};
		\node at (4,3.5) {$\times$};
		
		%A2
		\filldraw[fill=white] (6,0) rectangle (7,3.75);
		\node at (6.5,1.875) {$\advA_2$};
		\draw (7,0.5) -- (11,0.5);
		\draw (7,1.5) -- (8,1.5);
		\draw (7,1.6) -- (8,1.6);
		\draw (7,2.5) -- (8,2.5);
		\draw (7,2.6) -- (7.7,2.6) -- (7.7,4);
		\draw (7.8,4) -- (7.8,2.6) -- (8,2.6);
		\draw (7,3.5) -- (7.7,3.5);
		\draw (7.8,3.5) -- (8,3.5);
		\node[anchor=south west] at (6.9,3.5) {$C_{X'}$};
		\node[anchor=south west] at (7,2.55) {$c$};
		\node[anchor=south west] at (7,1.55) {$log$};
		\node[anchor=south west] at (7,0.5) {$R''$};
		
		%Phi_c
		\filldraw[fill=white] (7.5,4) rectangle (8.5,6);
		\node at (8,5) {$\Phi_c$};
		\draw (8.5,5) -- (11.5,5);

		%VerDec
		\draw (8,1) rectangle (9,3.75);
		\node[rotate=90] at (8.5,2.375) {$S.\VerDec_{sk}$};
		\draw (9,1.5) -- (11,1.5);
		\draw (9,1.6) -- (11,1.6);
		\draw (9,2.5) -- (11,2.5);
		\draw (9,2.6) -- (11,2.6);
		\draw (9,3.5) -- (11,3.5);
		\node[anchor=north] at (10,1.6) {$acc(0)/rej(1)$};
		\node[anchor=south west] at (9,2.55) {$c$};
		\node[anchor=south west] at (9,3.5) {$X'$};
		
		%Delete, swap
		\node at (9.75,1.55) {$\bullet$};
		\draw (9.75,1.6) -- (9.75,4.5);
		\filldraw[fill=white] (9.25,4.5) rectangle (10.25,5.5);
		\node at (9.75,5) {\faBan};
		\node at (10.5,6.53) {$\bullet$};
		\draw (10.5,6.5) -- (10.5,3.5);
		\node at (10.5,5) {$\times$};
		\node at (10.5,3.5) {$\times$};
		
		%A3
		\draw (11,0) rectangle (12,3.75);
		\node at (11.5,1.875) {$\advA_3$};
		\draw (12,1.875) -- (12.5,1.875);
		\draw (12,1.975) -- (12.5,1.975);
		\node[anchor=west] at (12.5,1.975) {$r'$};
		\end{tikzpicture}
		\caption{The game $\VerTwoGame{S}$.}\label{fig:ver2game}
	\end{figure}
\end{definition}

\begin{definition}[$\kappa$-IND-VER-2]
	A vQFHE scheme $S$ satisfies \emph{$\kappa$-IND-VER-2} if for any QPT adversary $\advA$,
$
	\Pr[\VerTwoGame{S} = 1] \leq \frac{1}{2} + \negl(\kappa).
$
\end{definition}

\begin{theorem}\label{thm:tc-security-2}
	$\TC$ is $\kappa$-IND-VER-2 for the above circuit class.
\end{theorem}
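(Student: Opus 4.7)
The plan is to adapt the Pauli-twirl style argument used for authentication security of the trap code~\cite{BGS13, BW16} to the two-round verifiable-homomorphism game. The core idea is that $\TC$ protects every ciphertext qubit with a fresh, independent quantum one-time pad, so any adversarial CP map in $\advA_2$ can, without loss of generality, be replaced by a convex combination of Pauli attacks on the ciphertext registers (followed by a fixed isometry into the adversary's side register). The legal $\Eval$ operations (transversal Paulis, transversal $\CNOT$, and transversal computational/Hadamard-basis measurements) commute with this Pauli attack up to a key update that $\VerDec$ tracks via Algorithm~\ref{alg:tc-verdec}, so it suffices to analyze what happens when a Pauli attack is applied to a freshly produced trap-code ciphertext and then run through $\VerDec$.

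The next step is the trap-detection analysis. For a single logical qubit, the ciphertext lives on $3m$ positions permuted by the secret random $\pi$: $m$ data positions (carrying a $\CSS$ codeword), $m$ computational-trap ($\ket{0}$) positions, and $m$ Hadamard-trap ($\ket{+}$) positions. A Pauli $\X^{a}\Z^{b}$ on these $3m$ positions can only survive $\TC.\VerDecQubit$ or $\TC.\VerDecMeasurement$ if, after unpermuting by $\pi^{-1}$, the $\X$-part avoids every Hadamard-trap position and the $\Z$-part avoids every computational-trap position (for an output qubit; for a measurement in basis $+$/$\times$ only the corresponding traps are checked). Because $\pi$ is uniform and independent of the adversary's view, the probability that any nonzero $(a,b)$ meets this condition is bounded combinatorially: for weight $w>0$, the probability of landing entirely on a specified $m$-sized subset out of $3m$ is at most $\binom{m}{w}/\binom{3m}{w} \le 3^{-w}$. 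Summing over all nonzero Paulis in the twirled mixture yields a negligible total escape probability. Residual errors that do land entirely in the data region have small weight and are either corrected by $\CSS$ (giving exactly $\Phi_c$ applied to the plaintext) or correspond to a logical Pauli, which the $\VerDec$ key update absorbs into the QOTP keys and is therefore indistinguishable from an honest output.

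For the two-round structure, the argument extends as follows. Although the permutation $\pi$ is shared across both encryptions, all QOTP keys $x[i], z[i]$ are still independent across all qubits of both rounds, so the Pauli twirl goes through on the joint ciphertext. The combined attack on the two rounds decomposes into a Pauli mixture on the union of all ciphertext qubits; for each individual logical qubit, the trap-counting bound above still applies because $\pi$ remains uniformly hidden from the adversary (no queries to $\VerDec$ happen until after $\advA_2$ commits). Consequently, conditioned on $\VerDec$ accepting, the output $X'$ is negligibly close in trace distance to $\Phi_c$ applied to whichever plaintext was actually encrypted, which is precisely the state that $\advA_3$ receives in both $r=0$ and $r=1$ branches of the ideal swap-in/swap-out. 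On the reject branch, the $\barctrl\faBan$ channel and the $\VerDec$ output both replace $X'$ with the fixed state $\Omega$, so again the branches coincide. Hence $\Pr[\advA_3 = r] - 1/2$ is upper bounded by the total escape probability, which is $\negl(\kappa)$.

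The main obstacle will be making the two-round twirl precise in the presence of entanglement between $\advA_0, \advA_1, \advA_2$ and their side register $R$. The clean way to handle this is the standard purification trick: lift $\advA$ to an isometry, show that after the QOTP twirl the reduced state on the ciphertext plus reference is a classical-quantum mixture over Pauli labels, and only then apply the trap-counting estimate to each label. Once this bookkeeping is in place, IND-VER-2 for $\TC$ follows directly by comparing the real and ideal distributions branch-by-branch. The detailed hybrid argument and the explicit simulator constructions that witness this reduction are the content of Appendix~\ref{app:TC-proof}.
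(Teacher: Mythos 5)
Your single-round analysis (twirl, trap-counting, $\CSS$ correction of low-weight residual errors, absorption of logical Paulis into the key update) matches the paper's proof of the one-round case in Appendix~\ref{app:TC-proof}. The gap is precisely at the step that makes this IND-VER-\emph{2} rather than IND-VER: you assert that ``the Pauli twirl goes through on the joint ciphertext'' because the QOTP keys are independent across all qubits of both rounds. Independence of the keys from \emph{each other} is not the issue. The twirl requires the state being encrypted to be independent of the keys being averaged over, and in the two-round game the second plaintext $X_2$ is produced by $\advA_1$ \emph{after} it has received $\Enc_{(\pi,x_1,z_1)}(X_1)$. The joint state (first ciphertext, $X_2$, $R'$) is therefore correlated with $(x_1,z_1)$ and $\pi$ --- e.g.\ $\advA_1$ can measure a ciphertext qubit and copy the outcome into $X_2$ --- so a naive simultaneous twirl over all keys does not factor. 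Your parenthetical ``no queries to $\VerDec$ happen until after $\advA_2$ commits'' addresses a decryption-oracle concern, not this adaptivity, which is the actual obstruction. A direct argument can likely be rescued by nesting the twirls in the right order (twirl the round-2 keys first, treating $\advA_1$ and the round-2 encryption as part of a key-independent channel for the subsequent round-1 twirl, after using the commutation identity $\X^{f_c(x)}\Z^{f_c(z)}\pi^{\dagger\otimes n}D = D'\X^x\Z^z\pi^{\dagger\otimes n}$ to align the decryption-side Paulis with the encryption-side ones), but none of this bookkeeping is in your sketch, and it is exactly the nontrivial content of the theorem.

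The paper avoids the issue entirely with a teleportation argument: replace the first ciphertext handed to $\advA_1$ by halves of EPR pairs, perform the Bell measurements that teleport the encoded $X_1$ in only after $\advA_1$ has committed to $X_2$, and absorb the teleportation corrections $a,b$ into the (uniform, hidden) keys $x_1\oplus a$, $z_1\oplus b$. This renders $\advA_1$'s view manifestly independent of the first encryption, so $\advA_0$ and $\advA_1$ merge into a single non-adaptive query and $\Pr[\VerTwoGame{\TC}=1]=\Pr[\VerGame{\TC}=1]$, at which point the already-proved one-round theorem finishes the job. I recommend adopting that reduction rather than re-deriving the twirl for two rounds. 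Two smaller points: the trap roles are swapped in your write-up ($\X$ errors are caught by the computational $\ket{0}$ traps and $\Z$ errors by the Hadamard $\ket{+}$ traps, not the other way around), and the relevant escape event is avoiding the $m$ trap positions of the appropriate type, i.e.\ probability $\binom{2m}{w}/\binom{3m}{w}\le(2/3)^w$, not $\binom{m}{w}/\binom{3m}{w}$.
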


%%%%%%%%%%%%%%%%%%%%%%%%%
\section{\texorpdfstring{$\TTP$}{TrapTP}: Quantum FHE With Verification}\label{sec:ttp-definition}
%%%%%%%%%%%%%%%%%%%%%%%%%

In this section, we introduce our candidate scheme for verifiable quantum fully homomorphic encryption (vQFHE). In this section, we will define the scheme prove correctness, compactness, and privacy. We will show verifiability in Section~\ref{sec:ttp-verifiability}.

Let $\kappa \in \mathbb{N}$ be a security parameter, and let $t, p, h \in \mathbb{N}$ be an upper bound on the number of $\T$, $\P$, and $\H$ gates (respectively) that will be in the circuit which is to be homomorphically evaluated. \todo{Mention something about layered schemes?} As in Section~\ref{sec:trap-code-encrypt}, we fix a self-dual $[[m,1,d]]$ CSS code $\CSS$ which has $m = \poly(d)$ and can correct $d_c := \kappa$ errors (e.g., the concatenated Steane code). We also fix a classical fully homomorphic public-key encryption scheme $\HE$ with decryption in $\mathsf{LOGSPACE}$ (see, e.g., \cite{BV11}). Finally, fix a message authentication code $\MAC = (\Tag, \Ver)$ that is existentially unforgeable under adaptive chosen message attacks (EUF-CMA~\cite{KL14}) from a quantum adversary; for example, one may take the standard pseudorandom-function construction with a post-quantum PRF. This defines an authentication procedure $\MAC.\Sign_k: m \mapsto (m, \MAC.\Tag_k(m))$.

\subsubsection{Key generation and encryption.}
%%%

The evaluation key will require a number of auxiliary states, which makes the key generation algorithm $\TTP.\KeyGen$ somewhat involved (see Algorithm~\ref{alg:ttp-key-gen} and Algorithm~\ref{alg:ttp-gadget-gen}). Note that non-evaluation keys are generated first, and then used to encrypt auxiliary states which are included in the evaluation key (see $\TTP.\Enc$ below). Most states are encrypted using the same `global' permutation $\pi$, but all qubits in the error-correction gadget (except first and last) are encrypted using independent permutations $\pi_i$ (see line~\ref{line:gadget-gen}). The $\T$-gate gadgets are prepared by Algorithm \ref{alg:ttp-gadget-gen}, making use of garden-hose gadgets from~\cite{DSS16}. \arxiv{The structure of these gadgets is described in Figure~\ref{fig:ttp-gadget}.}{}

\begin{algorithm}
	\caption{$\TTP.\KeyGen(1^\kappa, 1^t, 1^p, 1^h)$}\label{alg:ttp-key-gen}
	\begin{algorithmic}[1]
		\State $k \gets \MAC.\KeyGen(1^{\kappa})$
		\State $\pi \gets_R S_{3m}$\Comment{$S_{3m}$ is the permutation group on $3m$ elements}
		\For {i = 0, ..., t}
			\State $(sk_i, pk_i, evk_i) \gets \HE.\KeyGen(1^\kappa)$
		\EndFor
		\State $sk \gets (\pi, k, sk_0, ..., sk_t, pk_0)$
		\For {i = 1, ..., p}\Comment{Magic-state generation for $\P$}
			\State $\mu^{\P}_i \gets \TTP.\Enc(sk, \P\ket{+})$\Comment{See Algorithm~\ref{alg:ttp-enc} for $\TTP.\Enc$}
		\EndFor
		\For {i = 1, ..., t}\Comment{Magic-state generation for $\T$}
		\State $\mu^{\T}_i \gets \TTP.\Enc(sk, \T\ket{+})$
		\EndFor
		\For {i = 1, ..., h}\Comment{Magic-state generation for $\H$}
		\State $\mu^{\H}_i \gets \TTP.\Enc(sk, \frac{1}{\sqrt{2}}(\H \otimes \id)(\ket{00}+\ket{11}))$
		\EndFor
		\For {i = 1, ..., t}\Comment{Gadget generation for $\T$}
			\State $\pi_i \gets_R S_{3m}$
			\State $(g_i, \gin_i, \gmid_i, \gout_i) \gets \TTP.\GadgetGen(sk_{i-1})$\Comment{See Algorithm~\ref{alg:ttp-gadget-gen}}
			\State $\Gamma_i \gets \MAC.\Sign(\HE.\Enc_{pk_i}(g_i, \pi_i)) \otimes \TTP.\Enc((\pi_i, k, sk_0, ..., sk_t, pk_i), \gmid_i) \otimes \TTP.\Enc(sk, \gin_i, \gout_i)$\label{line:gadget-gen}
		\EndFor
		\State $\mathit{keys} \gets \MAC.\Sign(evk_0, ..., evk_t, pk_0, ..., pk_t, \HE.\Enc_{pk_0}(\pi))$
		\State $\rho_{evk} \gets (\mathit{keys}, \mu^{\P}_0, ..., \mu^{\P}_p, \mu^{\T}_0, ..., \mu^{\T}_t, \mu^{\H}_0, ..., \mu^{\H}_h, \Gamma_1, ..., \Gamma_t)$
		\State \textbf{return} $(sk, \rho_{evk})$
	\end{algorithmic}
\end{algorithm}

\begin{algorithm}
	\caption{$\TTP.\GadgetGen(sk_i)$}\label{alg:ttp-gadget-gen}
	\begin{algorithmic}[1]
		\State $g_i \gets g(sk_i)$\Comment{classical description of the garden-hose gadget, see \cite{DSS16}, p. 13}
		\State $(\gin, \gmid, \gout) \gets$ \arxiv{generate $\ket{\Phi^+}$ states depending on $g_i$ as in Figure~\ref{fig:ttp-gadget}.}{generate $\ket{\Phi^+}$ states and arrange them as described by $g_i$. Call the first qubit $\gin_i$ and the last qubit $\gout_i$. The rest forms the state $\gmid_i$.}
		\State \textbf{return} $(g_i, \gin_i, \gmid_i, \gout_i)$
	\end{algorithmic}
\end{algorithm}

\arxiv{
\begin{figure}
\centering
\begin{tikzpicture}
\foreach \x in {0,...,5}
	\filldraw[fill=black] (\x,0) circle (2pt);
\draw[decorate,decoration={snake,amplitude=1pt,segment length=5pt}] (0,0) to[bend right] (2,0);
\draw[decorate,decoration={snake,amplitude=1pt,segment length=5pt}] (1,0) to[bend right] (5,0);
\draw[decorate,decoration={snake,amplitude=1pt,segment length=5pt}] (3,0) to[bend right] (4,0);
\node at (0,0.75) {$\gin_{i}$};
\node at (2.5,0.75) {$\gmid_{i}$};
\node at (5,0.75) {$\gout_{i}$};
\filldraw[fill=white] (3.3,-0.3) rectangle (3.7,0.1);
\node at (3.5,-0.1) {$\P$};
\draw[decoration={brace,raise=8pt,amplitude=5pt}, decorate] (0.75,0) -- (4.25,0);
\end{tikzpicture}
\caption{A garden-hose gadget consists of a number of EPR pairs, arranged in a specific order (described by the classical string$g_i$). The total number of EPR pairs depends on the \emph{garden-hose complexity} of the function $\HE.\Dec$~\cite{BFSS13, DSS16}. On one of the EPR pairs, a $\P$ gate is applied (it is of the form $(\P \otimes \id)\ket{\Phi^+}$). To use the gadget, the evaluator ``teleports in'' a data qubit by performing a Bell measurement between that qubit and $\gin_i$. Then, several Bell measurements are performed on $\gmid_i$, causing the data qubit to either pass the $\P$ gate, or not. Which measurements are performed, depend on classical information held by the evaluator (see also Algorithm~\ref{alg:ttp-cond-p}). The data, possibly with a $\P$ gate applied to it, ends up at $\gout_i$.}\label{fig:ttp-gadget}
\end{figure}
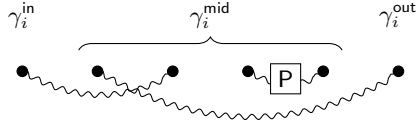
}{}

The encryption of a quantum state is similar to $\TC.\Enc$, only the keys to the QOTP are now chosen during encryption (rather than during key generation) and appended in encrypted and authenticated form to the ciphertext (see Algorithm~\ref{alg:ttp-enc}). Note that the classical secret keys $sk_0$ through $sk_t$ are not used. 

\begin{algorithm}
	\caption{$\TTP.\Enc((\pi, k, sk_0, ..., sk_t, pk), \sigma)$}\label{alg:ttp-enc}
	\begin{algorithmic}[1]
		\State $\enc{\sigma} \gets \displaystyle \sum_{x,z \in \{0,1\}^{3m}} \Big(\TC.\Enc((\pi,x,z),\sigma) \otimes \MAC.\Sign_k(\HE.\Enc_{pk}(x,z))\Big)$\Comment{Algorithm~\ref{alg:tc-enc}}\label{line:ttp-enc}
		\State \textbf{return} $\enc{\sigma}$
	\end{algorithmic}
\end{algorithm}

\subsubsection{Evaluation.}\label{sec:ttp-evaluation}
%%%

Evaluation of gates is analogous to the interactive evaluation scheme using the trap code~\cite{BGS13},  except the interactions are replaced by classical homomorphic evaluation. Evaluation of a circuit $c$ is done gate-by-gate, as follows.

In general, we will use the notation $\enc{\cdot}$ to denote encryptions of classical and quantum states. For example, in the algorithms below, $\enc{\sigma}$ is the encrypted input qubit for a gate and $\enc{x}$ and $\enc{z}$ are classical encryptions of the associated QOTP keys. We will assume throughout that $\HE.\Enc$ and $\HE.\Eval$ produce, apart from their actual output, a complete \emph{computation log} describing all randomness used, all computation steps, and all intermediate results. 

\paragraph{Measurements.}\label{sec:ttp-eval-meas}
%%%

Computational basis measurement is described in Algorithm~\ref{alg:ttp-measure}. Recall that $\TC.\VerDecMeasurement$ is a completely classical procedure that decodes the list of $3m$ measurement outcomes into the logical outcome and checks the relevant traps. Hadamard-basis measurement is performed similarly, except the qubits of $\enc{\sigma}$ are measured in the Hadamard basis and $\HE.\Enc_{pk}(\times)$ is given as the last argument for the evaluation of $\TC.\VerDecMeasurement$.

\begin{algorithm}
	\caption{$\TTP.\Eval\Measure(\enc{\sigma}, \enc{x}, \enc{z},\enc{\pi},pk,evk)$}\label{alg:ttp-measure}
	\begin{algorithmic}[1]
		\State $a = (a_1, ..., a_{3m}) \gets $ measure qubits of $\enc{\sigma}$ in the computational basis
		\State $(\enc{a},log_1) \gets \HE.\Enc_{pk}(a)$
		\State $(\enc{b}, \enc{\flag},log_2) \gets \HE.\Eval_{evk}^{\TC.\VerDecMeasurement}((\enc{\pi},\enc{x},\enc{z}),\enc{a},\HE.\Enc_{pk}(+))$
		\State \textbf{return} $(\enc{b},\enc{\flag},log_1,log_2)$\Comment{$b \in \{0,1\}$ represents the output of the measurement}
	\end{algorithmic}
\end{algorithm}

\paragraph{Pauli gates.}\label{sec:ttp-eval-pauli}
A logical Pauli-$\X$ is performed by (homomorphically) flipping the $\X$-key bits of the QOTP (see Algorithm~\ref{alg:ttp-eval-x}). Since this is a classical operation, the functionality extends straightforwardly to a classically controlled Pauli-$\X$ (by specifying an additional bit $b$ encrypted into $\enc{b}$ that indicates whether or not $\X$ should be applied; see Algorithm~\ref{alg:ttp-eval-conditional-x}). The (classically controlled) evaluation of a Pauli-$\Z$ works the same way, only the relevant bits in $\enc{z}$ are flipped.

\begin{algorithm}
	\caption{$\TTP.\Eval\X(\enc{\sigma}, \enc{x},\enc{\pi},pk,evk)$}\label{alg:ttp-eval-x}
	\begin{algorithmic}[1]
		\State $(\enc{x},log_1) \gets \HE.\Eval_{evk}^{\mathsf{unpermute}}(\enc{\pi},\enc{x})$
		\State $(\enc{x},log_2) \gets \HE.\Eval_{\evk}^{\oplus}(\enc{x}, \HE.\Enc_{pk}(1^m0^{2m}))$\Comment{this flips the first $m$ bits}
		\State $(\enc{x},log_3) \gets \HE.\Eval_{evk}^{\mathsf{permute}}(\enc{\pi},\enc{x})$
		\State \textbf{return} $(\enc{\sigma},\enc{x},log_1, log_2, log_3)$
	\end{algorithmic}
\end{algorithm}

\begin{algorithm}
	\caption{$\TTP.\Eval\Conditional\X(\enc{b}, \enc{\sigma}, \enc{x}, \enc{z}, \enc{\pi},pk,evk)$}\label{alg:ttp-eval-conditional-x}
	\begin{algorithmic}[1]
		\State $(\enc{x},log_1) \gets \HE.\Eval_{evk}^{\mathsf{unpermute}}(\enc{\pi},\enc{x})$
		\State $\enc{s} \gets \HE.\Eval_{evk}^{y \mapsto y^m0^{2m}}(\enc{b})$
		\State $(\enc{x},log_2) \gets \HE.\Eval_{evk}^{\oplus}(\enc{x},\enc{s})$\Comment{this conditionally flips the first $m$ bits}
		\State $(\enc{x},log_3) \gets \HE.\Eval_{evk}^{\mathsf{permute}}(\enc{\pi},\enc{x})$
		\State \textbf{return} $(\enc{\sigma},\enc{x}, \enc{z},log_1, log_2, log_3)$
	\end{algorithmic}
\end{algorithm}

\paragraph{\texorpdfstring{$\CNOT$}{CNOT} gates.}\label{sec:ttp-eval-cnot}
%%%

The evaluation of $\CNOT$ in $\TTP$ is analogous to $\TC$, only the key updates are performed homomorphically during evaluation (see Algorithm~\ref{alg:ttp-eval-cnot}).

\begin{algorithm}
	\caption{$\TTP.\Eval\CNOT(\enc{\sigma_1}, \enc{\sigma_2}, \enc{x_1},\enc{x_2}, \enc{z_1}, \enc{z_2}, \enc{\pi},pk,evk)$}\label{alg:ttp-eval-cnot}
	\begin{algorithmic}[1]
		\State $(\enc{\sigma_1},\enc{\sigma_2}) \gets$ apply $\CNOT$ on all physical qubit pairs of $\enc{\sigma_1}, \enc{\sigma_2}$
		\State $(\enc{x_1}, \enc{x_2}, \enc{z_1}, \enc{z_2},log_1) \gets \HE.\Eval_{evk}^{\CNOT\mathsf{-key-update}}(\enc{x_1},\enc{x_2},\enc{z_1},\enc{z_2})$\Comment{for commutation rules, see Section~\ref{sec:tc-eval-cnot}}
		\State \textbf{return} $(\enc{\sigma_1}, \enc{\sigma_2},\enc{x_1},\enc{x_2}, \enc{z_1}, \enc{z_2}, log_1, log_2)$
	\end{algorithmic}
\end{algorithm}

\paragraph{Phase gates.}
%%%

Performing a $\P$ gate requires homomorphic evaluation of all the above gates: (classically controlled) Paulis, $\CNOT$s, and measurements. We also consume the state $\mu^{\P}_i$ (an encryption of the state $\P\ket{+}$) for the $i$th phase gate in the circuit. The circuit below applies $\P$ to the data qubit (see, e.g., \cite{BGS13}). 
\begin{center}
	\begin{tikzpicture}
	\node[anchor=east] at (0,1) {$\rho$};
	\node[anchor=east] at (0,0) {$\P\ket{+}\bra{+}\P^{\dagger}$};
	\draw (0,0) -- (3.5,0);
	\draw (0,1) -- (2,1);
	%CNOT
	\node at (1,0) {$\bullet$};
	\draw (1,1) circle (0.15);
	\draw (1,0) -- (1,1.15);
	%MEASUREMENT
	\draw (2.3,1) -- (2.3,0);
	\draw (2.4,1) -- (2.4,0);
	\node at (2.35,1) {\meas};
	%XZ
	\filldraw[fill=white] (2,-0.25) rectangle (2.7,0.25);
	\node at (2.35,0) {$\X\Z$};
	%RESULT
	\node[anchor=west] at (3.5,0) {$\P\rho\P^{\dagger}$};
	\end{tikzpicture}
\end{center}
We define $\TTP.\Eval\P$ to be the concatenation of the corresponding gate evaluations. The overall computation log is just a concatenation of the logs.

\paragraph{Hadamard gate.}
%%%

The Hadamard gate can be performed using the same ingredients as the phase gate~\cite{BGS13}. The $i$th gate consumes $\mu^{\H}_i$, an encryption of $(\H \otimes \id)\ket{\Phi^+}$.
\begin{center}
	\begin{tikzpicture}
	\node[anchor=east] at (0,2) {$\rho$};
	\node[anchor=east] at (0,0.5) {$(\H\otimes\id)\ket{\Phi^+}\bra{\Phi^+}(\H\otimes\id)^{\dagger}\Bigg\{$};
	\draw (0,0) -- (3.5,0);
	\draw (0,1) -- (4.5,1);
	\draw (0,2) -- (2.5,2);
	%CNOT
	\node at (1,2) {$\bullet$};
	\draw (1,0) circle (0.15);
	\draw (1,-0.15) -- (1,2);
	%MEASUREMENT1
	\draw (2.3,1) -- (2.3,2);
	\draw (2.4,1) -- (2.4,2);
	\node at (2.35,2) {\meas};
	%MEASUREMENT2
	\draw (3.3,1) -- (3.3,0);
	\draw (3.4,1) -- (3.4,0);
	\node at (3.35,0) {\meas};
	\node at (3.15,0.1) {$\scriptstyle{\mathsf{H}}$};
	%Z
	\filldraw[fill=white] (3,0.75) rectangle (3.7,1.25);
	\node at (3.35,1) {$\Z$};
	%X
	\filldraw[fill=white] (2,0.75) rectangle (2.7,1.25);
	\node at (2.35,1) {$\X$};
	%RESULT
	\node[anchor=west] at (4.5,1) {$\H\rho\H^{\dagger}$};
	\end{tikzpicture}
\end{center}

\paragraph{The \texorpdfstring{$\T$}{T} gate.}
%%%

A magic-state computation of $\T$ uses a similar circuit to that for $\P$, using $\mu^{\T}_i$ (an encryption of $\T\ket{+}$) as a resource for the $i$th $\T$ gate:
\begin{center}
	\begin{tikzpicture}
	\node[anchor=east] at (0,1) {$\rho$};
	\node[anchor=east] at (0,0) {$\T\ket{+}\bra{+}\T^{\dagger}$};
	\draw (0,0) -- (3.5,0);
	\draw (0,1) -- (2,1);
	%CNOT
	\node at (1,0) {$\bullet$};
	\draw (1,1) circle (0.15);
	\draw (1,0) -- (1,1.15);
	%MEASUREMENT
	\draw (2.3,1) -- (2.3,0);
	\draw (2.4,1) -- (2.4,0);
	\node at (2.35,1) {\meas};
	%XZ
	\filldraw[fill=white] (2,-0.25) rectangle (2.7,0.25);
	\node at (2.35,0) {$\P\X$};
	%RESULT
	\node[anchor=west] at (3.5,0) {$\T\rho\T^{\dagger}$};
	\end{tikzpicture}
\end{center}
The evaluation of this circuit is much more complicated, since it requires the application of a classically-controlled phase correction $\P$. We will accomplish this using the error-correction gadget $\Gamma_i$.

First, we remark on some subtleties regarding the encrypted classical information surrounding the gadget. Since the structure of $\Gamma_i$ depends on the classical secret key $sk_{i-1}$, the classical information about $\Gamma_i$ is encrypted under the (independent) public key $pk_i$ (see Algorithm~\ref{alg:ttp-key-gen}). This observation will play a crucial role in our proof that $\TTP$ satisfies IND-VER, in Section~\ref{sec:ttp-verifiability}.

The usage of two different key sets also means that, at some point during the evaluation of a $\T$ gate, all classically encrypted information needs to be recrypted from the $(i-1)$st into the $i$th key set. This can be done because $\enc{sk}_{i-1}$ is included in the classical information $g_i$ in $\Gamma_i$. The recryption is performed right before the classically-controlled phase gate is applied (see Algorithm~\ref{alg:ttp-eval-t}).

\begin{algorithm}
	\caption{$\TTP.\Eval\T(\enc{\sigma}, \enc{x}, \enc{z}, \enc{\pi}, \mu_i^{\T}, \Gamma_i, pk_{i-1}, evk_{i-1}, pk_i, evk_i)$}\label{alg:ttp-eval-t}
	\begin{algorithmic}[1]
		\State $(\enc{\sigma_1}, \enc{\sigma_2}, \enc{x_1}, \enc{z_1}, \enc{x_2}, \enc{z_2}, log_1) \gets \TTP.\Eval\CNOT(\mu_i^{\T}, \enc{\sigma}, \enc{x}, \enc{z}, \enc{\pi}, pk_{i-1}, evk_{i-1})$
		\State $(\enc{b},log_2) \gets \TTP.\Eval\Measure(\enc{\sigma_2}, \enc{x_2}, \enc{z_2}, \enc{\pi}, pk_{i-1}, evk_{i-1})$
		\State $log_3 \gets$ recrypt \emph{all} classically encrypted information (except $\enc{b}$) from key set $i-1$ into key set $i$.
		\State $(\enc{\sigma},log_4) \gets \TTP.\Eval\Conditional\P(\enc{b},\enc{\sigma_1}, \enc{x_1}, \enc{z_1}, \Gamma_i,\enc{\pi},pk_i,evk_i)$
		\State \textbf{return} $(\enc{\sigma}, log_1, log_2, log_3, log_4)$
	\end{algorithmic}
\end{algorithm}

Algorithm~\ref{alg:ttp-cond-p} shows how to use $\Gamma_i$ to apply logical $\P$ on an encrypted quantum state $\enc{\sigma}$, conditioned on a classical bit $b$ for which only the encryption $\enc{b}$ is available. When $\TTP.\Eval\Conditional\P$ is called, $b$ is encrypted under the $(i-1)$st classical $\HE$-key, while all other classical information (QOTP keys $x$ and $z$, permutations $\pi$ and $\pi_i$, classical gadget description $g_i$) is encrypted under the $i$th key. Note that we can evaluate Bell measurements using only evaluation of $\CNOT$, computational-basis measurements, and $\H$-basis measurements. In particular, no magic states are needed to perform a Bell measurement.
After this procedure, the data is in qubit $\enc{\gout_i}$. The outcomes $a_1, a_2, a$ of the Bell measurements determine how the keys to the QOTP must be updated.

\begin{algorithm}
	\caption{${\TTP.\Eval\Conditional\P(\enc{b},\enc{\sigma}, \enc{x}, \enc{z}, \Gamma_i  = (\enc{g_i}, \enc{\pi_i}, \enc{\gin_i}, \enc{\gmid_i}, \enc{\gout_i}), \enc{\pi}, pk_i, evk_i)}$}\label{alg:ttp-cond-p}
	\begin{algorithmic}[1]
		\State $(\enc{a_1}, \enc{a_2}, log_1) \gets$ evaluate Bell measurement between $\enc{\sigma}$ and $\enc{\gin_i}$\Comment{$a_1,a_2 \in \{0,1\}$}
		\State $(\enc{a},log_2) \gets$ evaluate Bell measurements in $\enc{\gmid_i}$ as dictated by the ciphertext $\enc{b}$ and the garden-hose protocol for $\HE.\Dec$
		\State $(\enc{x},\enc{z}, log_3) \gets \HE.\Eval_{evk_{i}}^{\mathsf{T-key-update}}(\enc{x},\enc{z},\enc{a_1},\enc{a_2},\enc{a}, \enc{g_i})$
		\State \textbf{return} $(\enc{\gout_i}, \enc{x}, \enc{z}, log_1, log_2, log_3)$
	\end{algorithmic}
\end{algorithm}

\subsubsection{Verified Decryption.}
%%%

The decryption procedure (Algorithm~\ref{alg:ttp-verdec}) consists of two parts. First, we perform several classical checks. This includes MAC-verification of all classically authenticated messages, and checking that the gates listed in the log match the circuit description. We also check the portions of the log which specify the (purely classical, FHE) steps taken during $\HE.\Enc$ and $\HE.\Eval$; this is the standard transcript-checking procedure for FHE, which we call $\TTP.\CheckLog$. Secondly, we check all unmeasured traps and decode the remaining qubits. We reject if $\TTP.\CheckLog$ rejects, or if the traps have been triggered.
\begin{algorithm}
	\caption{$\TTP.\VerDec(sk, \enc{\sigma}, (\enc{x[i]})_i, (\enc{z[i]})_i, log, c)$}\label{alg:ttp-verdec}
	\begin{algorithmic}[1]
		\State Verify classically authenticated messages (in $log$) using $k$ (contained in $sk$). If one of these verifications rejects, \textbf{reject}.\label{line:mac-check}
		\State Check whether all claimed gates in $log$ match the structure of $c$. If not,  \textbf{return} $(\Omega, \rejstatepure)$.\Comment{Recall that $\Omega$ is a dummy state.}
		\State $\flag \gets \TTP.\CheckLog(log)$ If $\flag = $ rej, \textbf{return} $(\Omega, \rejstatepure)$.
		\State Check whether the claimed final QOTP keys in the $log$ match $\enc{x}$ and $\enc{z}$. If not, \textbf{return} $(\Omega, \rejstatepure)$.
		\ForAll{gates $G$ of $c$}
			\If{$G$ is a measurement}
				\State $\enc{x'},\enc{z'} \gets $ encrypted QOTP keys right before measurement (listed in $log$)
				\State $\enc{w} \gets$ encrypted measurement outcomes (listed in $log$)
				\State $x', z', w \gets \HE.\Dec_{sk_t}(\enc{x'},\enc{z'},\enc{w})$\label{line:decrypt1}
				\State Execute $\TC.\VerDecMeasurement((\pi, x', z'),w,\mathit{basis})$, where $\mathit{basis}$ is the appropriate basis for the measurement, and store the (classical) outcome.
				\If{a trap is triggered}
					\State  \textbf{return} $(\Omega, \rejstatepure)$. \label{line:ttp-verdec-end-of-ver}
				\EndIf
			\EndIf
		\EndFor
		\ForAll{unmeasured qubits $\enc{\sigma_i}$ in $\enc{\sigma}$}\label{line:decryptqubit}
			\State $x[i],z[i] \gets \HE.\Dec_{sk_t}(\enc{x[i]},\enc{z[i]})$\label{line:decrypt2}
			\State $\sigma_i \gets \TC.\VerDec_{(\pi, x[i], z[i])}(\enc{\sigma_i})$. If $\TC.\VerDec$ rejects, \textbf{return} $(\Omega,\rejstatepure)$.
		\EndFor
		\State $\sigma \gets $ the list of decrypted qubits (and measurement outcomes) that are part of the output of $c$
		\State \textbf{return} $(\sigma, \accstatepure)$\label{line:ttp-verdec-end-of-dec}
	\end{algorithmic}
\end{algorithm}

\subsection{Correctness, compactness, and privacy}

If all classical computation was unencrypted, checking correctness of $\TTP$ can be done by inspecting the evaluation procedure for the different types of gates, and comparing them to the trap code construction in~\cite{BGS13}. This suffices, since $\HE$ and the $\MAC$ authentication both satisfy correctness.

Compactness as defined in Definition~\ref{def:compactness-vqfhe} is also satisfied: verifying the computation log and checking all intermediate measurements (up until line~\ref{line:ttp-verdec-end-of-ver} in Algorithm~\ref{alg:ttp-verdec}) is a completely classical procedure and runs in polynomial time in its input. The rest of $\TTP.\VerDec$ (starting from line~\ref{line:decryptqubit}) only uses the secret key and the ciphertext ($\enc{\sigma}, \enc{x}, \enc{z})$ as input, not the log or the circuit description. Thus, we can separate $\TTP.\VerDec$ into two algorithms $\Ver$ and $\Dec$ as described in Definition~\ref{def:compactness-vqfhe}, by letting the second part ($\Dec$, lines~\ref{line:decryptqubit} to~\ref{line:ttp-verdec-end-of-dec}) reject whenever the first part ($\Ver$, lines~\ref{line:mac-check} to~\ref{line:ttp-verdec-end-of-ver}) does. It is worth noting that, because the key-update steps are performed homomorphically during the evaluation phase, skipping the classical verification step yields a QFHE scheme without verification that satisfies Definition~\ref{def:compactness-qfhe} (and is authenticating). This is not the case for the scheme $\TC$, where the classical computation is necessary for the correct decryption of the output state.

In terms of privacy, $\TTP$ satisfies IND-CPA (see Section~\ref{sec:def-privacy}). This is shown by reduction to IND-CPA of $\HE$. This is non-trivial since the structure of the error-correction gadgets depends on the classical secret key. The reduction is done in steps, where first the security of the encryptions under $pk_{t}$ is applied (no gadget depends on $sk_t$), after which the quantum part of the gadget $\Gamma_t$ (which depends on $sk_{t-1}$) looks completely mixed from the point of view of the adversary. We then apply indistinguishability of the classical encryptions under $pk_{t-1}$, and repeat the process. After all classical encryptions of the quantum one-time pad keys are removed, the encryption of a state appears fully mixed. Full details of this proof can be found in Lemma~1 of~\cite{DSS16}, where IND-CPA security of an encryption function very similar to $\TTP.\Enc$ is proven.

\section{Proof of verifiability for \texorpdfstring{$\TTP$}{TrapTP}}\label{sec:ttp-verifiability}
In this section, we will prove that $\TTP$ is $\kappa$-IND-VER. By Theorem~\ref{thm:IND-SEM}, it then follows that $\TTP$ is also verifiable in the semantic sense. We will define a slight variation on the VER indistinguishability game, followed by several hybrid schemes (variations of the $\TTP$ scheme) that fit into this new game. We will argue that for any adversary, changing the game or scheme does not significantly affect the winning probability. After polynomially-many such steps, we will have reduced the adversary to an adversary for the somewhat homomorphic scheme $\TC$, which we already know to be IND-VER. This will complete the argument that $\TTP$ is IND-VER. The IND-VER game is adjusted as follows.\label{sec:security-hybrid}

\begin{definition}[Hybrid game $\HybridGame{S}$]\label{def:hybrid-game}
	For an adversary $\advA = (\advA_1, \advA_2, \advA_3)$, a scheme $S$, and security parameter $\kappa$, $\HybridGame{S}$ is the game in Figure~\ref{fig:hybgame}.
	
	\begin{figure}
		\centering
		\makebox[\textwidth][c]{
		\begin{tikzpicture}[scale = 0.92]
		%KeyGen
		\draw (0,0) rectangle (1,3);
		\node[rotate=90] at (0.5,1.5) {$S.\KeyGen(1^{\kappa})$};
		\draw (1,2) -- (2,2);
		\draw (1,0.1) -- (1.2,0.1);
		\draw (1,0.2) -- (1.2,0.2);
		\node[anchor=south] at (1.5,2) {$\rho_{evk}$};
		\node[anchor=west] at (1.2,0.2) {$sk$};
		
		%A1
		\node at (2.5,1.875) {$\advA_1$};
		\draw (2,0) rectangle (3,3.75);
		\draw (3,0.5) -- (6,0.5);
		\draw (3,3.5) -- (6,3.5);
		\node[anchor=south west] at (3,0.5) {$R$};
		\node[anchor=south west] at (3,3.5) {$X$};
		\node[anchor=east] at (3,5.5) {$|0^n\rangle\langle0^n|$};
		\node[anchor=east] at (3,6.5) {$r \in_R \{0,1\}$};
		\draw (3,6.5) -- (11.5,6.5);
		\draw (3,6.6) -- (11.5,6.6);
		\draw (3,5.5) -- (11.5,5.5);
		
		%SWAP,Enc
		\filldraw[fill=white] (4.3,3.25) rectangle (5.7,3.75);
		\node at (5,3.5) {$S.\Enc_{sk}$};
		\node at (4,6.53) {$\bullet$};
		\draw (4,6.5) -- (4,3.5);
		\node at (4,5.5) {$\times$};
		\node at (4,3.5) {$\times$};
		
		%A2
		\filldraw[fill=white] (6,0) rectangle (7,3.75);
		\node at (6.5,1.875) {$\advA_2$};
		\draw (7,0.5) -- (8.5,0.5); \draw (8.6,0.5) -- (11,0.5);
		\draw (7,1.5) -- (8,1.5);
		\draw (7,1.6) -- (8,1.6);
		\draw (7,2.5) -- (8,2.5);
		\draw (7,2.6) -- (7.7,2.6) -- (7.7,5);
		\draw (7.8,5) -- (7.8,2.6) -- (8,2.6);
		\draw (7,3.5) -- (7.7,3.5);
		\draw (7.8,3.5) -- (8,3.5);
		\node[anchor=south west] at (6.9,3.5) {$C_{X'}$};
		\node[anchor=south west] at (7,2.55) {$c$};
		\node[anchor=south west] at (7,1.55) {$log$};
		\node[anchor=south west] at (7,0.5) {$R'$};
		
		%Phi_c
		\filldraw[fill=white] (7.5,5) rectangle (8.5,6);
		\node at (8,5.5) {$\Phi_c$};

		%VerDec
		\draw (8,1) rectangle (9,3.75);
		\node[rotate=90] at (8.5,2.375) {$S.\VerDec_{sk}$};
		\draw (9,1.5) -- (11,1.5);
		\draw (9,1.6) -- (11,1.6);
		\draw (9,2.5) -- (11,2.5);
		\draw (9,2.6) -- (11,2.6);
		\draw (9,3.5) -- (11,3.5);
		\node[anchor=north] at (10,1.6) {\small $acc/rej$};
		\node[anchor=south west] at (9,2.55) {$c$};
		\node[anchor=south west] at (9,3.5) {$X'$};
		
		%Delete, swap
		\node at (9.75,1.55) {$\bullet$};
		\draw (9.75,1.6) -- (9.75,5.5);
		\filldraw[fill=white] (9.25,5) rectangle (10.25,6);
		\node at (9.75,5.5) {\faBan};
		\node at (10.5,6.53) {$\bullet$};
		\draw (10.5,6.5) -- (10.5,3.5);
		\node at (10.5,5.5) {$\times$};
		\node at (10.5,3.5) {$\times$};
		
		%A3
		\draw (11,0) rectangle (12,3.75);
		\node at (11.5,1.875) {$\advA_3$};
		\draw (12,1.875) -- (12.5,1.875);
		\draw (12,1.975) -- (12.5,1.975);
		\node[anchor=west] at (12.5,1.975) {$r'$};
		
		%Extra wire from Enc to VerDec:
		\draw (5.5,3.75) -- (5.5,4.3) -- (7.7,4.3);
		\draw (7.8,4.3) -- (8.3,4.3) -- (8.3,3.75);
		\draw (5.6,3.75) -- (5.6,4.2) -- (7.7,4.2);
		\draw (7.8,4.2) -- (8.2,4.2) -- (8.2,3.75);
		
		%Extra wire from KeyGen to VerDec:
		\draw (0.9,0) -- (0.9,-0.5) -- (8.25,-0.5) -- (8.25,1);
		\draw (0.6,0) -- (0.6,-1) -- (8.6,-1) -- (8.6,1);
		\draw (0.7,0) -- (0.7,-0.9) -- (8.5,-0.9) -- (8.5,1);
		\end{tikzpicture}
	}
	\caption{The hybrid indistinguishability game $\HybridGame{S}$, which is a slight variation on $\VerGame{S}$ from Figure~\ref{fig:vergame}.}\label{fig:hybgame}
	\end{figure}
\end{definition}

Comparing to Definition~\ref{def:VER-game}, we see that three new wires are added: a classical wire from $S.\Enc$ to $S.\VerDec$, and a classical and quantum wire from $S.\KeyGen$ to $S.\VerDec$. We will later adjust $\TTP$ to use these wires to bypass the adversary; $\TTP$ as defined in the previous section does not use them. Therefore, for any adversary,
$\Pr[\VerGame{\TTP} = 1] = \Pr[\HybridGame{\TTP} = 1].$

\subsubsection{Hybrid 1: Removing Classical MAC.}
%%%

In $\TTP$, the initial keys to the QOTP can only become known to $\VerDec$ through the adversary. We thus use $\MAC$ to make sure these keys cannot be altered. Without this authentication, the adversary could, e.g., homomorphically use $\enc{\pi}$ to flip only those bits in $\enc{x}$ that correspond to non-trap qubits, thus applying $\X$ to the plaintext. In fact, all classical information in the evaluation key must be authenticated.

In the first hybrid, we argue that the winning probability of a QPT $\advA$ in $\HybridGame{\TTP}$ is at most negligibly higher than in $\HybridGame{\TTP'}$, where $\TTP'$ is a modified version of $\TTP$ where the initial keys are sent directly from $\KeyGen$ and $\Enc$ to $\VerDec$ (via the extra wires above). More precisely, in $\TTP'.\KeyGen$ and $\TTP'.\Enc$, whenever  $\MAC.\Sign(\HE.\Enc(x))$ or $\MAC.\Sign(x)$ is called, the message $x$ is also sent directly to $\TTP'.\VerDec$. Moreover, instead of decrypting the classically authenticated messages sent by the adversary, $\TTP'.\VerDec$ uses the information it received directly from $\TTP'.\KeyGen$ and $\TTP'.\Enc$. It still check whether the computation log provided by the adversary contains these values at the appropriate locations and whether the $\MAC$ signature is correct. The following fact is then a straightforward consequence of the EUF-CMA property of $\MAC$.

Recall that all adversaries are QPTs, i.e., quantum polynomial-time uniform algorithms. Given two hybrid games $H_1, H_2$, and a QPT adversary $\advA$, define
	\[
\AdvHyb{H_1}{H_2} := 
\bigl|\Pr[\HybridGame{\H_1} = 1] - \Pr[\HybridGame{\H_2} = 1]\bigr|\,.
	\]

\begin{lemma}\label{lem:remove-MAC}
	For any QPT $\advA$, $\AdvHyb{\TTP}{\TTP'} \leq \negl(\kappa)$.
\end{lemma}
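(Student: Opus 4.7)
The proof plan is a standard reduction from the EUF-CMA security of $\MAC$. The key observation is that $\TTP.\VerDec$ and $\TTP'.\VerDec$ differ only in how they obtain the classical information (QOTP keys, permutations, gadget data, evaluation keys): $\TTP$ extracts it from the adversary-supplied $\log$ after verifying the MAC tags, while $\TTP'$ reads it directly from the bypass wires coming out of $\KeyGen$ and $\Enc$. In both schemes the classical $\log$ is still required to carry $\MAC$-authenticated messages at the appropriate locations, and in both schemes a failed tag verification causes a reject. Consequently, the outputs of the two games can only differ on an execution in which the computation log contains, at some authenticated slot, a message $m^{*}$ that (i.)~differs from the message $m$ actually signed by $\KeyGen$ or $\Enc$ at that slot, yet (ii.)~is accompanied by a tag $\tau^{*}$ that passes $\MAC.\Ver_k$. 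Call this the \emph{forgery event} $F$; outside of $F$, the two games produce identical distributions on transcripts, so $\AdvHyb{\TTP}{\TTP'} \leq \Pr[F]$.

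Next, I would build a QPT reduction $\B$ that, given access to a $\MAC.\Sign_k$ oracle (with $k$ unknown to $\B$), simulates $\HybridGame{\TTP}$ for $\advA$ perfectly and outputs a valid forgery whenever $F$ occurs. Since only the symmetric $\MAC$ key $k$ is involved in the authentication step, $\B$ can run $\KeyGen$ and $\Enc$ exactly as specified, except that every call of the form $\MAC.\Sign_k(m)$ is replaced by a query to the signing oracle; this gives $\B$ the ability to produce all the authenticated messages that appear in $\rho_{\evk}$ and in the ciphertext. The rest of the simulation, including $\Phi_c$, the swap-out/swap-in, and running $\VerDec$ on the adversary's output, uses no further access to $k$ and can be carried out honestly using the values of $m$ that $\B$ itself chose to sign. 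On conclusion, $\B$ scans the log for the first MAC-authenticated slot on which the adversary's claimed message $m^{*}$ differs from the corresponding $m$ that $\B$ queried, and outputs $(m^{*}, \tau^{*})$ as its forgery attempt; by construction $m^{*}$ was never queried to the signing oracle, so whenever event $F$ occurs, $\B$ succeeds.

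EUF-CMA security of $\MAC$ against quantum adversaries therefore yields $\Pr[F] = \Pr[\B \text{ forges}] \leq \negl(\kappa)$, whence $\AdvHyb{\TTP}{\TTP'} \leq \negl(\kappa)$, as claimed. I expect the only delicate point to be the book-keeping: $\KeyGen$ and $\Enc$ together invoke $\MAC.\Sign_k$ on several messages ($\mathit{keys}$, each $\HE.\Enc_{pk_i}(g_i,\pi_i)$, and each $\HE.\Enc_{pk}(x,z)$), and one must argue that a per-slot position check is enough to isolate a forgery rather than a benign re-use of an already-signed message in a wrong slot. This is handled by noting that $\TTP.\VerDec$ already rejects when the positions in the log do not match the expected circuit structure, so any surviving discrepancy between $m^{*}$ and $m$ at a fixed slot yields a genuine forgery on a message not previously queried, completing the argument.
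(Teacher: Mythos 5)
Your reduction to EUF-CMA security of $\MAC$ is exactly the argument the paper intends: the paper gives no written proof of this lemma beyond asserting it is a ``straightforward consequence of the EUF-CMA property of $\MAC$,'' and your forgery-event decomposition plus signing-oracle simulation is the standard way to make that one-liner precise. The only point worth tightening is that your reduction $\B$ need not (and, lacking $k$, cannot) perform $\VerDec$'s tag checks --- it can simply halt after $\advA_2$ and output the first mismatched slot's $(m^{*},\tau^{*})$, since the forgery event is already fully determined by the log at that point; this also sidesteps any worry about querying $m^{*}$ to the signing oracle while verifying.
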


\subsubsection{Hybrid 2: Removing Computation Log.}\label{sec:security-computation-log}
%%%

In $\TTP$ and $\TTP'$, the adversary (homomorphically) keeps track of the keys to the QOTP and stores encryptions of all intermediate values in the computation log. Whenever $\VerDec$ needs to know the value of a key (for example to check a trap or to decrypt the final output state), the relevant entry in the computation log is decrypted.

In $\TTP'$, however, the plaintext initial values to the computation log are available to $\VerDec$, as they are sent through the classical side channels. This means that whenever $\VerDec$ needs to know the value of a key, instead of decrypting an entry to the computation log, it can be computed by ``shadowing" the computation log in the clear.

For example, suppose the log contains the encryptions $\enc{b_1}, \enc{b_2}$ of two initial bits, and specifies the homomorphic evaluation of XOR, resulting in $\enc{b}$ where $b = b_1 \oplus b_2$. If one knows the plaintext values $b_1$ and $b_2$, then one can compute $b_1 \oplus b_2$ directly, instead of decrypting the entry $\enc{b}$ from the computation log.

We now define a second hybrid, $\TTP''$, which differs from $\TTP'$ exactly like this: $\VerDec$ still verifies the authenticated parts of the log, checks whether the computation log matches the structure of $c$, and checks whether it is syntactically correct. However, instead of decrypting values from the log (as it does in $\TTP.\VerDec$, Algorithm~\ref{alg:ttp-verdec}, on lines~\ref{line:decrypt1} and~\ref{line:decrypt2}), it computes those values from the plaintext initial values, by following the computation steps that are claimed in the log. By correctness of classical FHE, we then have the following.

\begin{lemma}\label{lem:remove-log}
	For any QPT $\advA$,
	$\AdvHyb{\TTP'}{\TTP''} \leq \negl(\kappa)$.
\end{lemma}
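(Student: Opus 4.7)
The plan is to couple the two hybrid executions so that every randomness source outside $\VerDec$ is shared, and then to show that the internal difference between $\TTP'.\VerDec$ and $\TTP''.\VerDec$ vanishes except with negligible probability by appealing to the correctness of the classical FHE scheme $\HE$. Recall that the only change in $\TTP''$ is that, wherever $\TTP'.\VerDec$ would extract a classical value by applying $\HE.\Dec$ to a ciphertext $\enc{v}$ listed in the log, $\TTP''.\VerDec$ instead recomputes $v$ in the clear by executing the same sequence of operations (as recorded in the log) on the plaintext initial values received through the side channel. Correctness of $\HE$ says exactly that these two ways of obtaining $v$ agree with probability $1 - \negl(\kappa)$ per step.

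First I would fix a coupling under which $(sk, \rho_{evk}) \leftarrow \KeyGen$, the output of $\advA_1$, the challenger's coin $r$, the swap, the ideal-channel $\Phi_c$ application, and the outputs of $\advA_2$ (ciphertext, claimed circuit $c$, computation log $L$, side register) are all shared between the two games. Under this coupling, the preliminary checks that $\VerDec$ performs in both hybrids depend only on quantities that are identical in the two runs: the side-channel initial values, the syntactic consistency of $L$ with $c$, the match between the initial and final key ciphertexts listed in $L$ and the externally-known ones, and the outcome of $\TTP.\CheckLog(L)$. Hence these preliminary checks accept or reject identically in $\TTP'$ and $\TTP''$; on the reject branch, both schemes output $(\Omega, \rejstatepure)$ and contribute zero to the distinguishing advantage.

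Conditional on all preliminary checks accepting, $L$ is a valid $\HE$ computation transcript whose input ciphertexts are exactly the authenticated ones, which by construction encrypt precisely the plaintext initial values delivered to $\VerDec$ via the side channel. For each of the $\poly(\kappa)$ intermediate ciphertexts $\enc{v}$ that $\TTP'.\VerDec$ decrypts (whether to check an intermediate trap or to recover a final QOTP key), correctness of $\HE$ applied step by step along $L$ ensures that $\HE.\Dec(\enc{v})$ equals the value $v$ obtained by $\TTP''.\VerDec$ from clear shadowing, except with probability $\negl(\kappa)$. A union bound over the polynomially many such extractions collapses the per-step failure events into a single negligible event, outside which $\TTP'.\VerDec$ and $\TTP''.\VerDec$ make identical trap-check decisions, retrieve identical final QOTP keys, feed identical inputs into the subsequent $\TC.\VerDec$ calls, and therefore produce identically distributed outputs to $\advA_3$, yielding $\AdvHyb{\TTP'}{\TTP''} \leq \negl(\kappa)$. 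The main obstacle to guard against is that adversarial tampering with $L$ might inflate the per-step FHE-correctness failure probability above $\negl(\kappa)$; this is precluded by $\CheckLog$, which accepts only if every ciphertext in $L$ is either an authenticated initial ciphertext or the honest output of some $\HE.\Eval$ step, in which regimes the standard correctness bound of $\HE$ applies unconditionally.
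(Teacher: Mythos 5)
Your proposal is correct and follows essentially the same route as the paper: both arguments reduce the difference between the two hybrids to the correctness of the classical scheme $\HE$, concluding that the values obtained by decrypting the log and the values obtained by shadowing the computation in the clear coincide except with negligible probability, so the two $\VerDec$ procedures behave identically. The paper phrases this by applying $\HE$-correctness once to the whole composed evaluation (including the $t$ recryption steps), whereas you apply it step by step with a union bound over the polynomially many extractions; your additional remarks on the shared coupling and on $\CheckLog$ ruling out tampered transcripts make explicit what the paper leaves implicit, but do not change the substance of the argument.
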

\begin{proof}
	Let $s$ be the (plaintext) classical information that forms the input to the classical computations performed by the adversary: initial QOTP keys, secret keys and permutations, measurement results, et cetera. Let $f$ be the function that the adversary computes on it in order to arrive at the final keys and logical measurement results.
	By correctness of $\HE$, we have that
	\[
	\Pr[\HE.\Dec_{sk_t}(\HE.\Eval^f_{evk_0, ..., evk_t}(\HE.\Enc_{pk_0}(s))) \neq f(s)] \leq \negl(\kappa).
	\]
	In the above expression, we slightly abuse notation and write $\HE.\Eval_{evk_0, ..., evk_t}$ to include the $t$ recryption steps that are performed during $\TTP.\Eval$. As long as the number of $\T$ gates, and thus the number of recryptions, is polynomial in $\kappa$, the expression holds.
	
	Thus, the probability that $\TTP'.\VerDec$ and $\TTP''.\VerDec$ use different classical values (decrypting from the log vs.\ computing from the initial values) is negligible. Since this is the only place where the two schemes differ, the output of the two $\VerDec$ functions will be identical, except with negligible probability. Thus $\advA$ will either win in both $\HybridGame{\TTP'}$ and $\HybridGame{\TTP''}$, or lose in both, again except with negligible probability.
	\qed
\end{proof}

\subsubsection{More Hybrids: Removing Gadgets.}
%%%

We continue by defining a sequence of hybrid schemes based on $\TTP''$. In $4t$ steps, we will move all error-correction functionality from the gadgets to $\VerDec$. This will imply that the adversary has no information about the classical secret keys (which are involved in constructing these gadgets). This will allow us to eventually reduce the security of $\TTP$ to that of $\TC$.

We remove the gadgets back-to-front, starting with the final gadget. Every gadget is removed in four steps. For all $1 \leq \ell \leq t$, define the hybrids $\TTP^{(\ell)}_1$, $\TTP^{(\ell)}_2$, $\TTP^{(\ell)}_3$, and $\TTP^{(\ell)}_4$ (with $\TTP_4^{(t+1)} := \TTP''$) as follows:

\todo[inline]{We could consider actually writing out the PostGadgetGen and FakeGadgetGen, and perhaps even the hybrid VerDec. It does cost a lot of space and time though, so the decision can go either way.
\\Y: I did write out the PostGadgetGen and FakeGadgetGen, but not VerDec (yet). I don't know if it adds much to do so.}

\arxiv{\begin{enumerate}}{}
\arxiv{\item}{\textbf{1}.} $\TTP^{(\ell)}_1$ is the same as $\TTP_4^{(\ell+1)}$ (or, in the case that $\ell = t$, the same as $\TTP''$), except for the generation of the state $\Gamma_{\ell}$ (see Algorithm~\ref{alg:ttp-key-gen}, line~\ref{line:gadget-gen}). In $\TTP^{(\ell)}_1$, all classical information encrypted under $pk_{\ell}$ is replaced with encryptions of zeros. In particular, for $i \geq \ell$, line~\ref{line:gadget-gen} is adapted to
\begin{align*}
	\Gamma_i \gets 
	&\MAC.\Sign(\HE.\Enc_{pk_i}(00\cdots 0))\\
	& \otimes \TTP''.\Enc'(sk', \gmid_i) \otimes \TTP.\Enc(sk, \gin_i \otimes \gout_i)
\end{align*}
	where $\TTP''.\Enc'$ also appends a signed encryption of zeros, effectively replacing line~\ref{line:ttp-enc} in Algorithm~\ref{alg:ttp-enc} with
	\[\enc{\sigma} \gets \displaystyle \sum_{x,z \in \{0,1\}^{3m}} \Big(\TC.\Enc((\pi,x,z),\sigma) \otimes \MAC.\Sign_k(\HE.\Enc_{pk}(00\cdots0))\Big)
	\]
	It is important to note that in both $\KeyGen$ and $\Enc'$, the information that is sent to $\VerDec$ through the classical side channel is \emph{not} replaced with zeros. Hence, the structural and encryption information about $\Gamma_{\ell}$ is kept from the adversary, and instead is directly sent (only) to the verification procedure. Whenever $\VerDec$ needs this information, it is taken directly from this trusted source, and the all-zero string sent by the adversary will be ignored.
	\arxiv{
	\begin{figure}
	\centering
	\begin{tikzpicture}
\foreach \x in {0,...,5}
	\filldraw[fill=black] (\x,0) circle (2pt);
\draw[decorate,decoration={snake,amplitude=1pt,segment length=5pt}] (0,0) to[bend right] (2,0);
\draw[decorate,decoration={snake,amplitude=1pt,segment length=5pt}] (1,0) to[bend right] (5,0);
\draw[decorate,decoration={snake,amplitude=1pt,segment length=5pt}] (3,0) to[bend right] (4,0);
\filldraw[fill=white] (3.3,-0.3) rectangle (3.7,0.1);
\node at (3.5,-0.1) {$\P$};
\node[anchor=east] at (-0.5,0) {$\HE.\Enc_{pk_{\ell+1}}(00\cdots0)$ + };
\node at (0,0.75) {$\gin_{\ell}$};
\node at (2.5,0.75) {$\gmid_{\ell}$};
\node at (5,0.75) {$\gout_{\ell}$};
\draw[decoration={brace,raise=8pt,amplitude=5pt}, decorate] (0.75,0) -- (4.25,0);
\end{tikzpicture}
	\caption{In $\TTP_1^{\ell}$, all classically encrypted information for the $\ell$th gadget is replaced by zeros. The quantum state remains the same as in $\TTP$ (see Figure~\ref{fig:ttp-gadget}).}\label{fig:ttp-gadget-1}
	\end{figure}
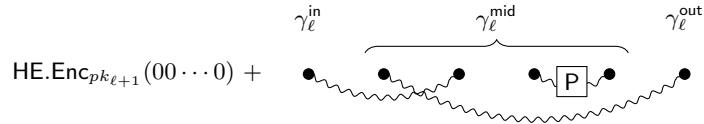
	}{}
	
\arxiv{\item}{\textbf{2.}} $\TTP_2^{(\ell)}$ is the same as $\TTP^{(\ell)}_1$, except that for the $\ell$th gadget, the procedure $\TTP.\Post\GadgetGen$ is called instead of $\TTP.\GadgetGen$:

\begin{algorithm}
	\caption{$\TTP.\Post\GadgetGen(sk_i)$}\label{alg:ttp-post-gadget-gen}
	\begin{algorithmic}[1]
		\State $g_i \gets 0^{|g(sk_i)|}$
		\State $(\gin, \gmid, \gout) \gets$ halves of EPR pairs (send other halves to VerDec)
		\State \textbf{return} $(g_i, \gin_i, \gmid_i, \gout_i)$
	\end{algorithmic}
\end{algorithm}

\todo[inline]{In line 1, we can also choose to give $g(sk_i)$, because it is replaced by zeroes anyway in keygen -- see previous hybrid.}

This algorithm produces a `gadget' in which all qubits are replaced with halves of EPR pairs. These still get encrypted in line~\ref{line:gadget-gen} of Algorithm~\ref{alg:ttp-key-gen}. All other halves of these EPR pairs are sent to $\VerDec$ through the provided quantum channel. $\TTP_2^{(\ell)}.\VerDec$ has access to the structural information $g_{\ell}$ (as this is sent via the classical side information channel from $\KeyGen$ to $\VerDec$) and performs the necessary Bell measurements to recreate $\gin_{\ell}$, $\gmid_{\ell}$ and $\gout_{\ell}$ after the adversary has interacted with the EPR pair halves. Effectively, this postpones the generation of the gadget structure to decryption time. Of course, the measurement outcomes are taken into account by $\VerDec$ when calculating updates to the quantum one-time pad. As can be seen from the description of $\TTP_4^{(\ell)}$, all corrections that follow the $\ell$th one are unaffected by the fact that the server cannot hold the correct information about these postponed measurements, not even in encrypted form.

\arxiv{
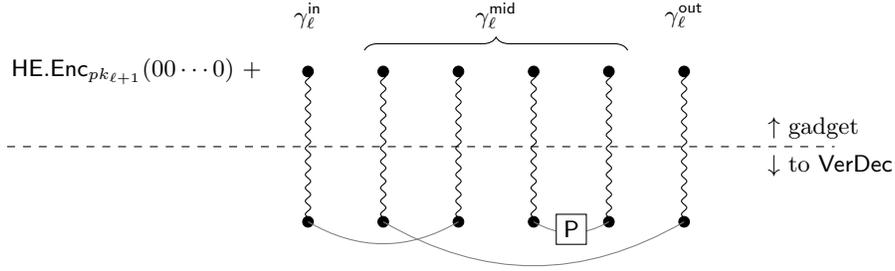
\begin{figure}
\centering
\begin{tikzpicture}
\foreach \x in {0,...,5}
{
	\filldraw[fill=black] (\x,0) circle (2pt);
	\filldraw[fill=black] (\x,-2) circle (2pt);
	\draw[decorate,decoration={snake,amplitude=1pt,segment length=5pt}] (\x,0) to (\x,-2);
}
\draw[dashed] (-4,-1) -- (7,-1);
\node[anchor=south west] at (6,-1) {$\uparrow$ gadget};
\node[anchor=north west] at (6,-1) {$\downarrow$ to $\VerDec$};
\draw[gray] (0,-2) to[bend right] (2,-2);
\draw[gray] (1,-2) to[bend right] (5,-2);
\draw[gray] (3,-2) to[bend right] (4,-2);
\filldraw[fill=white] (3.3,-2.3) rectangle (3.7,-1.9);
\node at (3.5,-2.1) {$\P$};
\node[anchor=east] at (-0.5,0) {$\HE.\Enc_{pk_{\ell+1}}(00\cdots0)$ + };
\node at (0,0.75) {$\gin_{\ell}$};
\node at (2.5,0.75) {$\gmid_{\ell}$};
\node at (5,0.75) {$\gout_{\ell}$};
\draw[decoration={brace,raise=8pt,amplitude=5pt}, decorate] (0.75,0) -- (4.25,0);
\end{tikzpicture}
\caption{In $\TTP_2^{(\ell)}$, the quantum state that consitutes the $\ell$th gadget is replaced with halves of EPR pairs. The other halves are sent to $\VerDec$, where Bell measurements (the gray lines) and the phase gate $\P$ are applied after evaluation.}\label{fig:ttp-gadget-2}
\end{figure}
}{}
	
\arxiv{\item}{\textbf{3.}}$\TTP_3^{(\ell)}$ is the same as $\TTP^{(\ell)}_2$, except that gadget generation for the $\ell$th gadget is handled by $\TTP.\Fake\GadgetGen$ instead of $\TTP.\Post\GadgetGen$.

\begin{algorithm}
	\caption{$\TTP.\Fake\GadgetGen(sk_i)$}\label{alg:ttp-fake-gadget-gen}
	\begin{algorithmic}[1]
		\State $g_i \gets 0^{|g(sk_i)|}$
		\State $(\gin, \gmid, \gout) \gets$ halves of EPR pairs (send other halves to VerDec)
		\State Send $\gmid$ to VerDec as well
		\State \textbf{return} $(g_i, \gin_i, \ket{00\cdots0}, \gout_i)$
	\end{algorithmic}
\end{algorithm}

This algorithm prepares, instead of halves of EPR pairs, $\ket0$-states of the appropriate dimension for $\gmid_{\ell}$. (Note that this dimension does not depend on $sk_{\ell-1}$). For $\gin_{\ell}$ and $\gout_{\ell}$, halves of EPR pairs are still generated, as in $\TTP^{(\ell)}_2$.  Via the side channel, the full EPR pairs for $\gmid_{\ell}$ are sent to $\VerDec$. As in the previous hybrids, the returned gadget is encrypted in $\TTP.\KeyGen$.
	
	$\TTP_3^{(\ell)}.\VerDec$ verifies that the adversary performed the correct Bell measurements on the fake $\ell$th gadget by calling $\TC.\VerDec$. If this procedure accepts, $\TTP_3^{(\ell)}.\VerDec$ performs the verified Bell measurements on the halves of the EPR pairs received from $\TTP_3^{(\ell)}.\KeyGen$ (and subsequently performs the Bell measurements that depend on $g_{\ell}$ on the other halves, as in $\TTP_2^{(\ell)}$). Effectively, $\TTP_3^{(\ell)}.\VerDec$ thereby performs a protocol for $\HE.\Dec$, removing the phase error in the process.
	
\arxiv{
\begin{figure}[h]
\centering
\begin{tikzpicture}
\foreach \x in {0,5}
{
	\filldraw[fill=black] (\x,0) circle (2pt);
	\filldraw[fill=black] (\x,-2) circle (2pt);
	\draw[decorate,decoration={snake,amplitude=1pt,segment length=5pt}] (\x,0) to (\x,-2);
}
\foreach \x in {1,...,4}
{
	\node at (\x,0) {$\ket0$};
	\filldraw[fill=black] (\x,-2) circle (2pt);
	\filldraw[fill=black] (\x,-3) circle (2pt);
	\draw[decorate,decoration={snake,amplitude=1pt,segment length=5pt}] (\x,-2) to (\x,-3);
}
\draw[dashed] (-4,-1) -- (7,-1);
\node[anchor=south west] at (6,-1) {$\uparrow$ gadget};
\node[anchor=north west] at (6,-1) {$\downarrow$ to $\VerDec$};
\draw[gray] (0,-2) to[out=-80,in=-135,looseness=1.5] (2,-3);
\draw[gray] (1,-3) to[out=-25, in=-105,looseness=1.3] (5,-2);
\draw[gray] (3,-3) to[bend right] (4,-3);
\filldraw[fill=white] (3.3,-3.3) rectangle (3.7,-2.9);
\node at (3.5,-3.1) {$\P$};
\node[anchor=east] at (-0.5,0) {$\HE.\Enc_{pk_{\ell+1}}(00\cdots0)$ + };
\node at (0,0.75) {$\gin_{\ell}$};
\node at (2.5,0.75) {$\gmid_{\ell}$};
\node at (5,0.75) {$\gout_{\ell}$};
\draw[decoration={brace,raise=8pt,amplitude=5pt}, decorate] (0.75,0) -- (4.25,0);
\end{tikzpicture}
\caption{In $\TTP_3^{(\ell)}$, all of $\gmid_{\ell}$ is replaced with dummy qubits. $\VerDec$ verifies the Bell measurements performed on these dummy qubits, and performs them on the top halves of the corresponding EPR pairs. Like in $\TTP_2^{(\ell)}$, $\VerDec$ also performs Bell measurements and a $\P$ gate on the lower halves.}\label{fig:ttp-gadget-3}
\end{figure}
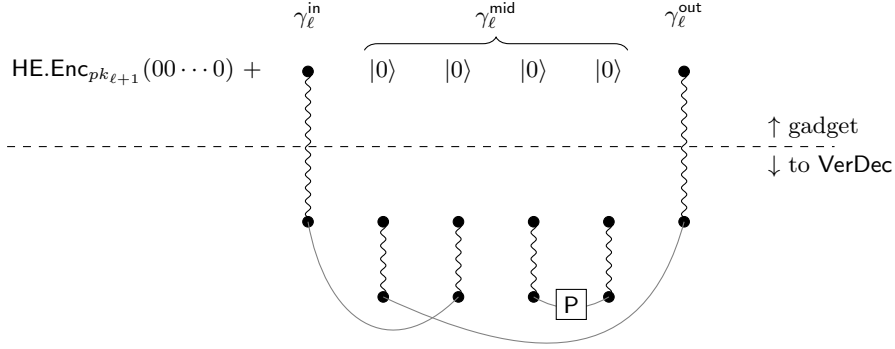
}{}
	
\arxiv{\item}{\textbf{4.}} $\TTP_4^{(\ell)}$ is the same as $\TTP^{(\ell)}_3$, except that $\VerDec$ (instead of performing the Bell measurements of the gadget protocol) uses its knowledge of the initial QOTP keys and all intermediate measurement outcomes to compute whether or not a phase correction is necessary after the $\ell$th $\T$ gate. $\TTP_4^{(\ell)}.\VerDec$ then performs this phase correction on the EPR half entangled with $\gin_{\ell}$, followed by a Bell measurement with the EPR half entangled with $\gout_{\ell}$.

\arxiv{\end{enumerate}}{}

\arxiv{
\begin{figure}[h]
\centering
\begin{tikzpicture}
\foreach \x in {0,5}
{
	\filldraw[fill=black] (\x,0) circle (2pt);
	\filldraw[fill=black] (\x,-2) circle (2pt);
	\draw[decorate,decoration={snake,amplitude=1pt,segment length=5pt}] (\x,0) to (\x,-2);
}
\foreach \x in {1,...,4}
{
	\node at (\x,0) {$\ket0$};
}
\draw[dashed] (-4,-1) -- (7,-1);
\node[anchor=south west] at (6,-1) {gadget};
\node[anchor=north west] at (6,-1) {to $\VerDec$};
\draw[gray] (0,-2) to[bend right] (5,-2);
\filldraw[fill=white] (2.2,-2.9) rectangle (2.8,-2.5);
\node at (2.5,-2.7) {$\P?$};
\node[anchor=east] at (-0.5,0) {$\HE.\Enc_{pk_{\ell+1}}(00\cdots0)$ + };
\node at (0,0.75) {$\gin_{\ell}$};
\node at (2.5,0.75) {$\gmid_{\ell}$};
\node at (5,0.75) {$\gout_{\ell}$};
\draw[decoration={brace,raise=8pt,amplitude=5pt}, decorate] (0.75,0) -- (4.25,0);
\end{tikzpicture}
\caption{In $\TTP_4^{(\ell)}$, the state that the evaluator receives is exactly equal to the state in $\TTP_3^{(\ell)}$ (see Figure~\ref{fig:ttp-gadget-3}). The only difference is the way $\VerDec$ applies the $\P$ gate (conditionally): instead of emulating the gadget usage, $\VerDec$ directly computes whether or not a phase needs to be applied, and performs the teleportation measurement on $\gin_{\ell}$ and $\gout_{\ell}$ accordingly.}\label{fig:ttp-gadget-4}
\end{figure}
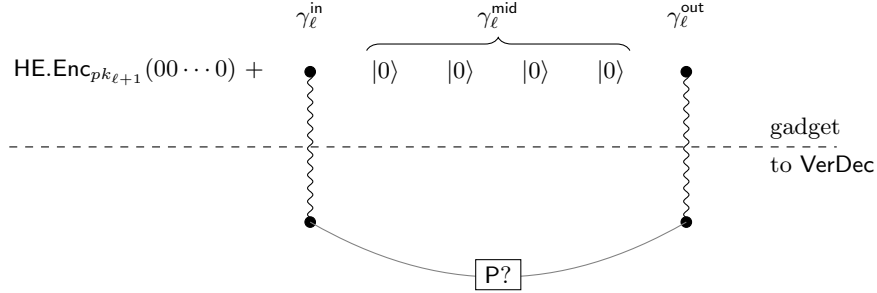
}{}

The first $\ell-1$ gadgets in $\TTP_1^{(\ell)}$ through $\TTP_4^{(\ell)}$ are always functional gadgets, as in $\TTP$. The last $t - \ell$ gadgets are all completely replaced by dummy states, and their functionality is completely outsourced to $\VerDec$. In four steps described above, the functionality of the $\ell$th gadget is also transferred to $\VerDec$. It is important to replace only one gadget at a time, because replacing a real gadget with a fake one breaks the functionality of the gadgets that occur later in the evaluation: the encrypted classical information held by the server does not correspond to the question of whether or not a phase correction is needed. By completely outsourcing the phase correction to $\VerDec$, as is done for all gadgets after the $\ell$th one in all $\TTP_i^{(\ell)}$ schemes, we ensure that this incorrect classical information does not influence the outcome of the computation. Hence, correctness is maintained throughout the hybrid transformations. We now show that these transformations of the scheme do not significantly affect the adversary's winning probability in the hybrid indistinguishability game.

\begin{lemma}\label{lem:hyb-remove-pk}
	For any QPT $\advA$, there exists a negligible function $\negl$ such that for all $1 \leq \ell \leq t$,\\$\AdvHyb{\TTP_1^{(\ell)}}{\TTP_4^{(\ell+1)}} \leq \negl(\kappa)$.
\end{lemma}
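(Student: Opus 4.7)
The plan is to reduce $\AdvHyb{\TTP_1^{(\ell)}}{\TTP_4^{(\ell+1)}}$ to the (quantum) IND-CPA security of the classical FHE scheme $\HE$ under the public key $pk_\ell$. The only syntactic difference between the two schemes is that, in $\TTP_1^{(\ell)}$, every plaintext that is encrypted under $pk_\ell$ inside $\Gamma_\ell$---namely $(g_\ell, \pi_\ell)$ and the one-time-pad keys generated by the $\TTP.\Enc$-call used on $\gmid_\ell$---is swapped for a zero string of the same length, while the corresponding $\MAC$ signatures are then computed on top of the new ciphertexts. Everything else, including the quantum content of $\Gamma_\ell$ and the plaintext side-channel information forwarded directly to $\VerDec$, is produced identically in the two schemes.

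Accordingly, I would construct a reduction $\mathcal{B}$ that plays the IND-CPA game against $\HE$ with the key $pk_\ell$. On input $(pk_\ell, evk_\ell)$, $\mathcal{B}$ self-generates every other ingredient of the $\TTP$ setup: the keys $sk_i, pk_i, evk_i$ for $i \neq \ell$, the $\MAC$ key $k$, the global permutation $\pi$, and the gadget permutations $\pi_i$. It then simulates the hybrid game for $\advA$ internally. Whenever the simulated $\KeyGen$ or $\Enc$ would encrypt a string $s$ under $pk_\ell$, $\mathcal{B}$ submits the pair $(s, 0^{|s|})$ to its IND-CPA challenger and uses the returned ciphertext in place of that encryption; a standard polynomial hybrid over the (polynomially many) such queries absorbs a $\poly(\kappa)$ factor. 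Crucially, $\mathcal{B}$ still knows the real plaintext $s$ and so can forward it over the side channel that $\TTP'$ and $\TTP''$ installed from $\KeyGen$ and $\Enc$ to $\VerDec$; this matches the specification of $\TTP_1^{(\ell)}$, whose description explicitly does not zero out the side-channel copies.

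The key structural observation underlying this reduction is that the simulator never needs $sk_\ell$. Inspecting the $\TTP$ algorithms, $sk_\ell$ would be touched in only two places: (i) at $\KeyGen$ time, through $g_{\ell+1} = g(sk_\ell)$ when building $\Gamma_{\ell+1}$; and (ii) at $\VerDec$ time, to decrypt material encrypted under $pk_\ell$. In $\TTP_4^{(\ell+1)}$ the first use is already absent, since gadgets $\Gamma_{\ell+1}, \dots, \Gamma_t$ are faked and their classical slots are filled with zeros. The second use has been dismantled in hybrid $\TTP''$: $\VerDec$ no longer decrypts anything from the computation log, but instead reconstructs each intermediate classical value by shadowing the claimed computation on the side-channel plaintexts. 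Consequently, $\mathcal{B}$ can run the entire hybrid game---including the quantum gadget generation, the server simulation, and the verified decryption---without ever invoking $sk_\ell$.

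Putting the pieces together, when the IND-CPA challenge bit is $0$, $\mathcal{B}$'s simulation is statistically identical to $\HybridGame{\TTP_4^{(\ell+1)}}$, and when it is $1$ it is identical to $\HybridGame{\TTP_1^{(\ell)}}$; so any non-negligible $\AdvHyb{\TTP_1^{(\ell)}}{\TTP_4^{(\ell+1)}}$ yields a non-negligible IND-CPA advantage for $\mathcal{B}$ against $\HE$, contradicting the assumption. The main obstacle is the careful bookkeeping required for the structural observation above: one has to verify that every single object appearing in $\rho_{evk}$, the ciphertext, and the side channel can in fact be produced from $(pk_\ell, evk_\ell)$ together with $\mathcal{B}$'s self-generated material alone, and in particular that the quantum content of $\Gamma_\ell$ (the EPR pairs laid out by $g(sk_{\ell-1})$, one of which carries a $\P$) depends only on $sk_{\ell-1}$, which $\mathcal{B}$ knows.
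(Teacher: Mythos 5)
Your proposal is correct and follows essentially the same route as the paper: a reduction to the quantum IND-CPA security of $\HE$ under $pk_\ell$, justified by the observation that the simulator never needs $sk_\ell$ because the $(\ell+1)$st gadget has already been faked and $\VerDec$ no longer decrypts from the log after the $\TTP''$ hybrid. The only point the paper adds beyond your sketch is the standard randomization over $\ell$ to obtain a single negligible function valid for all $1 \leq \ell \leq t$, as required by the lemma statement.
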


\begin{proof}[sketch]
	In $\TTP_4^{(\ell+1)}$, no information about $sk_{(\ell)}$ is sent to the adversary. In the original $\TTP$ scheme, the structure of the quantum state $\Gamma_{\ell+1}$ depended on it, but this structure has been replaced with dummy states in several steps in $\TTP_2^{\ell+1}$ through $\TTP_4^{\ell+1}$.
	
	This is fortunate, since if absolutely no secret-key information is present, we are able to bound the difference in winning probability between $\HybridGame{\TTP_4^{(\ell+1)}}$ and $\HybridGame{\TTP_1^{\ell}}$ by reducing it to the IND-CPA security against quantum adversaries~\cite{BJ15} of the classical homomorphic encryption scheme $\HE$.
	
	The proof is closely analogous to the proof of Lemma 1 in~\cite{DSS16}, and on a high level it works as follows. Let $\advA = (\advA_1, \advA_2, \advA_3)$ be a QPT adversary for the game $\HybridGame{\TTP_1^{(\ell)}}$ or $\HybridGame{\TTP_4^{(\ell+1)}}$ (we do not need to specify for which one, since they both require the same input/output interface). A new quantum adversary $\advA'$ for the classical IND-CPA indistinguishability game is defined by having the adversary taking the role of challenger in either the game $\HybridGame{\TTP_1^{(\ell)}}$ or the game $\HybridGame{\TTP_4^{(\ell+1)}}$. Which game is simulated depends on the coin flip of the challenger for the IND-CPA indistinguishability game, and is unknown to $\advA'$. This situation is achieved by having $\advA'$ send any classical plaintext that should be encrypted under $pk_{\ell}$ to the challenger, so that either that plaintext is encrypted or a string of zeros is.
	
	Based on the guess of the simulated $\advA$, which $\advA'$ can verify to be correct or incorrect in his role of challenger, $\advA'$ will guess which of the two games was just simulated. By IND-CPA security of the classical scheme against quantum adversaries, $\advA'$ cannot succeed in this guessing game with nonnegligible advantage over random guessing. This means that the winning probability of $\advA$ in both games cannot differ by a lot. For details, we refer the reader the proof of Lemma~\ref{lem:hyb-remove-gadget}, in which a very similar approach is taken.
	
	Technically, the success probability of $\advA'$, and thus the function $\negl$, may depend on $\ell$. A standard randomizing argument, as found in e.g. the discussion of hybrid arguments in~\cite{KL14}, allows us to get rid of this dependence by defining another adversary $\advA''$ that selects a random value of $j$, and then bounding the advantage of $\advA''$ by a negligible function that is independent of $j$.
	\todo[inline]{(Possibly) extend sketch to full proof?}
	\qed
\end{proof}

\begin{lemma}\label{lem:hyb-postpone-gadget}
	For $1 \leq \ell \leq t$ and any QPT $\advA$, $\AdvHyb{\TTP_1^{(\ell)}}{\TTP_2^{(\ell)}} = 0$.
\end{lemma}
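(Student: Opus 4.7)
The plan is to argue \emph{exact} indistinguishability: the advantage is identically zero, not merely negligible. The only difference between $\TTP_1^{(\ell)}$ and $\TTP_2^{(\ell)}$ lies in \emph{when} the $\ell$th gadget is instantiated. In $\TTP_1^{(\ell)}$ it is built eagerly inside $\KeyGen$ by running $\TTP.\GadgetGen$; in $\TTP_2^{(\ell)}$ only halves of EPR pairs are built eagerly, while the structural Bell measurements and the $\P$-gate that would turn these halves into the actual gadget are performed lazily by $\VerDec$ on the other halves, which travel over the side channel from $\KeyGen$. The goal is to show that the joint distribution of (view of $\advA$, output of $\VerDec$) is pointwise identical in the two games.

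The first step is the observation that the lazy Bell measurements and $\P$ performed by $\TTP_2^{(\ell)}.\VerDec$ act exclusively on registers the adversary never touches, namely the EPR halves routed directly from $\KeyGen$ to $\VerDec$. Consequently they commute with every operation performed by $\advA_1, \advA_2, \advA_3$, and by the principle of deferred measurement we may slide them to any earlier moment without altering any output distribution. In particular we re-order the game so that $\VerDec$ performs all of its gadget-related Bell measurements and its $\P$ correction immediately after $\TTP_2^{(\ell)}.\KeyGen$, before any register is handed to $\advA_1$.

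Once these measurements are performed up front, a standard entanglement-swapping calculation shows that the post-measurement state occupying the adversary's $\ell$th-gadget registers is exactly the real gadget state $(\gin_\ell, \gmid_\ell, \gout_\ell)$ of $\TTP$, up to a Pauli offset on each register which is a deterministic (and recorded) function of the classical outcomes held by $\VerDec$. The $\P$ that $\VerDec$ applies on its own EPR half is absorbed by the swap into the matching position inside the reconstructed gadget, reproducing the $(\P \otimes \id)\ket{\Phi^+}$ component produced by $\TTP.\GadgetGen$.

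Finally, $\TTP.\Enc$ wraps the resulting registers in a fresh QOTP with uniform keys, and, thanks to the previous hybrid, the only classical material encrypted under $pk_\ell$ is zero strings, identical in the two games. Any fixed Pauli offset on the plaintext is therefore perfectly absorbed into a shift of those uniform QOTP keys, which $\VerDec$ tracks when performing its key updates (``the measurement outcomes are taken into account by $\VerDec$ when calculating updates to the quantum one-time pad''). Hence the ciphertext handed to the adversary and every classical item reaching $\VerDec$ have exactly the same joint distribution as in $\TTP_1^{(\ell)}$, so every adversary wins both games with identical probability and $\AdvHyb{\TTP_1^{(\ell)}}{\TTP_2^{(\ell)}}=0$. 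The only place demanding real care is checking that the entanglement-swapping offsets recombine cleanly with both the QOTP keys and the embedded $\P$ gate so as to exactly match the $\TTP.\GadgetGen$ distribution; this is the main, though still elementary, obstacle.
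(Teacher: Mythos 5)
Your proposal is correct and follows essentially the same route as the paper's proof: entanglement swapping shows the lazily-measured EPR halves reproduce the real gadget up to teleportation Pauli offsets, the timing of the Bell measurements is irrelevant because they act on registers the adversary never touches and the gadget's QOTP key is never revealed, and $\VerDec$ accounts for the offsets in its key updates. Your explicit treatment of commuting the measurements forward and of the $\P$-gate being carried through the swap is a slightly more careful rendering of what the paper states informally, but it is the same argument.
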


\begin{proof}
	In $\TTP_1^{(\ell)}$, the $\ell$th error-correction gadget consists of a number of EPR pairs arranged in a certain order, as described by the garden-hose protocol for $\HE.\Dec$. For example, this protocol may dictate that the $i$th and $j$th qubit of the gadget must form an EPR pair $\ket{\Phi^+}$ together. This can alternatively be achieved by creating two EPR pairs, placing half of each pair in the $i$th and $j$th position of the gadget state, and performing a Bell measurement on the other two halves. This creates a Bell pair $\X^a\Z^b\ket{\Phi^+}$ in positions $i$ and $j$, where $a,b \in \{0,1\}$ describe the outcome of the Bell measurement.
	
	From the point of view of the adversary, it does not matter whether these Bell measurements are performed during $\KeyGen$, or whether the halves of EPR pairs are sent to $\VerDec$ for measurement -- because the key to the quantum one-time pad of the $\ell$th gadget is not sent to the adversary at all, the same state is created with a completely random Pauli in either case. Of course, the teleportation correction Paulis of the form $\X^a\Z^b$ need to be taken into account when updating the keys to the quantum one-time pad on the data qubits after the gadget is used. $\VerDec$ has all the necessary information to do this, because it observes the measurement outcomes, and computes the key updates itself (instead of decrypting the final keys from the computation log).
	
	Thus, with the extra key update steps in $\TTP_2^{(\ell)}.\VerDec$, the inputs to the adversary are exactly the same in the games of $\TTP_1^{(\ell)}$ and $\TTP_2^{(\ell)}$.
	\qed
\end{proof}

\begin{lemma}\label{lem:hyb-remove-gadget}
	For any QPT $\advA$, there exists a negligible function $\negl$ such that for all $1 \leq \ell \leq t$,\\ $\AdvHyb{\TTP_2^{(\ell)}}{\TTP_3^{(\ell)}} \leq \negl(\kappa)$.
\end{lemma}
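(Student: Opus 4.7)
The hybrids $\TTP_2^{(\ell)}$ and $\TTP_3^{(\ell)}$ differ only in the $\gmid_\ell$ portion of the $\ell$-th gadget. In $\TTP_2^{(\ell)}$ the plaintext is halves of freshly generated EPR pairs whose partners are held by $\VerDec$; $\TC.\VerDec$ recovers these halves at decryption time and $\VerDec$ completes the inner Bell measurements using them. In $\TTP_3^{(\ell)}$ the plaintext is $\ket{0^{|\gmid|}}$ while $\VerDec$ holds complete fresh EPR pairs that play the role of the missing gadget connections, and $\TC.\VerDec$ is used purely as a verification wrapper. Crucially, in both hybrids the MAC-signed classical sidecar attached to $\gmid_\ell$ has been replaced by an encryption of zeros (inherited from $\TTP_1^{(\ell)}$), so the evaluator has no access to the $\TC$ keys $(\pi_\ell, x, z)$ used to encrypt $\gmid_\ell$. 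This localization is what makes a reduction to IND-VER-2 of $\TC$ possible.

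Given a QPT distinguisher $\advA = (\advA_1, \advA_2, \advA_3)$ for the hybrid game, we construct $\advA'$ for $\VerTwoGame{\TC}$ that simulates the hybrid game while routing the $\TC$-encryption of $\gmid_\ell$ through the external $\TC$-challenger. The adversary $\advA'$ honestly samples every $\TTP$ key and key-generation output except the $\TC$ keys for $\gmid_\ell$; it generates fresh EPR pairs and submits halves of them to the challenger as the first plaintext $X_1$, retaining the partner halves and all other simulation material in the reference register, and uses a fixed dummy state for the (unused) second round $X_2$. Once the challenger returns the encrypted $\gmid_\ell$-ciphertext, $\advA'$ assembles the full $\TTP$ evaluation key — slotting in the returned ciphertext together with the all-zero MAC-signed sidecar — runs $\advA_1$ to obtain an input plaintext, $\TTP$-encrypts it under the self-generated global key, runs $\advA_2$ to obtain a circuit $c$, a log, and an output ciphertext, and then forwards to the $\TC$-challenger the $\gmid_\ell$-component of that output (the Bell-measurement outcomes recorded in the log) together with a circuit $c'$ consisting of the Bell measurements that the honest evaluator would have claimed to apply to $\gmid_\ell$ (as dictated by the garden-hose protocol and the homomorphic evaluation of $\HE.\Dec$ recorded in the log). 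Finally, $\advA'$ combines the $\TC.\VerDec$ output with the partner EPR halves in the reference register to complete the remaining Bell measurements, teleportation corrections, and trap decodings of the hybrid $\VerDec$, and invokes $\advA_3$ to produce its guess.

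When the challenger's bit is $r=0$, both plaintexts are genuinely encrypted and $\advA'$ perfectly simulates $\HybridGame{\TTP_2^{(\ell)}}$. When $r=1$, the challenge ciphertext encrypts dummy $\ket{0}$-states — matching $\TTP.\Fake\GadgetGen$ — and whenever $\TC.\VerDec$ accepts, the IND-VER-2 game swaps the original $X_1$ (the same EPR halves whose partners sit in the reference register) back into the output register. These swapped-back halves together with their partners form exactly the same maximally entangled state that $\TTP_3^{(\ell)}.\KeyGen$ would have prepared entirely on the $\VerDec$ side, so the subsequent inner Bell measurements produce the same joint distribution over outcomes, teleportation corrections, and final $\VerDec$ outputs as $\HybridGame{\TTP_3^{(\ell)}}$. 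The main obstacle we anticipate is the bookkeeping in the $r=1$ branch: one must check that the $\TC.\VerDec$ reject case (where the IND-VER-2 $\faBan$-channel erases the output) corresponds exactly to the trap-triggered $(\Omega, \rejstatepure)$ output of $\TTP_3^{(\ell)}.\VerDec$, and that the claimed circuit $c'$ stays within the Pauli, $\CNOT$, and measurement class for which $\TC$ is verification-secure. Given this correspondence, $\AdvHyb{\TTP_2^{(\ell)}}{\TTP_3^{(\ell)}}$ is at most the IND-VER-2 advantage of $\advA'$, which is $\negl(\kappa)$ by Theorem~\ref{thm:tc-security-2}.
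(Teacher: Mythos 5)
Your proposal is correct and follows essentially the same route as the paper: both reductions route only the $\gmid_\ell$ EPR halves through a $\TC$ verification-game challenger (exploiting that $\gmid_\ell$ is encrypted under the independent, otherwise-unused permutation $\pi_\ell$), so that the challenger's hidden bit selects between a perfect simulation of $\HybridGame{\TTP_2^{(\ell)}}$ and of $\HybridGame{\TTP_3^{(\ell)}}$, and the advantage is then bounded via Theorem~\ref{thm:tc-security-2}. The only cosmetic differences are that the paper uses the single-round game $\VerGamePrime{\TC}$ rather than the two-round game with a dummy second plaintext, and it additionally invokes a standard randomizing argument to make the negligible bound uniform in $\ell$; the ``bookkeeping'' you flag (reject-case matching and the Bell measurements lying in the Pauli/$\CNOT$/measurement class) indeed goes through as you expect.
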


\begin{proof}
	We show this by reducing the difference in winning probabilities in the statement of the lemma to the IND-VER security of the somewhat homomorphic scheme $\TC$. Intuitively, because $\TC$ is IND-VER, if $\TTP_2^{(\ell)}$ accepts the adversary's claimed circuit of Bell measurements on the EPR pair halves, the effective map on those EPR pairs is the claimed circuit. Therefore, we might just as well ask $\VerDec$ to apply this map, as we do in $\TTP_3^{(\ell)}$, to get the same output state. If $\TTP_2^{(\ell)}$ rejects the adversary's claimed circuit on those EPR pair halves, then $\TTP_3^{(\ell)}$ should reject too. This is why we let the adversary act on an encrypted dummy state of $\ket0$s.
	
	Let $\advA = (\advA_1, \advA_2,\advA_3)$ be a set of QPT algorithms on the appropriate registers, so that we can consider it as an adversary for the hybrid indistinguishability game for either $\TTP_2^{(\ell)}$ or $\TTP_3^{(\ell)}$ (see Definition~\ref{def:hybrid-game}). Note the input/output wires to the adversary in both these games are identical, so we can evaluate $\Pr[\HybridGame{\TTP_2^{(\ell)}} = 1]$ and $\Pr[\HybridGame{\TTP_3^{(\ell)}} = 1]$ for the same $\advA$.
	
	Now define an adversary $\advA' = (\advA'_1, \advA'_2, \advA'_3)$ for the IND-VER game against $\TC$, $\VerGamePrime{\TC}$, as follows:
	
\textbf{1.} $\advA'_1$:
Run $\TTP_2^{(\ell)}.\KeyGen$ until the start of line~\ref{line:gadget-gen} in the $\ell$th iteration of that loop. Up to this point, $\TTP_2^{(\ell)}.\KeyGen$ is identical to $\TTP_3^{(\ell)}.\KeyGen$. It has generated real gadgets $\Gamma_1$ through $\Gamma_{\ell-1}$, and halves of EPR pairs for $\gin_{\ell}$, $\gmid_{\ell}$ and $\gout_{\ell}$. Note furthermore that the permutation $\pi_{\ell}$ is used nowhere. Now send $\gmid_{\ell}$ to the challenger via the register $X$, and everything else (including $sk$) to $\advA'_2$ via the side register $R$.

\textbf{2.} $\advA'_2$:
		Continue $\TTP_2^{(\ell)}.\KeyGen$ using the response from the challenger instead of $\TTP.\Enc'(sk',\gmid_{\ell})$ on line~\ref{line:gadget-gen} in the $\ell$th iteration. Call the result $\rho_{\evk}$. Again, this part of the key generation procedure is identical for $\TTP_2^{(\ell)}$ and $\TTP_3^{(\ell)}$.
Start playing the hybrid indistinguishability game with $\advA$:
			\begin{itemize}
				\item Flip a bit $r \in \{0,1\}$.
				\item Send $\rho_{evk}$ to $\advA_1$. If $r = 0$, encrypt the response of $\advA_1$ using the secret key $sk$ generated by $\advA'_1$. Note that for this, the permutation $\pi_{\ell}$ is also not needed. If $r = 1$, encrypt a $\ket0$ state of appropriate dimension instead.
				\item Send the resulting encryption, along with the side info from $\advA_1$, to $\advA_2$.
				\item On the output of $\advA_2$, start running $\TTP_2^{(\ell)}.\VerDec$ until the actions on the $\ell$th gadget need to be verified. Since the permutation on the state $\gmid_{\ell}$ is unknown to $\advA'_2$ (it was sent to the challenger for encryption), it cannot verify this part of the computation.
				\item Instead, send the relevant part of the computation log to the challenger for verification, along with the relevant part of the claimed circuit (the Bell measurements on the gadget state), and the relevant qubits, all received from $\advA_2$, to the challenger for verification and decryption.
				\item In the meantime, send the rest of the working memory to $\advA'_3$ via register $R'$.
			\end{itemize}
			
\textbf{3.} $\advA'_3$:
Continue the simulation of the hybrid game with $\advA$:
			\begin{itemize}
				\item If the challenger rejects, reject and replace the entire quantum state by the fixed dummy state $\Omega$.
				\item If the challenger accepts, then we know that the challenger applies the claimed subcircuit to the quantum state it did not encrypt (either $\ket0$ or $\gmid_{\ell}$), depending on the bit the challenger flipped), and possibly swaps this state back in (again depending on which bit it flipped). Continue the $\TTP_2^{(\ell)}.\VerDec$ computation for the rest of the computation log.
				\item Send the result (the output quantum state, the claimed circuit, and the accept/reject flag) to $\advA_3$, and call its output bit $r'$.
			\end{itemize}
Output 0 if $r = r'$, and 1 otherwise. (i.e., output $NEQ(r,r')$)
\todo[inline]{Add image still?}

Recall from Definition~\ref{def:hybrid-game} that the challenger flips a coin (let us call the outcome $s \in \{0,1\}$) to decide whether to encrypt the quantum state provided by $\advA'$, or to swap in an all-zero dummy state before encrypting. Keeping this in mind while inspecting the definition of $\advA'$, one can see that whenever $s = 0$, $\advA'$ takes the role of challenger in the game $\HybridGame{\TTP_2^{(\ell)}}$ with $\advA$, and whenever $s = 1$, they play $\HybridGame{\TTP_3^{(\ell)}}$. Now let us consider when the newly defined adversary $\advA'$ wins the VER indistinguishability game for $\TC$. If $s = 0$, $\advA'$ needs to output a bit $s' = 0$ to win. This happens, by definition of $\advA'$, if and only if $\advA$ wins the game $\HybridGame{\TTP_2^{(\ell)}}$ (i.e.\ $r = r'$). On the other hand, if $s = 1$, $\advA'$ needs to output a bit $s' = 1$ to win. This happens, by definition of $\advA'$, if and only if $\advA$ loses the game $\HybridGame{\TTP_3^{(\ell)}}$ (i.e.\ $r \neq r'$). Thus the winning probability of $\advA'$ is:
\begin{align*}
&\Pr[\VerGamePrime{\TC} = 1] =\\
&= \Pr[s = 0]\cdot\Pr[\HybridGame{\TTP_2^{(\ell)}} = 1] + \Pr[s = 1]\cdot\Pr[\HybridGame{\TTP_3^{(\ell)}} = 0]\\
&= \frac{1}{2}\Pr[\HybridGame{\TTP_2^{(\ell)}} = 1] + \frac{1}{2}\Big(1  - \Pr[\HybridGame{\TTP_3^{(\ell)}} = 1]\Big)\\
&= \frac{1}{2} + \frac{1}{2}\Big(\Pr[\HybridGame{\TTP_2^{(\ell)}} = 1] - \Pr[\HybridGame{\TTP_3^{(\ell)}} = 1]\Big)
\end{align*}
From the IND-VER property of $\TC$ (see Theorem~\ref{thm:tc-security-2}) we know that the above is at most $\frac{1}{2} + \negl(\kappa)$. From this (and a randomizing argument similar to Lemma~\ref{lem:hyb-remove-pk}), the statement of the lemma follows directly.
\qed
\end{proof}

\begin{lemma}\label{lem:hyb-bypass-gadget}
	For any QPT $\advA$, there exists a negligible function $\negl$ such that for all $1 \leq \ell \leq t$,\\ $\AdvHyb{\TTP_3^{(\ell)}}{\TTP_4^{(\ell)}} \leq \negl(\kappa)$.
\end{lemma}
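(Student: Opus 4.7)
The plan is to observe that the evaluation key, and more generally every message received by $\advA_1$, $\advA_2$, and $\advA_3$, is \emph{identical} in $\TTP_3^{(\ell)}$ and $\TTP_4^{(\ell)}$: in both schemes the $\ell$th gadget is produced by $\TTP.\Fake\GadgetGen$, the earlier gadgets are real, and the later gadgets are handled by $\VerDec$ in the same way. Hence the entire difference lives inside $\VerDec$, specifically in how the conditional $\P$ correction associated with the $\ell$th $\T$ gate is applied to the halves of EPR pairs (entangled with $\gin_\ell$, $\gmid_\ell$, $\gout_\ell$) that were sent through the side channel by $\KeyGen$.

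Next, I would show that the two different implementations of this conditional $\P$ produce the same final plaintext, up to Pauli corrections that $\VerDec$ folds into its QOTP key updates. In $\TTP_3^{(\ell)}$, correctness of the garden-hose gadget for $\HE.\Dec$~\cite{DSS16} guarantees that the Bell measurements prescribed by $g_\ell$ and the (decrypted) ciphertext $\enc{b}$, performed on $\VerDec$'s EPR halves, teleport the data qubit from $\gin_\ell$ to $\gout_\ell$ with a $\P$ applied if and only if $\HE.\Dec_{sk_{\ell-1}}(b) = 1$. In $\TTP_4^{(\ell)}$, $\VerDec$ instead computes the bit $\HE.\Dec_{sk_{\ell-1}}(b)$ directly from the plaintext initial keys and measurement outcomes available to it through the side channel---this is legitimate by the same shadow-computation argument used in Lemma~\ref{lem:remove-log}---and then applies $\P$ or $\id$ to the EPR half entangled with $\gin_\ell$ before teleporting through $\gout_\ell$ with a single Bell measurement. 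Both procedures realize the same Clifford channel on the output register.

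Finally, the two outputs agree exactly whenever $\HE.\Dec$ recovers the correct plaintext from the encrypted bit $b$ produced during evaluation; by correctness of $\HE$, this fails only with negligible probability, so $\AdvHyb{\TTP_3^{(\ell)}}{\TTP_4^{(\ell)}} \leq \negl(\kappa)$. A randomizing argument over $\ell$, analogous to the one at the end of Lemma~\ref{lem:hyb-remove-pk}, removes the $\ell$-dependence of the negligible function.

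The main obstacle is bookkeeping the Pauli corrections: in $\TTP_3^{(\ell)}$ each internal Bell measurement of the gadget contributes a Pauli that must be propagated through the remaining gadget operations and folded into the QOTP keys, whereas in $\TTP_4^{(\ell)}$ there is only the single teleportation correction between $\gin_\ell$ and $\gout_\ell$. One has to verify that, after commuting all of these Paulis through the Clifford-only post-processing performed by $\VerDec$, both collections of corrections produce the same updated QOTP key on the output register. This is a tedious but mechanical calculation, essentially identical to the correctness proof of the original $\T$-gadget construction.
\qed
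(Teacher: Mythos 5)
Your proposal is correct and follows essentially the same route as the paper: both arguments reduce the difference between $\TTP_3^{(\ell)}$ and $\TTP_4^{(\ell)}$ to the event that the homomorphically-evaluated-then-decrypted control bit disagrees with the directly computed plaintext bit $f(s)$, which happens with negligible probability by correctness of $\HE$. The paper states this more tersely (it treats the Pauli/teleportation bookkeeping as already settled by the $\TTP_2^{(\ell)}$ hybrid), but the substance is identical.
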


\begin{proof} Let $f(s)$ be the bit that, after the $\ell$th $\T$ gate, determines whether or not a phase correction is necessary. Here, $s$ is all the relevant starting information (such as quantum one-time pad keys, gadget structure, permutations, and applied circuit), and $f$ is some function that determines the $\X$ key on the relevant qubit right before application of the $\T$ gate.
	
	In $\TTP_3^{(\ell)}$, a phase correction after the $\ell$th $\T$ gate is applied conditioned on the outcome of
	\[
	\HE.\Dec_{sk_{\ell-1}}(\HE.\Eval_{evk_{0}, ..., evk_{\ell-1}}^f(\HE.\Enc_{pk_0}(s))),
	\]
	because the garden-hose computation in the gadget computes the classical decryption.
	In the above expression, we again slightly abuse notation, as in the proof of Lemma~\ref{lem:remove-log}, and include recryption steps in $\HE.\Eval_{evk_0, ..., \evk_{\ell-1}}$. As long as $t$ is polynomial in $\kappa$, we have, by correctness of $\HE$,
	\[
	\Pr[\HE.\Dec_{sk_{\ell-1}}(\HE.\Eval_{evk_{0}, ..., evk_{\ell-1}}^f(\HE.\Enc_{pk_0}(s))) \neq f(s)] \leq \negl(\kappa).
	\]
	In $\TTP_4^{(\ell)}$, the only difference from $\TTP_3^{(\ell)}$ is that, instead of performing the garden-hose computation on the result of the classical homomorphic evaluation procedure, the phase correction is applied directly by $\VerDec$, conditioned on $f(s)$. The probability that in $\TTP_4^{(\ell)}$, a phase is applied (or not) when in $\TTP_3^{(\ell)}$ it is not (or is), is negligible. The claim follows directly.
\qed
\end{proof}

\subsubsection{Final Hybrid: Removing All Classical FHE.}
%%%

In $\TTP_4^{(1)}$, all of the error-correction gadgets have been removed from the evaluation key, and the error-correction functionality has been redirected to $\VerDec$ completely. Effectively, $\TTP_4^{(1)}.\KeyGen$ samples a permutation $\pi$, generates a lot of magic states (for $\P$, $\H$ and $\T$) and encrypts them using $\TC.\Enc_{\pi}$, after which the keys to the quantum one-time pad used in that encryption are homomorphically encrypted under $pk_0$. The adversary is allowed to act on those encryptions, but while its homomorphic computations are syntactically checked in the log, $\VerDec$ does not decrypt and use the resulting values. This allows us to link $\TTP_4^{(1)}$ to a final hybrid, $\TTP^f$, where all classical information is replaced with zeros before encrypting.

The proof of the following lemma is analogous to that of Lemma~\ref{lem:hyb-remove-pk}, and reduces to the IND-CPA security of the classical scheme $\HE$:

\begin{lemma}\label{lem:remove-final-pk}
	For any QPT $\advA$, $\AdvHyb{\TTP_4^{(1)}}{\TTP^f} \leq \negl(\kappa)$.
\end{lemma}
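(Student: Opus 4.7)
The plan is to reduce the claim directly to the IND-CPA security of the classical scheme $\HE$ against quantum adversaries, following the template of Lemma~\ref{lem:hyb-remove-pk}. The key structural invariant established by all the previous hybrids is that in $\TTP_4^{(1)}$, no $\HE$-ciphertext appearing in the adversary's view is ever decrypted by $\VerDec$: by Lemma~\ref{lem:remove-log}, $\VerDec$ shadows the computation using the plaintext values it receives on the side channel; by the successive removal of the gadgets, neither the quantum state of any $\Gamma_i$ nor the phase-correction bit after a $\T$ gate depends anymore on the adversary's log. Thus the only role of the ciphertexts $\HE.\Enc_{pk_0}(x[i],z[i])$ (and of the classical half of each $\TTP.\Enc$-encrypted magic state) is to sit inside the ciphertext or the evaluation key and be inspected syntactically. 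This makes it safe, from an information-theoretic point of view of the verifier, to replace their plaintexts by zeros.

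Concretely, given any QPT distinguisher $\advA = (\advA_1, \advA_2, \advA_3)$ with advantage $\varepsilon(\kappa) := \AdvHyb{\TTP_4^{(1)}}{\TTP^f}$, I build a QPT IND-CPA adversary $\advA'$ against $\HE$ with key $pk_0$. $\advA'$ receives $pk_0$ from its challenger, generates all other keys $(k, \pi, sk_1, \dots, sk_t, pk_1, \dots, pk_t, evk_0, \dots, evk_t)$ and all quantum ancillas (trap-code encodings, magic states, EPR halves for the fake gadgets) itself, exactly as in $\TTP_4^{(1)}.\KeyGen$. Whenever the honest key-generation or encryption procedure would invoke $\HE.\Enc_{pk_0}(m)$ on a classical string $m$ that is eventually routed to $\advA$, $\advA'$ submits the pair $(m, 0^{|m|})$ to its challenger and uses the returned ciphertext in place of $\HE.\Enc_{pk_0}(m)$. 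With these ingredients, $\advA'$ plays $\HybridGame{\cdot}$ with $\advA$ in the role of challenger: it flips the inner coin $r$, performs the swap and the $\Phi_c$ application itself, and runs $\VerDec$ honestly. The side-channel data needed by $\VerDec$ is available to $\advA'$ because it sampled the underlying plaintexts; crucially, $\VerDec$ never needs to invoke $\HE.\Dec_{sk_0}$ on anything coming from $\advA$. Finally, $\advA'$ outputs $0$ iff the simulated $\advA$ wins the hybrid game.

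If the IND-CPA challenger encrypts the real messages, $\advA'$ perfectly simulates $\HybridGame{\TTP_4^{(1)}}$; if it encrypts zeros, $\advA'$ simulates $\HybridGame{\TTP^f}$. Consequently the distinguishing advantage of $\advA'$ equals $\varepsilon(\kappa)/2$ up to the usual factor. Since $\TTP.\KeyGen$ and $\TTP.\Enc$ produce $\poly(\kappa)$ many such ciphertexts under $pk_0$, a standard hybrid-over-ciphertexts argument (replacing one plaintext at a time) together with the randomizing trick used in the proof of Lemma~\ref{lem:hyb-remove-pk} turns many-message indistinguishability into single-message IND-CPA at the price of a polynomial factor, yielding $\varepsilon(\kappa) \leq \negl(\kappa)$.

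The main obstacle, and the step that justifies investing effort in a careful write-up, is verifying the invariant that $sk_0$ is genuinely unused by $\TTP_4^{(1)}.\VerDec$. This must be checked line-by-line through Algorithm~\ref{alg:ttp-verdec} under the modifications introduced by Lemmas~\ref{lem:remove-MAC} through~\ref{lem:hyb-bypass-gadget}: the MAC check is now discharged against the trusted copies of the keys sent through the side channel (Hybrid~1); the key-update and trap-check computations are performed on plaintext side-channel data (Hybrid~2); and the gadget emulations have been entirely replaced by direct actions of $\VerDec$ on side-channel EPR halves together with a plaintext decision bit $f(s)$ (Hybrids $\TTP_2^{(\ell)}$--$\TTP_4^{(\ell)}$ for every $\ell$). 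Once this book-keeping is in place, the IND-CPA reduction above goes through essentially verbatim and the lemma follows.
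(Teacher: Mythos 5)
Your proposal is correct and follows essentially the same route as the paper, which simply notes that the proof is analogous to Lemma~\ref{lem:hyb-remove-pk}: a reduction to the IND-CPA security of $\HE$ under $pk_0$, exploiting the fact that after the earlier hybrids neither $sk_0$ nor any $\HE$-decryption of adversary-supplied ciphertexts is used by $\VerDec$. Your write-up merely makes explicit the bookkeeping (the unused-$sk_0$ invariant, the many-message hybrid over ciphertexts, and the randomizing argument) that the paper leaves implicit.
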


\subsubsection{Proof of main theorem.}
%%%

Considering $\TTP^f$ in more detail, we can see that it is actually very similar to $\TC$. This allows us to prove the following lemma, which is the last ingredient for the proof of verifiability of $\TTP$.

\begin{lemma}\label{lem:ttp-to-tc}
	For any QPT $\advA$,
$\Pr[\HybridGame{\TTP^f} = 1] \leq \frac{1}{2} + \negl(\kappa).$
\end{lemma}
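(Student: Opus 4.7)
The plan is to reduce $\HybridGame{\TTP^f}$ to a polynomial-round extension of the $\TC$ indistinguishability game, i.e., to the generalization of Theorem~\ref{thm:tc-security-2} to $k$ encryption rounds noted after Definition~\ref{def:VER-2-game}. The key observation is that in $\TTP^f$ every secret-key-dependent object visible to the adversary is a $\TC$-encryption under a single permutation $\pi$: the magic states packed into $\rho_{evk}$, namely the encryptions of $\P\ket{+}$, $\T\ket{+}$, and $(\H\otimes\id)\ket{\Phi^+}$, together with the plaintext produced by $\Enc$. All remaining classical material in $\rho_{evk}$ and in the output ciphertext is a fresh $\HE.\Enc_{pk}(0\cdots 0)$ whose distribution is independent of $\pi$, and the EPR halves that have replaced the error-correction gadgets are sent directly to $\VerDec$ through the side channel introduced in Definition~\ref{def:hybrid-game}. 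A reduction can therefore manufacture all non-$\TC$ parts on its own.

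Concretely, I would build a $(p+t+h+1)$-round $\TC$ IND-VER adversary $\advA'=(\advA'_1,\advA'_2,\advA'_3)$ that simulates $\HybridGame{\TTP^f}$ for $\advA$ while delegating the $\TC$-level encryptions and verification to its own challenger. $\advA'_1$ samples fresh $\HE$ and $\MAC$ keys, generates all gadget EPR pairs (retaining the halves normally held by $\VerDec$), produces the zero-filled classical encryptions and signatures required by $\TTP^f.\KeyGen$, and forwards the $p+t+h$ magic states one per round to the $\TC$ challenger. From the responses it assembles a well-formed $\rho_{evk}$, runs the simulated $\advA_1$ on it, and submits the latter's plaintext $X$ as the final round. $\advA'_2$ then runs $\advA_2$ on the challenger's ciphertexts and hands the challenger back the output ciphertext $C_{X'}$, the relevant portion of the log, and the $\TC$-level circuit $c'$ obtained from $c$ by expanding each logical gate into its magic-state-consuming Clifford-plus-measurement subroutine of Section~\ref{sec:ttp-evaluation}, with consumed magic-state slots measured out, so that $c'$ leaves the output qubits of $c$ in the same register as $X'$. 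Finally, $\advA'_3$ receives the challenger's decrypted output, performs the remaining classical log checks, shadow key-update computations, and $\VerDec$-side Bell measurements on the retained EPR halves, all of which depend only on data $\advA'$ already holds in the clear, packages the result with the accept/reject flag, and invokes the simulated $\advA_3$, returning its guess.

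By construction, the bit $s$ sampled by the $\TC$ challenger plays exactly the role of the bit $r$ in the simulated $\HybridGame{\TTP^f}$: when $s=0$ every plaintext is encrypted and the simulation coincides with the $r=0$ branch; when $s=1$ the challenger encrypts dummies and, upon accept, applies $\Phi_{c'}$ to the real magic-state-plus-input bundle and swaps the result into the output register. Because $c'$ is the honest $\TC$-level implementation of $c$, its action on that bundle produces exactly $\Phi_c(X)$ in the correct slot, matching the swap-back in the $r=1$ branch of $\HybridGame{\TTP^f}$. Hence $\Pr[\advA' \text{ wins}] = \Pr[\HybridGame{\TTP^f} = 1]$ for this $\advA'$, and the $(p+t+h+1)$-round IND-VER security of $\TC$ yields the desired bound $\tfrac{1}{2} + \negl(\kappa)$. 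The main obstacle is to verify that $c'$ is well-defined and computable from $(c,\log)$ alone without knowledge of $\pi$, and that its output register aligns with $X'$ in both honest and dishonest cases; this reduces to inspecting the gate-by-gate evaluation rules of Section~\ref{sec:ttp-evaluation}, which are deterministic and depend only on the structure of $c$, together with checking that all consumed magic-state slots can consistently be measured out.
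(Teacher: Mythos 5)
Your reduction is essentially the paper's: the paper also bounds $\Pr[\HybridGame{\TTP^f}=1]$ by the winning probability of a derived adversary in a multi-round $\TC$ IND-VER game, observing that $\TTP^f.\KeyGen$ amounts to $\TC.\Enc$ applied to magic states and EPR-pair halves, that $\TTP^f.\Eval$ is a $\CNOT$-plus-measurement circuit on those encryptions, and that $\TTP^f.\VerDec$ is a syntactic log check followed by $\TC.\VerDec$. The packaging differs slightly: the paper batches every key-generation-time state into a single extra encryption round (a prepended $\advA_0$ writing to register $X_1$), so the two-round game $\VerTwoGamePrime{\TC}$ of Theorem~\ref{thm:tc-security-2} suffices, whereas you use $p+t+h+1$ rounds; both rest on the same remark that the $\TC$ proof extends to any polynomial number of rounds. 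The paper also settles for the one-sided inequality $\Pr[\HybridGame{\TTP^f}=1]\leq\Pr[\VerTwoGamePrime{\TC}=1]$ rather than your claimed equality, since a run can fail the syntactic check while passing the $\TC$ check; either suffices for the bound.

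One concrete omission: your inventory of $\pi$-dependent objects leaves out the evaluator-side EPR halves $\gin_\ell$ and $\gout_\ell$. These are \emph{not} routed to $\VerDec$ through the side channel --- they are $\TC$-encrypted under the global $\pi$ (line~\ref{line:gadget-gen} of Algorithm~\ref{alg:ttp-key-gen}) and handed to the evaluator, who teleports the data qubit through them in $\TTP.\Eval\T$. As written, your $\advA'_1$ generates these pairs itself and forwards only the $p+t+h$ magic states to the challenger, so it cannot produce their encryptions without knowing $\pi$. The repair is immediate and stays within your framework: forward $\gin_\ell\otimes\gout_\ell$ to the challenger as additional plaintext rounds (the paper's $\advA_0$ does exactly this by placing ``magic states and EPR pair halves'' in $X_1$), and include the Bell measurements between the data qubit and $\gin_\ell$ in the claimed circuit $c'$. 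Without this the reduction is incomplete, but with it your argument goes through and matches the paper's.
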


\begin{proof}
To see the similarity with $\TC$, consider the four algorithms of $\TTP^f$.

In $\TTP^f.\KeyGen$, a permutation $\pi$ is sampled, and magic states for $\P$, $\H$ and $\T$ are generated, along with some EPR pair halves (to replace $in_i$ and $out_i$). For all generated quantum states, random keys for QOTPs are sampled, and the states are encrypted using $\TC.\Enc$ with these keys as secret keys. No classical FHE is present anymore. Thus, $\TTP^f.\KeyGen$ can be viewed as $\TC.\KeyGen$, followed by $\TC.\Enc$ on the magic states and EPR pair halves.

$\TTP^f.\Enc$ is identical to $\TC.\Enc$, only the keys to the quantum one-time pad are sampled on the fly and sent to $\TTP^f.\VerDec$ via a classical side-channel, whereas $\TC.\VerDec$ receives them as part of the secret key. Since the keys are used exactly once and not used anywhere else besides in $\Enc$ and $\VerDec$, this difference does not affect the outcome of the game.

$\TTP^f.\Eval$ only requires $\CNOT$, classically controlled Paulis, computational basis measurements and Hadamard basis measurements. For the execution of any other gate, it suffices to apply a circuit of $\CNOT$, classically controlled Paulis, and measurements to the encrypted data, encrypted magic states and/or encrypted EPR halves.

$\TTP^f.\VerDec$ does two things: (i) it syntactically checks the provided computation log, and (ii) it runs $\TC.\VerDec$ to verify that the evaluation procedure correctly applied the circuit of $\CNOT$s and measurements.

An execution of $\HybridGame{\TTP^f}$ for any $\advA$ corresponds to the two-round VER indistinguishability game for $\TC$ as follows. Let $\advA = (\advA_1, \advA_2, \advA_3)$ be a polynomial-time adversary for the game $\HybridGame{\TTP^f}$. Define an additional QPT $\advA_0$ that produces magic states and EPR pair halves to the register $X_1$. The other halves of the EPR pairs are sent through $R$, and untouches by $\advA_1$ and $\advA_2$. The above analysis shows that the adversary $\advA' = (\advA_0, \advA_1, \advA_2, \advA_3)$ can be viewed as an adversary for the VER-2 indistinguishability game $\VerTwoGamePrime{\TC}$ and wins whenever $\HybridGame{\TTP^f} = 1$. The other direction does not hold: $\advA$ loses the hybrid indistinguishability game if $\TTP^f.\VerDec$ rejects check (i), but accepts check (ii) (see above). In this case, $\advA'$ would still win the VER-2 indistinguishability game. Hence,
\[
\Pr[\HybridGame{\TTP^f} = 1] \leq \Pr[\VerTwoGamePrime{\TC} = 1].
\]
Theorem~\ref{thm:tc-security-2} yields $\Pr[\VerTwoGamePrime{\TC} = 1] \leq \frac{1}{2} + \negl(\kappa)$, and the result follows.
\qed
\end{proof}

Now we finally have all the ingredients needed to prove our main theorem:

\begin{theorem}\label{thm:verif-ttp}
	The vQFHE scheme $\TTP$ satisfies $\kappa$-SEM-VER.
\end{theorem}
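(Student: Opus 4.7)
The plan is to combine Theorem~\ref{thm:IND-SEM} with the chain of hybrid lemmas already established in this section. Since SEM-VER and IND-VER are equivalent, it suffices to show that no QPT adversary $\advA$ wins $\VerGame{\TTP}$ with probability more than $1/2 + \negl(\kappa)$. As observed just after Definition~\ref{def:hybrid-game}, the extra wires in $\HybridGame{\cdot}$ are unused by the original $\TTP$, so $\Pr[\VerGame{\TTP} = 1] = \Pr[\HybridGame{\TTP} = 1]$, and we can work entirely inside the hybrid game.

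The main step is to telescope through the sequence
\[
\TTP \to \TTP' \to \TTP'' \to \TTP_1^{(t)} \to \TTP_2^{(t)} \to \TTP_3^{(t)} \to \TTP_4^{(t)} \to \TTP_1^{(t-1)} \to \cdots \to \TTP_4^{(1)} \to \TTP^f,
\]
apply the triangle inequality to $\bigl|\Pr[\HybridGame{\TTP}=1] - \Pr[\HybridGame{\TTP^f}=1]\bigr|$, and bound each successive difference by the appropriate lemma: Lemma~\ref{lem:remove-MAC} removes the MAC, Lemma~\ref{lem:remove-log} removes the computation log, Lemmas~\ref{lem:hyb-remove-pk}--\ref{lem:hyb-bypass-gadget} perform the four sub-steps that dismantle each of the $t$ gadgets in turn, and Lemma~\ref{lem:remove-final-pk} handles the final layer of classical FHE. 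The endpoint $\Pr[\HybridGame{\TTP^f}=1]$ is then bounded directly by Lemma~\ref{lem:ttp-to-tc}.

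Since $t = \poly(\kappa)$ and the total number of intermediate transitions is $3 + 4t = \poly(\kappa)$, a polynomial sum of negligible functions remains negligible, so telescoping gives $\Pr[\HybridGame{\TTP} = 1] \leq 1/2 + \negl(\kappa)$. Composing this bound with the equivalence of IND-VER and SEM-VER from Theorem~\ref{thm:IND-SEM} yields $\kappa$-SEM-VER.

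The only real subtlety at this stage is uniformity of the negligible bounds across the $\ell$-indexed hybrids: if the negligible function in Lemmas~\ref{lem:hyb-remove-pk} or~\ref{lem:hyb-remove-gadget} were allowed to depend on $\ell$, summing over $\ell = 1, \ldots, t$ could in principle fail to stay negligible. This is exactly why those lemmas each invoke the standard randomizing argument (pick $\ell$ uniformly at random, reduce to a single copy, and absorb the $1/t$ factor into the adversary's advantage against the underlying primitive). With that uniformity in hand, the assembled bound is a clean application of the triangle inequality, and the heavy lifting has already been carried out inside the individual hybrid lemmas.
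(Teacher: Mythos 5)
Your proposal is correct and follows essentially the same route as the paper's own proof: telescoping through the hybrid chain via the triangle inequality, bounding the endpoint with Lemma~\ref{lem:ttp-to-tc}, and transferring IND-VER to SEM-VER via Theorem~\ref{thm:IND-SEM}. Your remark on the uniformity of the negligible bounds across the $\ell$-indexed hybrids is precisely the point the paper flags by noting that only a constant number of \emph{different} negligible functions appear in the sum.
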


\begin{proof}
	From Lemmas~\ref{lem:remove-MAC}, \ref{lem:remove-log}, \ref{lem:hyb-remove-pk}, \ref{lem:hyb-postpone-gadget}, \ref{lem:hyb-remove-gadget}, \ref{lem:hyb-bypass-gadget}, and \ref{lem:remove-final-pk}, we may conclude that if $t$ (the number of $\T$ gates in the circuit) is polynomial in $\kappa$ (the security parameter), then for any polynomial-time adversary $\advA$,
	\[
	\Pr[\VerGame{\TTP} = 1] \ \ - \ \ \Pr[\HybridGame{\TTP^f} = 1] \ \ \leq \ \ \negl(\kappa),
	\]
	since the sum poly-many negligible terms is negligible (it is important to note that there is only a constant number of \emph{different} negligible terms involved). By Lemma~\ref{lem:ttp-to-tc}, which reduces verifiability of $\TTP^f$ to verifiability of $\TC$, $\Pr[\HybridGame{\TTP^f} = 1] \leq 1/2 + \negl(\kappa)$. It follows that $\Pr[\VerGame{\TTP} = 1] \leq 1/2 + \negl(\kappa)$, i.e., that $\TTP$ is $\kappa$-IND-VER. By Theorem~\ref{thm:IND-SEM}, $\TTP$ is also $\kappa$-SEM-VER.
\qed
\end{proof}

%%%------------------------------------
\section{Application to quantum one-time programs}
%%%------------------------------------

\subsubsection{One-time programs.}
%%%

We now briefly sketch an application of the vQFHE scheme to one-time programs. A classical one-time program (or cOTP) is an idealized object which can be used to execute a function once, but then self-destructs.  In the case of a quantum OTP (or qOTP), the program executes a quantum channel $\Phi$. In the usual formalization, $\Phi$ has two inputs and is public. One party (the sender) creates the qOTP by fixing one input, and the qOTP is executed by a receiver who selects the other input. To recover the intuitive notion of OTP, choose $\Phi$ to be a universal circuit. We will work in the universally-composable (UC) framework, following the approach of~\cite{BGS13}. We thus first define the ideal functionality of a qOTP.
\begin{definition}[Functionality 3 in~\cite{BGS13}]
The ideal functionality $\mathcal F_\Phi^\OTP$ for a channel $\Phi_{XY \to Z}$ is the following:
\begin{enumerate}
\item \textbf{\emph{Create:}} given register $X$ from sender, store $X$ and send $\create$ to receiver.
\item \textbf{\emph{Execute:}} given register $Y$ from receiver, send $\Phi$ applied to $XY$ to receiver. Delete any trace of this instance.
\end{enumerate}
\end{definition}
A qOTP is then a real functionality which ``UC-emulates'' the ideal functionality~\cite{Unruh10}. As in~\cite{BGS13}, we only allow corrupting receivers; unlike~\cite{BGS13}, we consider computational (rather than statistical) UC security. The achieved result is therefore slightly weaker. The construction within our vQFHE framework is however much simpler, and shows the relative ease with which applications of vQFHE can be constructed.

\subsubsection{The construction.}
%%%

Choose a vQFHE scheme $\Pi = (\KeyGen, \Enc, \Eval, \VerDec)$ satisfying SEM-VER. For simplicity, we first describe the classical input/output case, i.e., the circuit begins and ends with full measurement of all qubits. Let $C$ be such a circuit, for the map $\Phi_{XY \to Z}$. On \textbf{Create}, the sender generates keys $(k, \rho_\evk) \from \KeyGen$ and encrypts their input register $X$ using $k$. The sender also generates a classical OTP for the public, classical function $\VerDec$, choosing the circuit and key inputs to be $C$ and $k$; the computation log is left open for the receiver to select. The qOTP is then the triple 
$$
\Xi_C^X := \left(\rho_\evk, \Enc_k(\rho_X), \OTP_{\VerDec}(C, k)\right) .
$$
On \textbf{Execute}, the receiver computes as follows. The receiver's (classical) input $Y$ together with the (public) circuit $C$ defines a homomorphic computation on the ciphertext $\Enc_k(\rho_X)$, which the receiver can perform using $\Eval$ and $\rho_\evk$. Since $C$ has only classical outputs, the receiver measures the final state completely. At the end of that computation, the receiver holds the (completely classical) output of the computation log from $\Eval$. The receiver plugs the log into $\OTP_{\VerDec}(C, k)$, which produces the decrypted output.

We handle the case of arbitrary circuits $C$ (with quantum input and output) as follows. Following the ideas of~\cite{BGS13}, we augment the above quantum OTP with two auxiliary quantum states: an ``encrypt-through-teleport'' gadget $\sigma_\textsf{in}$ and a ``decrypt-through-teleport'' gadget $\sigma_\textsf{out}$. These are maximally entangled states with the appropriate map (encrypt or decrypt) applied to one half. The receiver uses teleportation on $\sigma^\textsf{in}_{Y_1W_1}$ to encrypt their input register $Y$ before evaluating, and places the teleportation measurements into the computation log. After evalution, the receiver uses $\sigma^\textsf{out}_{W_2Y_2}$ to teleport the plaintext out, combining the teleportation measurements with the output of $\OTP_{\VerDec}(C, k)$ to compute the final QOTP decryption keys.

\subsubsection{Security proof sketch.}
%%%

Starting with a QPT adversary $\algo A$ which attacks the real functionality, we construct a QPT simulator $\algo S$ which attacks the ideal functionality (with similar success probability). We split $\algo A$ into $\algo A_1$ (receive input, output the OTP query and side information) and $\algo A_2$ (receive result of OTP query and side information, produce final output). The simulator $\algo S$ will generate its own keys, provide fake gadgets that will trick $\algo A$ into teleporting its input to $\algo S$, who will then use that input on the ideal functionality. Details follow.

The simulator first generates $(k, \rho_\evk) \from \KeyGen$ and encrypts the input $X$ via $\Enc_k$. Instead of the encrypt gadget $\sigma^\textsf{in}_{Y_1W_1}$, $\algo S$ provides half of a maximally entangled state in register $Y$ and likewise in register $W$. The other  halves $Y'_1$ and $W'_1$ of these entangled states are kept by $\algo S$. The same is done in place of the decrypt gadget $\sigma^\textsf{out}_{W_2Y_2}$, with $\algo S$ keeping $Y_2'$ and $W_2'$. Then $\algo S$ runs $\algo A_1$ with input $\rho_\evk, \Enc_k(\rho_X)$ and registers $Y$ and $W$. It then executes $\VerDec_k$ on the output (i.e., the query) of $\algo A_1$ to see if $\algo A_1$ correctly followed the $\Eval$ protocol. If it did not, then $\algo S$ aborts; otherwise, $\algo S$ plugs register $Y_1'$ into the ideal functionality, and then teleports the output into register $W_2'$. Before responding to $\algo A_2$, it corrects the one-time pad keys appropriately using its teleportation measurements.

%%%------------------------------------
\section{Conclusion}
%%%------------------------------------
In this work, we devised a new quantum-cryptographic primitive: quantum fully-homomorphic encryption with verification (vQFHE). Using the trap code for quantum authentication~\cite{BGS13} and the garden-hose gadgets of~\cite{DSS16}, we constructed a vQFHE scheme $\TTP$ which satisfies (i.) correctness, (ii.) compactness, (iii.) security of verification, (iv.) IND-CPA secrecy, and (v.) authentication. We also outlined a first application of vQFHE, to  quantum one-time programs. 

We leave open several interesting directions for future research. Foremost is finding more applications of vQFHE. Another interesting question is whether vQFHE schemes exist where verification can be done publicly (i.e., without the decryption key), as is possible classically. Finally, it is unknown whether vQFHE (or even QFHE) schemes exist with evaluation key that does not scale with the size of the circuit at all.

%%%------------------------------------
\section{Acknowledgements}
%%%------------------------------------

This work was completed while GA was a member of the QMATH center at the Department of Mathematical Sciences at the University of Copenhagen. GA and FS acknowledge financial support from the European Research Council (ERC Grant Agreement no 337603), the Danish Council for Independent Research (Sapere Aude), Qubiz - Quantum Innovation Center, and VILLUM FONDEN via the QMATH Centre of Excellence (Grant No. 10059). CS is supported by an NWO VIDI grant.

\bibliography{obfuscation}

\appendix

%%%------------------------------------
\section{Equivalence of $\kappa$-IND-VER and $\kappa$-SEM-VER.}\label{app:IND-SEM}
%%%

In Lemma~\ref{thm:IND-SEM}, it was shown that if a scheme is $\kappa$-IND-VER, then it is also $\kappa$-SEM-VER. We here provide the full proof of Lemma~\ref{thm:IND-SEM}, which states the other direction ($\kappa$-SEM-VER implies $\kappa$-IND-VER).

\begin{proof}[of Lemma~\ref{thm:IND-SEM}]
	Suppose that a scheme $S$ is $\kappa$-SEM-VER, and let $\advA = (\advA_1, \advA_2, \advA_3)$ be an arbitrary QPT adversary for the IND-VER indistinguishability game for this scheme. By defintion of $\kappa$-SEM-VER, for $\advA_2$ there exists $\simS$ such that for all QPTs $\inM$ and $\disD$, the equation from Definition~\ref{def:SEM-VER} holds with $\advA := \advA_2$. We choose $\inM$ and $\disD$ as in the figure below. More precisely, $\inM$ does: (i.) run $\advA_1$ on its input (ii.) prepare the state $\ketbra{0^n}{0^n}$, plus a random bit $r \in_R \{0,1\}$, and store them in the side information register $R_2$, and (iii.) swap the quantum states in $X$ and $R_2$ conditioned on $r$. We also choose $\disD$ to (i.) run $\advA_3$ on the appropriate input wires, (ii.) either apply $\Phi_c$ or $\remove$ on the quantum state in the register $R_2$, conditioned on the accept/reject flag, (iii.) swap those wires back (again, conditioned on $r$), and finally (iv.) output 1 if $\advA_3$'s output was correct (i.e.\ equal to $r$), and 0 otherwise.

	\begin{center}
	\scalebox{0.75}{
	\makebox[\textwidth][c]{
		\begin{tikzpicture}
			%Message generator:
		\draw (-0.5,1.5) -- (0,1.5);
		\node[anchor=east] at (-0.5,1.5) {$\rho_{evk}$};
		\draw (0,0) rectangle (1,4);
		\node at (0.5,2) {$\advA_1$};
		
		%Output of message generator:
		\draw (1,0.5) -- (3,0.5);
		\node[anchor=south west] at (1,0.5) {$R_1$};
		\draw (1,-0.5) -- (12,-0.5);
		\node[anchor=east] at (1,-0.5) {$\ketbra{0^n}{0^n}$};
		\draw (1,3.5) -- (5,3.5);
		\node[anchor=south west] at (1,3.5) {$X$};
		\draw (1,-1.5) -- (12,-1.5);
		\draw (1,-1.4) -- (12,-1.4);
		\node[anchor=east] at (1,-1.5) {$r \in_R \{0,1\}$};
		
		%\Potential swap:
		\node at (2,-1.45) {$\bullet$};
		\draw (2,-1.4) -- (2,3.5);
		\node at (2,3.5) {$\times$};
		\node at (2,-0.5) {$\times$};
		
		%Simulator:
		\draw (3,0) rectangle (4,3);
		\node at (3.5,1.5) {$\simS_{sk}$};
		
		%Output of simulator:
		\draw (4,2.5) -- (8.5,2.5) -- (8.5,0);
		\draw (4,2.6) -- (5.2,2.6) -- (5.2,3);
		\draw (5.3,3) -- (5.3,2.6) -- (12,2.6);
		\draw (8.6,0) -- (8.6,2.5) -- (12,2.5);
		\node[anchor=south] at (4.4,2.55) {$c$};
		\draw (4,1.6) -- (8.5,1.6);
		\draw (4,1.5) -- (8.5,1.5);
		\draw (8.6,1.6) -- (12,1.6);
		\draw (8.6,1.5) -- (12,1.5);
		\node[anchor=south west] at (4,1.55) {$acc(0)/rej(1)$};
		\draw (4,0.5) -- (8.5,0.5);
		\draw (8.6,0.5) -- (12,0.5);
		\node[anchor=south] at (4.25,0.5) {$R_1'$};
		
		%Application of c:
		\draw (5,3) rectangle (6,4);
		\node at (5.5,3.5) {$\Phi_c$};
		\draw (6,3.5) -- (7,3.5);
		\node[anchor=south west] at (6,3.5) {$X'$};
		
		%Conditional replacement:
		\node at (7.5,1.55) {$\bullet$};
		\draw (7.5,1.6) -- (7.5,3);
		\draw (7,3) rectangle (8,4);
		\node at (7.5,3.5) {\faBan};
		\draw (8,3.5) -- (12,3.5);
		
		%Second application of c:
		\filldraw[fill=white] (8,-1) rectangle (9,0);
		\node at (8.5,-0.5) {$\Phi_c$};
		
		%Second conditional replacement:
		\node at (10,1.55) {$\bullet$};
		\draw (10,1.6) -- (10,0);
		\filldraw[fill=white] (9.5,-1) rectangle (10.5,0);
		\node at (10,-0.5) {\faBan};
		
		%Second swap back:
		\node at (11,-1.45) {$\bullet$};
		\draw (11,-1.4) -- (11,3.5);
		\node at (11,-0.5) {$\times$};
		\node at (11,3.5) {$\times$};
		
		%Distinguisher:
		\draw (12,0) rectangle (13,4);
		\node at (12.5,2) {$\advA_3$};
		\draw (13,2) -- (13.5,2);
		\draw (13,2.1) -- (13.5,2.1);
		\node[anchor=west] at (13.5,2) {$r'$};

		%Dotted line around A_1:
		\draw[dashed] (-1,-2) -- (2.5,-2) -- (2.5,4.5) -- (-0.25,4.5) -- (-0.25,1) -- (-1,1) -- (-1,-2);
		\node[anchor=north] at (-0.5,-2) {$\inM$};
		
		%Dotted line around A_3:
		\draw[dashed] (7.5,-2) -- (14,-2) -- (14,4.5) -- (8.25,4.5) -- (8.25,1) -- (7.5,1) -- (7.5,-2);
		\node[anchor=north] at (8,-2) {$\disD$};
		\draw (14,1.25) -- (14.5,1.25);
		\draw (14,1.35) -- (14.5,1.35);
		\node[anchor=west] at (14.5,1.25) {$EQ(r,r')$};
		\end{tikzpicture}
	}}
	\end{center}
Note that these choices ensure that the real channel is an execution of the IND-VER game. In the ideal scenario, $\advA_3$ receives \emph{exactly} the same state in the cases $r=0$ and $r=1$. Hence, the best he can do is guess, and the probability that $r' = r$ (and thus that $\disD$ outputs 1)is at most $\frac{1}{2}$.
		
		By the assumption that $S$ is $\kappa$-SEM-VER, the probability that $\disD$ outputs 1 in the real scenario can only be negligibly higher than in the ideal case. As discussed above, the real scenario corresponds exactly to the adversary $\advA$ playing the IND-VER game. Therefore, the winning probability for $\advA$ (i.e.\ the probability that $\VerGame{S} = 1$) is at most negligibly (in $\kappa$) higher than $\frac{1}{2}$. 
		\qed
\end{proof}

%%%------------------------------------
\section{Security of verification in $\TC$.}\label{app:TC-proof}
%%%

The trap code is proven secure in its application to one-time programs~\cite{BGS13}. Broadbent and Wainewright proved authentication security (with an explicit, efficient simulator)~\cite{BW16}. One can use similar strategies to~\cite{BW16, BGS13} to prove $\kappa$-IND-VER for $\TC$. 

\begin{theorem}\label{thm:tc-security-1-supp}For any adversary $\advA$,
\[
\Pr[\VerGame{\TC} = 1] \leq \frac{1}{2} + \negl(\kappa) \,,
\]
and thus $\TC$ is a $\kappa$-IND-VER secure (somewhat) homomorphic encryption scheme.
\end{theorem}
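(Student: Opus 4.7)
The plan is to show that whenever $\TC.\VerDec$ accepts, the net effect of the adversary $\advA$ on the plaintext register is, up to negligible error, exactly $\Phi_c$; once this is established, the two branches $b=0$ and $b=1$ of $\VerGame{\TC}$ produce statistically indistinguishable outputs, so no QPT (in fact, no unbounded) adversary can win with advantage better than negligible. The argument follows the blueprint of Broadbent--Wainewright~\cite{BW16} and Broadbent--Gutoski--Stebila~\cite{BGS13}, adapted to the syntactic form of the IND-VER game rather than to authentication.

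First I would apply the Pauli twirl: since $\TC.\Enc$ ends with a fresh, independent QOTP $\X^{x[i]}\Z^{z[i]}$ on each $3m$-qubit block, the average over $(x[i],z[i])$ of any CPTP map $\advA$ sandwiched between $\Enc$ and $\VerDec$ equals a convex combination of fixed Pauli attacks $P_j \otimes \id_{R'}$ (on the permuted, encoded register), each followed by a key-independent isometry on $R'$ that records the ``which $P_j$'' information. Thus it suffices to analyse a fixed Pauli attack per gate position and per block. Second, for each block, the random permutation $\pi$ (shared across all qubits of a plaintext) hides which $m$ positions carry encoded data, which $m$ are $\ket0$-traps, and which $m$ are $\ket+$-traps. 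Any $P_j$ which places an $\X$ on a $\ket0$-trap or a $\Z$ on a $\ket+$-trap is caught by $\TC.\VerDecQubit$ (or by the appropriate branch of $\TC.\VerDecMeasurement$, noting that computational-basis measurement caps $\ket+$-trap checks only and Hadamard-basis measurement symmetrically) with probability $1-\negl(\kappa)$; the standard combinatorial bound, summing over all $P_j$ whose restriction to the $2m$ trap slots is nontrivial, gives a union bound of order $2^{-\Omega(\kappa)}$ since the adversary must blindly avoid a random size-$2m$ subset of a $3m$-element set.

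The Pauli attacks that survive the trap check act nontrivially only on the $m$ data slots, and since $\CSS$ has distance $2\kappa+1$, each such surviving attack decomposes as (correctable error) times (logical Pauli). Correctable errors are absorbed by $\CSS.\Decode$ inside $\TC.\VerDecQubit$, while logical Paulis are indistinguishable from the QOTP and are absorbed into the key-update rules recorded by $\TC.\VerDec$ (which it performs gate-by-gate according to the claimed circuit $c$, exactly as in Section~\ref{sec:tc-eval}). Because $c$ contains only Clifford gates and computational/Hadamard measurements, logical Paulis propagate through $c$ to logical Paulis (for unitary gates) or to classical bit-flips of outcomes (for measurements), both of which are accounted for by the key-update bookkeeping. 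Combining these two facts, one concludes that, conditioned on accept, the joint state of (output plaintext, adversary's residual register $R'$) in $\VerGame{\TC}$ is within negligible trace distance of $(\Phi_c \otimes \id_{R'})$ applied to the original plaintext, tensored with a state on $R'$ that depends only on which Pauli $P_j$ was applied and not on the plaintext.

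The final step is to lift this statement to an IND-VER bound. In the $b=0$ branch the output on the output wire is (accept flag, $\Phi_c(\rho)$) up to negligible error; in the $b=1$ branch the challenger by construction applies $\Phi_c$ to the stashed plaintext whenever the dummy-based $\VerDec$ accepts, and replaces the output with $\Omega$ otherwise. Using that (i) the accept probability is, by the argument above, independent of the plaintext up to $\negl(\kappa)$, and (ii) the residual state on $R'$ handed to $\advA_3$ also depends only on the Pauli record and not on the plaintext, the distributions fed to $\advA_3$ in the two branches differ by at most $\negl(\kappa)$. Hence $\Pr[\VerGame{\TC}=1] \leq \tfrac12 + \negl(\kappa)$. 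The main obstacle is the careful bookkeeping in the Pauli-twirl step: one must verify that the Pauli record surviving on $R'$ after twirling is truly plaintext-independent even though $\advA$ can entangle its output register with the data during the attack; this is exactly the point addressed by the Broadbent--Wainewright simulator, which can be transplanted essentially verbatim because the extra classical log output of $\TC.\Eval$ (measurement transcripts and QOTP-key updates) is itself a deterministic function of the Pauli record.
\qed
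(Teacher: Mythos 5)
Your proposal follows the same blueprint as the paper's proof (Pauli twirl over the fresh QOTP, random permutation of traps, CSS distance, then lifting to the game), but the central accounting step is wrong as written, in two places. First, it is not true that ``any $P_j$ which places an $\X$ on a $\ket0$-trap \dots is caught with probability $1-\negl(\kappa)$'': a single $\X$ error lands on one of the $m$ computational traps with probability only $1/3$ over the choice of $\pi$, so low-weight errors survive with constant probability. Second, and more seriously, a logical Pauli that survives the trap check is \emph{not} ``indistinguishable from the QOTP and absorbed into the key-update rules.'' The key-update rules of $\TC.\VerDec$ account only for the honest gates of $c$; the QOTP keys are fixed and known to the decryptor, so an adversarial logical $\X$ on top of them changes the decoded plaintext and constitutes exactly the forgery that IND-VER must exclude (it would let $\advA_3$ distinguish the $b=0$ branch, where the output is $\X_L\Phi_c(\rho)\X_L$, from the $b=1$ branch, where the challenger applies the clean $\Phi_c$). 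If your decomposition ``surviving attack $=$ correctable error $\times$ logical Pauli'' with the logical part harmless were the whole story, the scheme would be insecure.

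The correct argument couples the error weight to both detection and correction, as in the paper's proof: by the distance of $\CSS$, any Pauli that changes a logical qubit (or a logical measurement outcome) must be non-identity on more than $d_c$ of the $3m$ physical positions, hence has at least $d_c/2$ components of $\X$-type or of $\Z$-type; each such component independently misses the corresponding $m$ traps with conditional probability at most $2/3$ over the random $\pi$, giving a per-Pauli bound of $(2/3)^{d_c/2}=2^{-\Omega(\kappa)}$ on the accept probability (one then averages over the convex weights $|\alpha_P|^2$ from the twirl, not a union bound over Paulis). Conversely, every Pauli that is accepted with non-negligible probability has low weight and is therefore either removed by $\CSS.\Decode$ or harmless (e.g., $\Z$ on a measured-in-computational-basis block). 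You also gloss over a technical prerequisite for the twirl: the honest evaluation $D$ sits between the encryption pad $\X^x\Z^z$ and the decryption pad $\X^{f_c(x)}\Z^{f_c(z)}$, so before twirling one must commute $D$ past the pad using the key-update identity $\X^{f_c(x)}\Z^{f_c(z)}\pi^{\dagger\otimes n}D = D'\X^x\Z^z\pi^{\dagger\otimes n}$; without this the two pads flanking $U$ are not equal and Lemma~\ref{lem:pauli-twirl} does not apply.
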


The proof (again following, e.g., \cite{BW16}) will use the following lemma, the \emph{Pauli Twirl}~\cite{DCEL09}.
\begin{lemma}[Pauli Twirl]\label{lem:pauli-twirl}
	Let $\rho$ be an arbitrary $n$-qubit state. Then for any Pauli operators $P$, $P'$ it holds that
	\[
	\frac{1}{4^n} \sum_{a,b \in \{0,1\}^n} (\X^a \Z^b)^\dagger P \X^a \Z^b \rho 
	\X^a \Z^b P'^\dagger (\X^a \Z^b)^\dagger = \begin{cases}
			P \rho P^\dagger & \text{if}\ P = P'  \\
			0 & \text{otherwise} 
	\end{cases}
	\]
\end{lemma}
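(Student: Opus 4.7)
The plan is to reduce the identity to a single fact about characters on $\mathbb{Z}_2^{2n}$. The key observation is that any two Paulis either commute or anticommute: for each Pauli $Q$ on $n$ qubits, define $\chi_Q : \{0,1\}^{2n} \to \{\pm 1\}$ by $(\X^a \Z^b) Q = \chi_Q(a,b)\, Q (\X^a \Z^b)$. A short calculation using $\X \Z = -\Z \X$ shows that $\chi_Q$ depends only on $Q$ up to a global phase, so in particular $\chi_{P'} = \chi_{P'^\dagger}$, and that $\chi_{QR}(a,b) = \chi_Q(a,b)\chi_R(a,b)$ for any Paulis $Q, R$. Moreover $\chi_Q$ is a group homomorphism $\mathbb{Z}_2^{2n} \to \{\pm 1\}$.

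Next, I would rewrite each summand using these characters. Since $(\X^a \Z^b)^\dagger P (\X^a \Z^b) = \chi_P(a,b)\, P$ and $(\X^a \Z^b) P'^\dagger (\X^a \Z^b)^\dagger = \chi_{P'^\dagger}(a,b)\, P'^\dagger = \chi_{P'}(a,b)\, P'^\dagger$, the whole summand becomes
\[
\chi_P(a,b)\,\chi_{P'}(a,b)\, P \rho P'^\dagger \;=\; \chi_{P P'^\dagger}(a,b) \cdot P\rho P'^\dagger,
\]
using multiplicativity. Thus the left-hand side of the lemma equals $\Bigl(\tfrac{1}{4^n}\sum_{a,b}\chi_{P P'^\dagger}(a,b)\Bigr)\cdot P\rho P'^\dagger$.

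The final step is to evaluate the character sum. Since $\chi_{P P'^\dagger}$ is a homomorphism to $\{\pm 1\}$, the average $\tfrac{1}{4^n}\sum_{a,b}\chi_{PP'^\dagger}(a,b)$ equals $1$ if the character is trivial and $0$ otherwise, by standard orthogonality of characters on $\mathbb{Z}_2^{2n}$ (or just by splitting into the $+1$ and $-1$ cosets of the kernel). The character $\chi_Q$ is trivial iff $Q$ commutes with every element of the Pauli group, iff $Q$ is a scalar multiple of the identity. When $P$ and $P'$ are drawn from the standard Pauli set $\{\X^c \Z^d\}$, this happens precisely when $P = P'$, in which case $PP'^\dagger = \id$ and the sum equals $P\rho P^\dagger$ as claimed; otherwise it equals zero.

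The argument is essentially routine once the character framework is in place, so the only mild subtlety I foresee is the careful bookkeeping of daggers and phases: one has to verify that $\chi_{P'^\dagger} = \chi_{P'}$ and that the $(\X^a \Z^b)^\dagger$ factors on the outside indeed yield $\chi_P$ rather than $\chi_{P^\dagger}$ or some conjugate character. These all reduce to the single identity $\X^a \Z^b = (-1)^{a\cdot b}(\Z^b \X^a)$, so no deeper input is needed.
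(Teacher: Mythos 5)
Your proof is correct. One thing to note up front: the paper does not actually prove this lemma --- it imports it verbatim from the cited reference~\cite{DCEL09} and uses it as a black box in the proof of Theorem~\ref{thm:tc-security-1-supp} --- so there is no internal proof to compare against. Your argument is the standard one for the Pauli twirl: conjugation by $\X^a\Z^b$ multiplies each Pauli by a sign $\chi_P(a,b)$, these signs are characters of $\mathbb{Z}_2^{2n}$, and averaging a nontrivial character over the group gives zero. All the bookkeeping you flag checks out: $(\X^a\Z^b)^\dagger P(\X^a\Z^b)=\chi_P(a,b)P$ and $(\X^a\Z^b)P'^\dagger(\X^a\Z^b)^\dagger=\chi_{P'}(a,b)P'^\dagger$ follow from your single identity, multiplicativity gives $\chi_P\chi_{P'}=\chi_{PP'^\dagger}$, and the character sum collapses to the indicator that $PP'^\dagger$ is central.

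You also caught a genuine imprecision in the statement: as written, ``for any Pauli operators $P$, $P'$'' is false if scalar multiples are allowed (take $P'=-P$; then $P\neq P'$ but the twirl yields $-P\rho P^\dagger\neq 0$). Your restriction to representatives from the set $\{\X^c\Z^d\}$ is exactly the right fix, and it is consistent with how the lemma is invoked in the paper, where the attack unitary $U$ is decomposed over the standard Pauli basis $\mathcal{P}^{3mn}$, so that $P=P'$ there means equality of the $(c,d)$ labels.
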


\begin{proof}[Theorem \ref{thm:tc-security-1-supp}]
Let $\advA = (\advA_1, \advA_2, \advA_3)$ be an adversary for $\TC$, for the $\VerGame{S}$ security game. Let $sk = (\pi, x, z)$ be the uniformly random keys, with
$\pi \in S_{3m}$, $x,z \in \{0,1\}^{3m n}$. Let $\CSS$ be a $[[m,1,d]]$ CSS code,
that can correct up to $d_c$ errors. We can let $d_c$ be the security parameter $\kappa$, and then 
$d = 2d_c +1$ and $m$ depends on the exact properties of $\CSS$.

First note that for $\TC$, the circuit $c$ which is output by $\advA_2$ cannot in any way depend on the bit $r$: All qubits output by $\TC.\Enc$ are encoded with the quantum one-time pad, and therefore will look completely mixed whether or not the real or dummy input is given to $\advA_2$.
Also, $c$ is measured when it is supplied to $\VerDec$ as classical information. Therefore, we can in general view $\advA$ as a probabilistic mixture of adversaries for different choices of $c$. From now on, we assume that $\advA$ uses an arbitrary fixed $c$ without loss of generality (since it can always use the circuit $c$ that wins the game with highest probability).

Next, observe that the accept probability of $\VerDec$ within the $\VerGame{S}$ game is independent of the random choice $r$. The decryption procedure only looks at the trap qubits when choosing whether to accept or reject, and so we can imagine delaying undoing the quantum one-time pad on the data qubits until after the accept or reject choice -- which cannot depend on $r$ at all since the encrypted data always looks completely mixed.

In the reject case, $\VerDec$ outputs a fixed quantum state, and the quantum one-time pad that is applied to the input of $\TC.\Enc$ will never be revealed. So in that case $\advA_3$ will never be able to do better than a random guess.
To prove security, it then suffices to argue that the state $\advA_2$ outputs in the $r=0$ case is close to the state $\advA_2$ outputs in the $r=1$ case, conditioned on $\VerDec$ accepting.

Now let $D$ be the quantum operations that are performed by the honest evaluation circuit, i.e., the list of $\CNOT$ gates applied transversally to the encrypted qubits. Let $B \in \{I, \text{comp}, H\}^n$ describe for each logical qubit whether it is unmeasured, measured in the computational basis, or measured in the Hadamard basis respectively.
To simplify notation, we assume without loss of generality that $\mathcal H_R = \mathcal H_{R'}$ and that
$\advA_2$ can be written as a unitary operation that consists of first applying an arbitrary $U$ acting on $\mathcal H_R \otimes \mathcal H_C$, and then applying the honest actions $D$.

Define $M_{I} = I^{\otimes m} \otimes \ketbra{0}{0}^{\otimes m} \otimes \ketbra{+}{+}^{\otimes m}$ as the projector corresponding to accepting the traps of an unmeasured qubit (after undoing the permutation and quantum one-time pad). Similarly define $M_{\text{comp}} = I^{\otimes m} \otimes \ketbra{0}{0}^{\otimes m} \otimes I^{\otimes m}$ and $M_{H} = I^{\otimes m} \otimes I^{\otimes m} \otimes \ketbra{+}{+}^{\otimes m}$ as the projectors corresponding to accepting the traps of a measured qubit.

The $\VerDecMeasurement$ function does not undo the $\Z$ part of the quantum one-time pad in case
of computational-basis measurement (or the $\X$ corrections for the Hadamard basis), making it not possible to immediately apply the Pauli twirl (Lemma~\ref{lem:pauli-twirl}) as with the unmeasured qubits.
Consider a scheme where the decryption procedure for the measurement
would first undo the entire quantum one-time pad, and only measure the qubits afterward:
this scheme would be functionally completely equivalent to the actual $\TC$ scheme.
(In~\cite{BGS13} this property is called the equivalence between decode-then-measure and measure-then-decode.)
Also, note that we can write the Pauli operators and the conditional Paulis as occuring after all
other gates; because of the commutation rules between the Clifford group and the Pauli group, they
will just correspond to different Pauli operations applied later.
Since the Pauli key updates only occur on the data qubits and
do not change the acceptance probability at all,
we can rewrite the decryption procedure as first checking the traps using the keys without the Paulis, and only then apply the (conditional) Paulis after the checks.

Let $\sigma \in \mathcal{H}_{RX}$ be the output state of $\advA_1$. Write $\Lambda(\cdot) = \CSS.\Encode(\cdot) \otimes \ketbra{0}{0}^{\otimes m} \otimes \ketbra{+}{+}^{\otimes m}$ as the channel representing the part of the encryption which encodes the qubits and appends the traps and define $\rho= \Bigl(\mathrm{id}_R \otimes \bigotimes^n_{i = 1} \Lambda \Bigr) (\sigma)$. Now the state $(\id_R \otimes \X^x \Z^z \pi^{\otimes n}) \rho (\id_R \otimes \X^x \Z^z \pi^{\otimes n})^{\dagger}$ is the encrypted input to $\advA_2$. The plaintext that $\VerDec$ holds right before
running $\CSS.\Decode$, projected to the accepting case, equals
\begin{equation*}
\bigotimes_{i\in [n]} M_{B_i} \Ex_{\pi \in S_{3m}} \Bigl[ \Ex_{x,z \in \{0,1\}^{3mn}} [
\pi^{\dagger \otimes n} \X^{f_c(x)} \Z^{f_c(z)} D U  \X^{x} \Z^{z} \pi^{\otimes n}  \rho
\pi^{\dagger \otimes n} \X^{x} Z^{z}  U^{\dagger} D^{\dagger} \X^{f_c(x)} \Z^{f_c(z)} \pi^{\otimes n} ] \Bigr]
\end{equation*}
where $U$ acts on the reference system $R$ and the ciphertext register, and all
other operations only act on the ciphertext register.\footnote{The expectation value is always taken over the uniform distribution, e.g., $\Ex_{\pi \in S_{3m}}$ is nothing more than a short way of writing $\frac{1}{(3m)!} \sum_{\pi \in S_{3m}}$.}
The transformation $f_c$ represents the updating of the quantum one-time pad keys as function of the applied circuit -- these are the keys that are used by the decryption circuit.

Now define $D'$ as the unitary operation which applies $\CNOT$ gates transversally on the $3m + 3m$ qubits when listed in $c$. By construction of the key-update rules, we have that $\X^{f_c(x)} \Z^{f_c(z)} \pi^{\dagger \otimes n } D = D' \X^{x} \Z^{z} \pi^{\dagger \otimes n }$. 
Using that identity and the Pauli twirl (Lemma~\ref{lem:pauli-twirl}) we decompose $U$ into a probabilistic mixture of Pauli operations:
\begin{align*}
\bigotimes_{i\in [n]} M_{B_i} D' \Ex_{\pi \in S_{3m}} \Bigl[ \Ex_{x,z \in \{0,1\}^{3mn}} [
\pi^{\dagger \otimes n}  \X^{x} \Z^{z} U \X^{x} \Z^{z} \pi^{\otimes n}  \rho  
\pi^{\dagger \otimes n} \X^{x} \Z^{z} U^{\dagger} \X^{x} \Z^{z} \pi^{\otimes n} & ] \Bigr] D'^{\dagger} =\\
\bigotimes_{i\in [n]} M_{B_i} D' \Ex_{\pi \in S_{3m}} \Bigl[ \Ex_{P \in {\mathcal{P}}^{3mn}} [ |\alpha_P|^2 \pi^{\dagger \otimes n}  (P\otimes U_P) \pi^{\otimes n} \rho 
\pi^{\dagger \otimes n} (P\otimes U^{\dagger}_P)  \pi^{\otimes n} ] \Bigr] D'^{\dagger}  &
\end{align*}

Expressions of this form were carefully analyzed in the earlier trap-code security proofs --
we will for completeness finish our security sketch, but see, e.g., \cite{BW16} for a more precise analysis.

First observe that if these Pauli operators do not change any logical qubit, this expression will be exactly the same as the state that $\advA_3$ receives in the $r=1$ case -- namely the claimed circuit $c$, as represented by $D'$, the measurements, and the (conditional) Paulis that will be effectively performed by decryption, applied to the data.
Consider what form the Pauli operator $P$ would need to change a specific logical qubit $i$.
First consider an unmeasured qubit $i$. 
Because $\CSS$ can correct up to $d_c$ errors, only those Paulis that are non-identity on more
than $d_c$ qubits will cause the logical qubit after decoding to change.
Say without loss of generality that this Pauli
operator $P_i$ has an $\X$ on at least $d_c/2$ out of the $3m$ physical qubits that encode $i$.
(If the operator consists of more $\Z$ components than $\X$ components, we could argue using $\Z$ instead.)
Consider the probability for a randomly chosen $\pi \in S_{3m}$ that all these $\X$ do not end up in the positions $m+1$ to $2m$, i.e., each misses the computational basis traps.
For each $\X$, the probability of missing all trap positions, conditioned on no trap being hit yet, is always at most $2/3$. Therefore the probability that all traps are missed is at most $(2/3)^{d_c/2}$.
A more careful combinatorial analysis which includes the $\Z$ flips improves this to $(2/3)^{d_c}$~\cite{BGS13,BW16}, but this simple bound suffices for us.

Now, consider the case that $i$ is a qubit on which a computational-basis measurement has been performed,
of which only the corresponding traps are checked. 
For these qubits, the Pauli $\Z$ parts of the attack are not detected, but they also do not change the output: Since the data qubits are measured, only the $\X$ Paulis will change anything in the data.
Therefore, the operator $P_i$ will have to contain at least $d_c$ $\X$ Paulis on the $3m$ physical qubits.
Now repeating the same argument as for the unmeasured qubits, we see that the probability over a random permutation $\pi$ that all traps are missed is at most $(2/3)^{d_c}$.
The analogous argument works for the Hadamard-basis measurements.

To conclude, the part of the output of $\advA_2$ that has been changed from that what would come out of the honest evaluator, and still is accepted, has norm at most $(2/3)^{d_c}$, both in case $r=0$ and $r=1$.
This norm gives an upper bound to the trace distance between the states that $\advA_3$ receives in the
$r=0$ case and the $r=1$ case, since for all lower-weight Pauli attacks these states are exactly the same (by the error-correction property of $\CSS$). The final guessing probability is then bounded as
\[
\Pr[\VerGame{\TC} = 1] \leq \frac{1}{2} + \frac{1}{2} \left(\frac{2}{3} \right)^{d_c} \,.
\]
Since we picked the parameters of $\CSS$ such that $d_c$ scaled with $\kappa$, this completes the proof.
\qed
\end{proof}

%%%------------------------------------
\section{Security of $\TC$ with multiple encryptions}\label{app:TC-defs}
%%%

In Section~\ref{sec:ttp-verifiability}, we will use the IND-VER property of $\TC$ to prove verifiability for our new scheme. In order to achieve this, we will actually need a slightly stronger notion of verifiability for $\TC$: IND-VER-$n$, where the adversary is allowed to submit plaintexts in multiple rounds, which are either all encrypted or all swapped out. In this subsection, we show that $\TC$ also fulfills this stronger notion. For our purposes in Section~\ref{sec:ttp-verifiability}, it suffices to show that $\TC$ is secure against an adversary that is allowed two rounds (IND-VER-2), but the definitions and proof trivially extend to the general case.

\begin{definition}[VER-2 indistinguishability game $\VerTwoGame{S}$]\label{def:VER-2-game}
	For an adversary $\advA = (\advA_0, \advA_1, \advA_2, \advA_3)$, a scheme $S$, and a security parameter $\kappa$, $\VerTwoGame{S}$ is the following game:
	
	\begin{center}
		\begin{tikzpicture}[scale=0.7, every node/.style={scale=0.7}]
		%KeyGen
		\draw (-4,0) rectangle (-3,3);
		\node[rotate=90] at (-3.5,1.5) {$S.\KeyGen(1^{\kappa})$};
		\draw (-3,2) -- (-2,2);
		\draw (-3,0.1) -- (-2.8,0.1);
		\draw (-3,0.2) -- (-2.8,0.2);
		\node[anchor=south] at (-2.5,2) {$\rho_{evk}$};
		\node[anchor=west] at (-2.8,0.2) {$sk$};
		
		%Challenger states
		\node[anchor=east] at (-3,4.5) {$|0^{n_2}\rangle\langle0^{n_2}|$};
		\node[anchor=east] at (-3,5.5) {$|0^{n_1}\rangle\langle0^{n_1}|$};
		\node[anchor=east] at (-3,6.5) {$r \in_R \{0,1\}$};
		\draw (-3,6.5) -- (11.5,6.5);
		\draw (-3,6.6) -- (11.5,6.6);
		\draw (-3,5.5) -- (7.5,5.5);
		\draw (-3,4.5) -- (7.5,4.5);
		
		%A0
		\node at (-1.5,1.875) {$\advA_0$};
		\draw (-2,0) rectangle (-1,3.75);
		\draw (-1,0.5) -- (2,0.5);
		\draw (-1,3.5) -- (2,3.5);
		\node[anchor=south west] at (-1,0.5) {$R$};
		\node[anchor=south west] at (-1,3.5) {$X_1$};
		
		%A1
		\node at (2.5,1.875) {$\advA_1$};
		\draw (2,0) rectangle (3,3.75);
		\draw (3,0.5) -- (6,0.5);
		\draw (3,3.5) -- (6,3.5);
		\node[anchor=south west] at (3,0.5) {$R'$};
		\node[anchor=south west] at (3,3.5) {$X_2$};
		
		%SWAP,Enc (1)
		\filldraw[fill=white] (0.3,3.25) rectangle (1.7,3.75);
		\node at (1,3.5) {$S.\Enc_{sk}$};
		\node at (0,6.53) {$\bullet$};
		\draw (0,6.5) -- (0,3.5);
		\node at (0,5.5) {$\times$};
		\node at (0,3.5) {$\times$};

		%SWAP,Enc (2)
		\filldraw[fill=white] (4.3,3.25) rectangle (5.7,3.75);
		\node at (5,3.5) {$S.\Enc_{sk}$};
		\node at (4,6.53) {$\bullet$};
		\draw (4,6.5) -- (4,3.5);
		\node at (4,4.5) {$\times$};
		\node at (4,3.5) {$\times$};
		
		%A2
		\filldraw[fill=white] (6,0) rectangle (7,3.75);
		\node at (6.5,1.875) {$\advA_2$};
		\draw (7,0.5) -- (11,0.5);
		\draw (7,1.5) -- (8,1.5);
		\draw (7,1.6) -- (8,1.6);
		\draw (7,2.5) -- (8,2.5);
		\draw (7,2.6) -- (7.7,2.6) -- (7.7,4);
		\draw (7.8,4) -- (7.8,2.6) -- (8,2.6);
		\draw (7,3.5) -- (7.7,3.5);
		\draw (7.8,3.5) -- (8,3.5);
		\node[anchor=south west] at (6.9,3.5) {$C_{X'}$};
		\node[anchor=south west] at (7,2.55) {$c$};
		\node[anchor=south west] at (7,1.55) {$log$};
		\node[anchor=south west] at (7,0.5) {$R''$};
		
		%Phi_c
		\filldraw[fill=white] (7.5,4) rectangle (8.5,6);
		\node at (8,5) {$\Phi_c$};
		\draw (8.5,5) -- (11.5,5);

		%VerDec
		\draw (8,1) rectangle (9,3.75);
		\node[rotate=90] at (8.5,2.375) {$S.\VerDec_{sk}$};
		\draw (9,1.5) -- (11,1.5);
		\draw (9,1.6) -- (11,1.6);
		\draw (9,2.5) -- (11,2.5);
		\draw (9,2.6) -- (11,2.6);
		\draw (9,3.5) -- (11,3.5);
		\node[anchor=north] at (10,1.6) {$acc(0)/rej(1)$};
		\node[anchor=south west] at (9,2.55) {$c$};
		\node[anchor=south west] at (9,3.5) {$X'$};
		
		%Delete, swap
		\node at (9.75,1.55) {$\bullet$};
		\draw (9.75,1.6) -- (9.75,4.5);
		\filldraw[fill=white] (9.25,4.5) rectangle (10.25,5.5);
		\node at (9.75,5) {\faBan};
		\node at (10.5,6.53) {$\bullet$};
		\draw (10.5,6.5) -- (10.5,3.5);
		\node at (10.5,5) {$\times$};
		\node at (10.5,3.5) {$\times$};
		
		%A3
		\draw (11,0) rectangle (12,3.75);
		\node at (11.5,1.875) {$\advA_3$};
		\draw (12,1.875) -- (12.5,1.875);
		\draw (12,1.975) -- (12.5,1.975);
		\node[anchor=west] at (12.5,1.975) {$r'$};
		\end{tikzpicture}
	\end{center}
	Here, $n_1$ and $n_2$ are the respective dimensions of the $X_1$ and $X_2$ registers.
\end{definition}

\begin{definition}[$\kappa$-IND-VER-2]
	A vQFHE scheme $S = (\KeyGen, \Enc, \Eval, \VerDec)$ has \emph{2-round}\\ \emph{$\kappa$-indistinguishable verification} if for any QPT adversary $\advA = (\advA_0,\advA_1, \advA_2, \advA_3)$,
	\[
	\Pr[\VerTwoGame{S} = 1] \leq \frac{1}{2} + \negl(\kappa).
	\]
	Here, the probability is taken over $\KeyGen(1^{\kappa}), \Enc, \VerDec$, and $\advA$.
\end{definition}

\begin{lemma}\label{thm:tc-security-2-supp}
	$\TC$ is $\kappa$-IND-VER-2.
\end{lemma}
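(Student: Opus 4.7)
The plan is to extend the one-round proof (Theorem~\ref{thm:tc-security-1-supp}) to the two-round setting. The key insight is that the quantum one-time pad in $\TC.\Enc$ provides perfect information-theoretic hiding, so neither encryption round leaks any information about the underlying plaintext to the adversary before $\VerDec$ runs. This means that all the structural arguments from the one-round proof carry over once we notice that the combined ciphertext registers can be treated as a single larger ciphertext with a shared permutation $\pi$ but independent QOTP keys.

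First I would reduce to a worst-case circuit $c$ by the same averaging argument as in the one-round proof: the classical circuit description produced by $\advA_2$ cannot depend on the hidden bit $r$, since every qubit visible to $\advA_0,\advA_1,\advA_2$ is encrypted with a fresh, independent QOTP and therefore appears maximally mixed without the key. In particular, the joint state produced by $\advA_1$ after receiving the first encryption is identical in distribution whether the first register contained the real plaintext or $\ket{0^{n_1}}$, so the inputs handed to $\advA_2$ form the same distribution over the ciphertext and side-info registers in both worlds (conditioned on the underlying unmeasured plaintexts, which are then either swapped in or replaced by dummies).

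Next I would treat the two encryption rounds as one combined encryption of $n := n_1 + n_2$ logical qubits, in which each logical qubit carries its own fresh QOTP keys $x[i],z[i]$ but all qubits share the global random permutation $\pi$. This is exactly the structure of a single-round encryption of an $n$-qubit register in $\TC$, so the honest evaluation circuit can be written as a unitary $D$ (transversal $\CNOT$s plus classically-controlled Paulis, with measurement bases recorded in $B\in\{I,\mathrm{comp},H\}^n$) and the adversary's overall action can again be modelled, without loss of generality, as an arbitrary unitary $U$ on $\mathcal H_R\otimes \mathcal H_C$ followed by $D$. Applying the Pauli twirl (Lemma~\ref{lem:pauli-twirl}) over the independently-random QOTP keys $x,z\in\{0,1\}^{3mn}$ decomposes $U$ into a probabilistic mixture of Pauli attacks $P\in\mathcal P^{3mn}$ on the physical ciphertext qubits, exactly as in the one-round proof.

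Finally, the per-qubit combinatorial analysis from Theorem~\ref{thm:tc-security-1-supp} applies verbatim: for each logical qubit $i$, any Pauli $P_i$ on its $3m$ physical qubits that effects a nontrivial logical action must have weight at least $d_c$ in $\X$ or $\Z$ (otherwise $\CSS$ corrects it, or for measured qubits the undetected component has no effect on the outcome), and the probability over the uniformly random $\pi$ that all such nontrivial positions miss the computational and Hadamard traps is at most $(2/3)^{d_c}$. Summing over the $n_1+n_2$ logical qubits by union bound, the trace distance between the states $\advA_3$ receives in the $r=0$ and $r=1$ branches, conditioned on $\VerDec$ accepting, is at most $(n_1+n_2)(2/3)^{d_c}$, and since $d_c=\kappa$ and $n_1,n_2$ are polynomial, this yields
\[
\Pr[\VerTwoGame{\TC}=1]\leq \tfrac{1}{2}+\tfrac{1}{2}(n_1+n_2)\bigl(\tfrac{2}{3}\bigr)^{d_c}\leq \tfrac{1}{2}+\negl(\kappa),
\]
completing the proof. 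The main obstacle I anticipate is checking that reusing the same permutation $\pi$ across the two rounds does not introduce exploitable correlations; this is resolved by the observation that $\pi$ remains uniform and completely hidden from the adversary throughout both encryption rounds (it only influences $\VerDec$'s trap check), so the per-qubit trap-evasion bound holds simultaneously for every qubit over the joint distribution of $\pi$.
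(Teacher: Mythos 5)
Your high-level plan---fold the two encryptions into one big trap-code encryption and rerun the one-round twirl-plus-trap analysis---runs into a genuine obstruction that the paper's proof is specifically designed to avoid. In $\VerTwoGame{\TC}$ the adversary $\advA_1$ receives the \emph{ciphertext} of $X_1$ and only then chooses the second plaintext $X_2$ (it may even route physical qubits of, or systems entangled with, the first ciphertext into $X_2$ or $R'$). Consequently the ``combined plaintext'' on $n_1+n_2$ logical qubits is correlated with the keys $(x_1,z_1,\pi)$, and the overall interaction has the form $\advA_2\circ\Enc_2\circ\advA_1\circ\Enc_1$ rather than $U\circ\Enc$ for a key-independent $U$ acting after a single encryption. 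This breaks the step where you model the attack as ``an arbitrary unitary $U$ on $\mathcal{H}_R\otimes\mathcal{H}_C$ followed by $D$'' and apply the Pauli twirl over all key bits at once: Lemma~\ref{lem:pauli-twirl} requires both the twirled state and the sandwiched operation to be independent of the keys being averaged. Relatedly, the effective attack on the first ciphertext register includes $\Enc_2$, which itself uses $\pi$, so the Pauli decomposition coefficients $\abs{\alpha_P}^2$ on that register are $\pi$-dependent, and the bound ``each fixed $P$ misses the traps with probability at most $(2/3)^{d_c}$ over uniform $\pi$'' no longer applies as stated. Your closing remark that $\pi$ ``remains uniform and completely hidden'' addresses secrecy of $\pi$ but not this structural dependence.

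The paper closes exactly this gap with a deferral argument instead of reopening the one-round proof: the challenger hands $\advA_1$ halves of EPR pairs in place of the first ciphertext, teleports the encoded and permuted $X_1$ into them by Bell measurements, and absorbs the teleportation corrections into the uniformly random one-time-pad keys (which leaves their distribution unchanged). Since these Bell measurements and the key update act on wires the adversary never touches, they can be postponed until after $\advA_1$ has committed to $X_2$; at that point $(\advA_0,\advA_1)$ collapses to a single non-adaptive plaintext generator, and $\VerTwoGame{\TC}$ reduces outright to $\VerGame{\TC}$, whose security is Theorem~\ref{thm:tc-security-1-supp}. If you want to keep your direct approach you would first need to prove such a non-adaptivity reduction (or a careful conditional twirling argument handling the key-dependence of $X_2$); as written, the proposal asserts the reduction to the one-round structure without justifying it.
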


\begin{proof}
	Let $\advA = (\advA_0, \advA_1, \advA_2, \advA_3)$ be an arbitrary polynomial-time adversary for the VER-2 indistinguishability game for $\TC$. For notational convenience, write the secret key as $sk = (\pi, x_1, z_1, x_2, z_2)$, where $x_1$ and $z_1$ are lists of $3mn_1$ bits, sufficient for encrypting $X_1$, and analogously $x_2$ and $z_2$ are lists of $3mn_2$ bits.
	
	We now slightly alter the VER-2 game in the following way. In the first encryption step of the game, instead of providing $\advA_1$ with $\TC.\Enc_{(\pi, x_1, z_1)}$ applied to the register $X_1$, we provide $\advA_1$ with the halves of $n_1$ EPR pairs, and perform Bell measurements between the other halves and the qubits in $X_1$, after they have been CSS-encoded and permuted with traps. Let the outcomes of these measurements be given by $a,b \in \{0,1\}^{3mn_1}$: $a$ and $b$ describe the effective $\X$ and $\Z$ Paulis that are applied to $X_1$ by these teleportation measurements. To undo these Paulis, we update $sk$ to $(\pi, x_1 \oplus a, z_1 \oplus b, x_2, z_2)$ at this point. Here, $\oplus$ is bitwise addition modulo 2. Since the quantum one-time pad keys $x_1$ and $z_1$ are chosen uniformly at random, and are completely hidden from the perspective of the adversary, the new keys $x_1 \oplus a$ and $z_1 \oplus b$ are valid keys that are sampled from the same distribution.
	Hence, the winning probability of $\advA$ is not affected by this change of the game.
	
	A second small change to the game is the following: instead of performing the Bell measurements and the secret-key update immediately, it is done only after $\advA_1$ has provided its query in $X_2$. Since these actions happen only on wires which are not accessible to $\advA_1$ and otherwise also not touched in this stage of the game, this change also does not affect the execution or outcome of the game in any way.
	
	We have now arrived at an interesting situation: $\advA_1$ only receives halves of EPR pairs, and so its choice for $X_2$ or $R'$ is not based on the first ciphertext received from the challenger -- that ciphertext will only be generated after execution of $\advA_1$. We can merge $\advA_0$ and $\advA_1$ into a single QPT algorithm that produces $X_1$ and $X_2$ simultaneously. When viewed as such, $\advA$ is an adversary for the single-query VER indistinguishability game, and we can conclude that
	\[
	\Pr[\VerTwoGame{\TC} = 1] = \Pr[\VerGame{\TC} = 1].
	\]
	Since we know that the latter probability is bounded by $\frac{1}{2} + \negl(\kappa)$ from Theorem~\ref{thm:tc-security-2}, so is the first.
	\qed
\end{proof}

\end{document}